\documentclass[conference]{IEEEtran}
\IEEEoverridecommandlockouts

\usepackage{cite}
\usepackage{amsmath,amssymb,amsfonts}
\usepackage{algorithm}
\usepackage[noend]{algpseudocode}
\usepackage{graphicx}
\usepackage{textcomp}
\usepackage{xcolor}
\usepackage{url}
\usepackage{subfigure}
\usepackage{threeparttable}
\usepackage{multirow}
\usepackage{array}
\newtheorem{definition}{Definition}
\newtheorem{theorem}{Theorem}
\newtheorem{proof}{Proof}

\def\BibTeX{{\rm B\kern-.05em{\sc i\kern-.025em b}\kern-.08em
    T\kern-.1667em\lower.7ex\hbox{E}\kern-.125emX}}
\begin{document}

\title{RISK: Efficiently processing rich spatial-keyword queries on encrypted geo-textual data
\thanks{This paper is accepted by 42nd IEEE International Conference on Data Engineering (ICDE 2026). The source code is available at https://github.com/ygpeng-xd/RISK. \\
This work is supported by the National Natural Science Foundation of China (No. 62572378, 62272358, 62372352). We thank Huaxing Tong and Yilin Zhang for supplementary experiments, and the reviewers and Area Chair for their insightful comments.\\
$^*$Yanguo Peng is the corresponding author.}
}

\author{\IEEEauthorblockN{Zhen~Lv$^{1,2}$, Cong~Cao$^{2}$, Hongwei~Huo$^{2}$, Jiangtao~Cui$^{2,3}$, Yanguo~Peng$^{2,*}$, Hui~Li$^{2}$, Yingfan~Liu$^{2}$}
\IEEEauthorblockA{\textit{$^{1}$School of Artificial Intelligence, Xi'an International Studies University, Xi'an, China} \\
\textit{$^{2}$School of Computer Science and Technology, Xidian University, Xi'an, China}\\
\textit{$^{3}$Xi'an University of Posts \& Telecommunications, Xi'an, China}
\\
lvzhen@xisu.edu.cn, xducaoc@stu.xidian.edu.cn, hwhuo@mail.xidian.edu.cn, \{cuijt,ygpeng,hli,liuyingfan\}@xidian.edu.cn}
}

\maketitle

\begin{abstract}

Symmetric searchable encryption (SSE) for geo-textual data has attracted significant attention. However, existing schemes rely on task-specific, incompatible indices for isolated specific secure queries (e.g., range or \emph{k}-nearest neighbor spatial-keyword queries), limiting practicality due to prohibitive multi-index overhead. To address this, we propose RISK, a model for \underline{ri}ch \underline{s}patial-\underline{k}eyword queries on encrypted geo-textual data. In a textual-first-then-spatial manner, RISK is built on a novel \emph{k}-nearest neighbor quadtree (\emph{k}Q-tree) that embeds representative and regional nearest neighbors, with the \emph{k}Q-tree further encrypted using standard cryptographic tools (e.g., keyed hash functions and symmetric encryption). Overall, RISK seamlessly supports both secure range and \emph{k}-nearest neighbor queries, is provably secure under IND-CKA2 model, and extensible to multi-party scenarios and dynamic updates. Experiments on three real-world and one synthetic datasets show that RISK outperforms state-of-the-art methods by at least $0.5$ and $4$ orders of magnitude in response time for $1\%$ range queries and $10$-nearest neighbor queries, respectively. 
\end{abstract}

\begin{IEEEkeywords}
secure spatial-keyword query, secure index, geo-textual data.
\end{IEEEkeywords}

\section{Introduction}
\label{RISK:Intro}
In the cloud computing era, organizations increasingly leverage public clouds to build  high-quality, cost-effective data-driven service capabilities. IBM, for instance, reduced its co-selling lifecycle by 90\% and raised sales opportunities with AWS by 117\% through the Labra Platform~\cite{URLIBM}, while Uber modernized batch data infrastructure on Google Cloud Platform to drive substantial gains in user productivity and engineering velocity~\cite{URLUber}. Nevertheless, 94\% of organizations voice concerns over cloud security risks, a challenge mitigated by symmetric searchable encryption (SSE) solutions tailored to heterogeneous data that support direct queries on encrypted data. Across the diverse SSE landscape, secure spatial-keyword queries~\cite{Chen2020,Chen2021} have pioneered a novel pathway to balance security and availability, a progress driven by the rapid proliferation of social and location-based applications. These queries enable organizations including Twitter, Uber and Lyft to deliver on-the-fly complex data analysis capabilities to consumers while protecting encrypted geo-textual data from unauthorized leakage and malicious exploitation.

Existing solutions for secure spatial-keyword queries in the ciphertext domain focus on either range spatial-keyword (RSK) queries~\cite{Wang2021TIFS,Lv2023} or \emph{k}-nearest neighbor spatial-keyword (\emph{k}SK) queries~\cite{Song2024IoTsJ} with tailored indexing structures. Cui et al.~\cite{Cui2019ICDE} pioneered encrypted Bloom filters in 2019 to support secure boolean RSK queries, which return only identifiers of target results residing in a specific region and matching given keywords. Subsequent works have focused on efficiency improvement~\cite{Wang2020INFOCOM} and security enhancement~\cite{Wang2021TIFS,Yang2022ICDCS}. Lv et al.~\cite{Lv2023} proposed a RSK query solution in 2023 that preserves equivalent security while returning both identifiers and content of target results, which distinguishes it from prior methods. Song et al.~\cite{Song2024IoTsJ} extended this encrypted geo-textual query research one year later with a scheme for boolean \emph{k}SK queries.

Unfortunately, these solutions lack generalization to support multiple types of secure spatial-keyword queries, among which RSK and \emph{k}SK queries are equally critical for location-related applications. Specifically, \emph{k}-nearest neighbors (\emph{k}NNs) of a given query do not  lie within predefined spatial ranges, rendering indexing structures tailored for secure RSK queries incompatible with their \emph{k}SK counterparts. Secure RSK indices inherently encode range-specific semantics inapplicable to \emph{k}SK queries, and vice versa. Consequently, organizations requiring both query services must deploy two fully orthogonal indexing structures, a configuration that doubles storage and maintenance overheads. In contrast, a unified secure index natively supporting both RSK and \emph{k}SK queries would substantially mitigate such overheads.

\begin{figure}[t!]
	\centerline{\includegraphics[width=85mm]{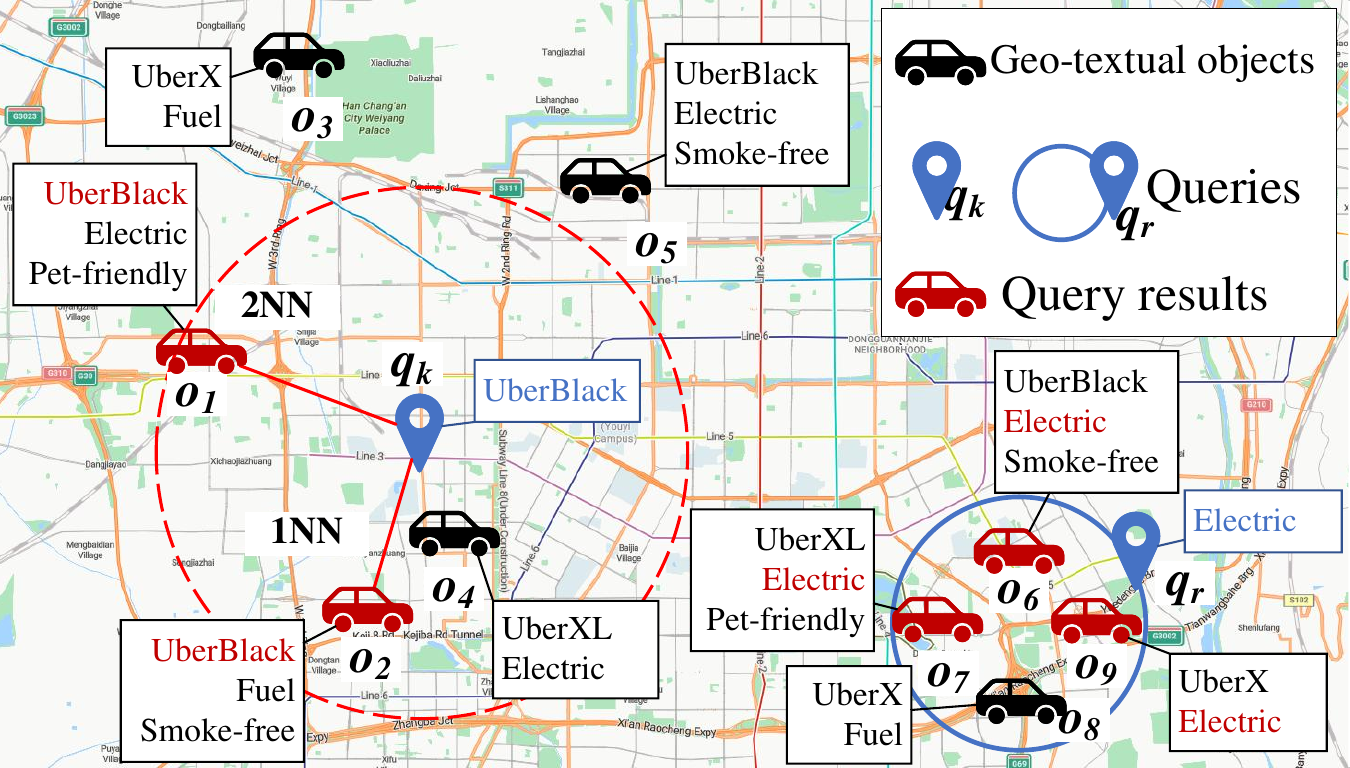}}
	\caption{Application scenarios for range and $2$NN spatial-keyword queries.}
	\label{RISK:fig:Scenario}
\end{figure}


\emph{Application scenarios}. Ride-hailing platforms (e.g., Uber and Lyft) often outsource full vehicle and passenger datasets to public clouds as data owners, minimizing hardware investment and maintenance overheads. Fig.~\ref{RISK:fig:Scenario} illustrates a ride-hailing scenario enabled by spatial-keyword queries. After outsourcing, these platforms act as clients to access the encrypted data stored on the cloud. Car icons and pin markers in the figure denote vehicles and passengers respectively, with both entities tagged with textual attributes specifying vehicle models, features, passenger preferences and related details. The platform continuously refines order allocation through two secure primitives. For the secure RSK query $q_r$, the platform retrieves all vehicles within a given range, analyzes the distribution of vehicles matching a specific model, and  dispatches these vehicles to enable rapid passenger pickups. Here, the platform identifies a high concentration of electric vehicles within the blue continuous circled area and subsequently reallocates several of these vehicles to other regions. For the secure \emph{k}SK query $q_k$, the platform directs the \emph{k} nearest vehicles to serve passengers requesting UberBlack, with either $o_1$ or $o_2$ assigned the order given that both are among the nearest options to the passenger. However, existing solutions for secure spatial-keyword queries cannot simultaneously support these two tasks. Designing a framework for secure rich spatial-keyword queries remains challenging.


To this end, we propose a secure \underline{ri}ch \underline{s}patial \underline{k}eyword query (RISK) model, built on a novel \emph{k}-nearest neighbor quadtree (\emph{k}Q-tree). The main contributions are as follows.

\begin{itemize}
  \item A hybrid index \emph{k}Q-tree embedding neighborhood and range semantics is constructed following a textual-first-then-spatial paradigm, each subset is expanded to incorporate the nearest neighbors (NNs) of selected representative objects, preserving the semantics  required for effective \emph{k}SK queries.
  
  \item To ensure security, all \emph{k}Q-trees corresponding to distinct keywords are converted into a secure \emph{k}Q-tree (S\emph{k}Q-tree) without property-preserving encryptions (PPEs). Structured as a key-value list, the S\emph{k}Q-tree conceals both keys and values via keyed hash functions and symmetric encryption to guarantee confidentiality.

  \item Built on the S\emph{k}Q-tree, the concrete RISK model supports both secure RSK and \emph{k}SK queries, achieving indistinguishability under adaptive chosen-keyword attack (IND-CKA2). RISK is further extended to multi-user scenarios and dynamic data updates for broader applicability.
  
  \item On three real-world and one synthetic datasets, RISK is evaluated to exhibit its superiority. It reduces query response time by $0.5$ to $4$ orders of magnitude and cuts cloud storage overhead by $1$ to $3$ orders of magnitude compared with state-of-the-art (SOTA) solutions. 
\end{itemize}

In the following, section~\ref{RISK:RelatedWorks} surveys SOTA related work. Section~\ref{RISK:Preliminaries} elaborates on necessary preliminaries. Section~\ref{RISK:Index} presents the detailed design of the \emph{k}Q-tree and S\emph{k}Q-tree. The concrete RISK model is described in section~\ref{RISK:Cons}. Section~\ref{RISK:Analyses} provides analyses of security, computational complexity, and theoretical comparisons. Section~\ref{RISK:Extensions} discusses extensions to multi-party scenarios and data updates. Section~\ref{RISK:Experi} conducts experimental evaluations and comparisons of RISK on four datasets. Finally, Section~\ref{RISK:Conclusion} concludes the paper.

\section{Related Works}
\label{RISK:RelatedWorks}
In 2005, Vaid et al.~\cite{Vaid2005STD} and Zhou et al.~\cite{Zhou2005} independently initiated spatial-keyword query models. Since then, driven by rising demand for  geo-textual data access, this problem has become predominant in social and location-based applications. By widely accepted consensus~\cite{Chen2020,Chen2021,Chen2013VLDB}, both RSK and \emph{k}SK queries are fundamental to such applications. 


\subsection{Progress in range spatial-keyword query}
\label{RISK:RelatedWorks:Range}
Building on the pioneering work~\cite{Vaid2005STD,Zhou2005}, RSK query research has yielded numerous approaches leveraging R-tree~\cite{Gobel2009CIKM}, R*-tree~\cite{Zhou2005}, space filling curve~\cite{Chen2006SIGMOD,Christoforaki2011CIKM,Ma2013WAIM}, and grid division~\cite{Vaid2005STD}. These solutions focus on either index size reducing or query efficiency improvement. As reported in~\cite{Christoforaki2011CIKM}, their performance asymptotically approaches optimality. However, none of the above work accounts for security or privacy.

In 2019, Cui et al.~\cite{Cui2019ICDE} first constructed ELCBFR+, a secure RSK query model built on Bloom filter encrypted via asymmetric scalar product-preserving encryption (ASPE). Unfortunately, Li et al.~\cite{Li2019ICDE} later demonstrated that ASPE itself is insecure against even ciphertext-only attacks, rendering ELCBFR+ fundamentally vulnerable and invalidating its security claims. To improve security, Wang et al.~\cite{Wang2020INFOCOM,Wang2021TIFS} proposed PBRQ and SKSE with symmetric-key hidden vector encryption (SHVE). Subsequently, Yang et al.~\cite{Yang2022ICDCS} introduced SKQ and LSKQ based on enhanced ASPE (EASPE), which resolve ASPE's security issues. However, all the above solutions depend on PPEs (i.e., ASPE, SHVE, and EASPE) that are unsuitable for industrial applications due to inherent security risks and encryption constraints. Recently, Lv et al.~\cite{Lv2023} presented RASK and RASK+, which address secure RSK queries using only symmetric encryption and keyed hash function, delivering significant gains in query response time.


\subsection{Progress in k-nearest neighbor spatial-keyword query}
\label{RISK:RelatedWorks:kNN}
The \emph{k}SK query, which performs \emph{k}NN search with textual constraints, was first addressed by De Felipe et al.~\cite{DeFelipe2008ICDE} in 2008 using an R-tree. Since then, various approaches leveraging R-Tree~\cite{Wu2012TKDE,Tao2014TKDE} and quard-tree~\cite{Zhang2013EDBT,Zhang2016TKDE} have been proposed to boost query efficiency. Notably, Wu et al.~\cite{Tao2014TKDE} designed a spatial inverted index for multidimensional data and developed a \emph{k}SK query model. In 2016, Zhang et al.~\cite{Zhang2016TKDE} proposed ILQ, a signature quadtree in memory for the same query type. To date, \emph{k}SK query efficiency has been greatly improved. Recently, in 2024, Song et al.~\cite{Song2024IoTsJ} presented the first model based on EASPE for direct \emph{k}SK querying over encrypted geo-textual data.


\subsection{Summary of related works}
\label{RISK:RelatedWorks:Summary}
Existing solutions for secure spatial-keyword queries suffer from the following limitations.
\begin{itemize}
  \item \emph{Tailored indices}. All the aforementioned indices are tailored exclusively for a single type of secure spatial-keyword query. Such indices are impractical for practical applications as they require deploying multiple indices, and incur excessive storage and maintenance overheads.

  \item \emph{Property-preserving encryption}. Most existing solutions except RASK rely on PPEs, incompatible with commercial products. First, commercial entities are only allowed to employ international or national standard encryptions, excluding the aforementioned PPEs. Second, PPEs lack long-term security verification. For instance, ASPE was first proposed by Wong et al.~\cite{Wong2009SIGMOD} in 2009 yet was broken by Li et al.~\cite{Li2019ICDE} merely 10 years later. Its short lifecycle renders it unsuitable for practical deployment.
\end{itemize}

Overall, limited progress has been made toward enabling rich spatial-keyword queries on encrypted geo-textual data.

\section{System model and preliminaries}
\label{RISK:Preliminaries}
In this section, we first present the system model, problem formalization and security model.  Next, we briefly introduce the quadtree that we adopt for index construction.

\subsection{System model and problem formalization}
\label{RISK:Preliminaries:SystemModel}
There are three entities in RISK designed for answering secure rich spatial-keyword queries on a geo-textual dataset $D=\{o_i\}_{i=1}^{n}$ containing $m$ distinct keywords. Each object $o \in D$ is modeled as a tuple $o=(p,\psi)$, where $p=(x,y)$ denotes the geographic coordinate of $o$ in a two-dimensional Euclidean space and $\psi=\{w_i\}_{i=1}^{m}$ represents the set of keywords describing the object $o$. The system architecture is illustrated in Fig.~\ref{RISK:fig:Architecture}.

\begin{figure}[t!]
	\centerline{\includegraphics[width=85mm]{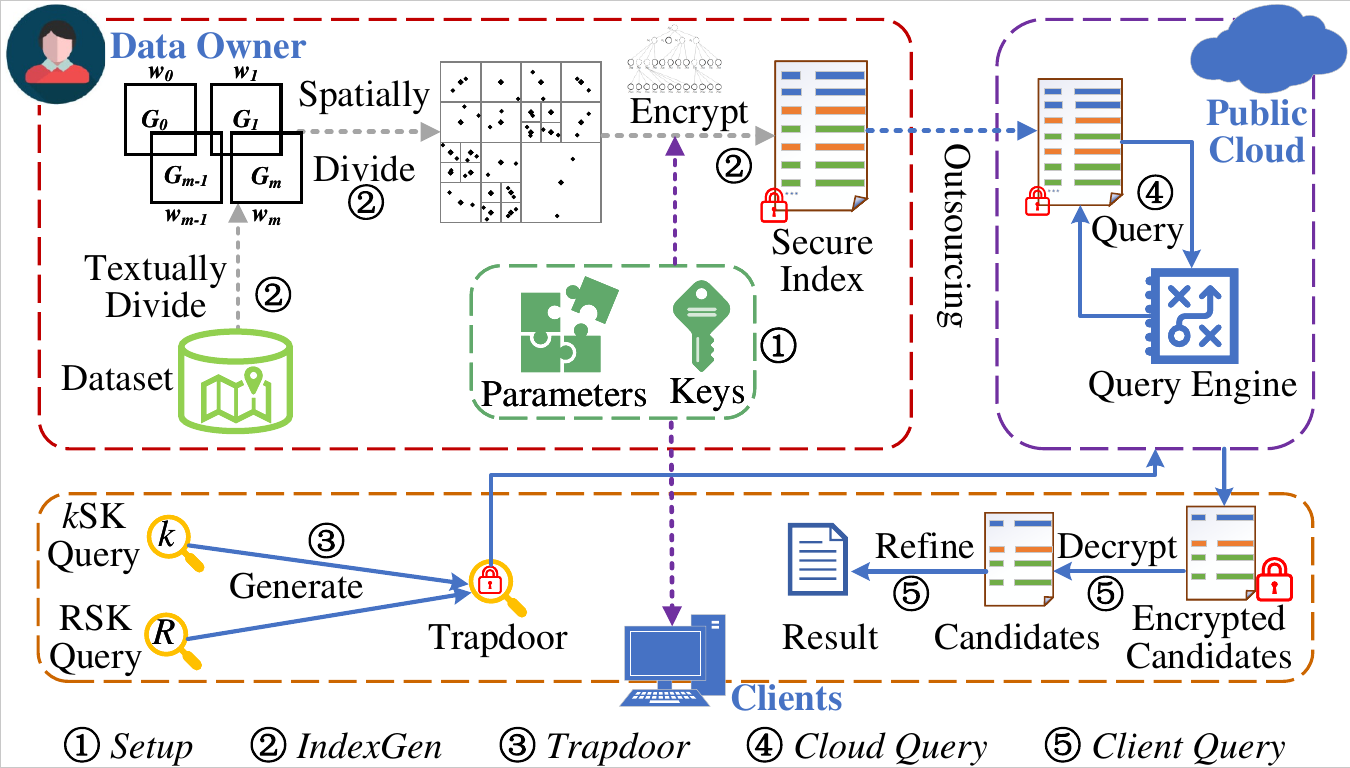}}
	\caption{The architecture of RISK. Here, the dotted arrows indicate off-line processes, and the continuous arrows are on-line processes which directly affects query time and is therefore critical to user experience. Thus, RISK mainly aims at reducing on-line time.}
	\label{RISK:fig:Architecture}
\end{figure}

\begin{itemize}
  \item A \emph{data owner} first invokes a \emph{Setup} primitive to initialize the whole system. Then it invokes an \emph{IndexGen} primitive to build a secure index for the dataset $D$. Subsequently, the secure index is further outsourced to a \emph{public cloud}.
  \item The \emph{public cloud} maintains the secure index. Upon receiving the trapdoor corresponding to either a RSK query $q_r$ or a \emph{k}SK query $q_k$ from a \emph{client}, the \emph{public cloud} invokes a \emph{Cloud Query} primitive to access the secure index and retrieve an encrypted candidate set $\hat{C}$, which is then returned to the \emph{client}.
  \item The \emph{client} invokes a \emph{Trapdoor} primitive to generate query trapdoors and submits them to the \emph{public cloud}. Subsequently, the \emph{client} executes a \emph{Client Query} primitive to decrypt the returned encrypted candidate set $\hat{C}$ and refine it to derive final query results.
\end{itemize}

The problems of secure RSK and \emph{k}SK queries are defined as follows, both constrained to the Euclidean-distance space.

\begin{definition}[Secure range spatial-keyword query problem] Given a secure index built on a geo-textual dataset $D$ and an encrypted trapdoor $T$ of a RSK query $q_r=(p,\psi,r)$, where $r$ is the range radius, the goal is to retrieve a result set $R=\{o | o \in D\}$ satisfying $\{dist(o.{p},q_r.{p})\leq r\} \land \{q_r.{\psi}\subseteq o.{\psi}\}$, where $dist(\cdot,\cdot)$ denotes the Euclidean distance.
  \label{RISK:Def:SRSKQ}
\end{definition}

\begin{definition}[Secure k-nearest neighbor spatial-keyword query problem]  Given a secure index built on a geo-textual dataset $D$ and an encrypted trapdoor $T$
of a \emph{k}SK query $q_k=(p,\psi,k)$. The problem retrieves a result set $R=\{o_i | (1 \leq i \leq k) \land (o_i \in D)\}$ such that $\forall o'\in D \setminus R,\forall o\in R:$ $dist(o'.p,q_k.p) \geq dist(o.p,q_k.p) \land q_k.{\psi}\subseteq o.{\psi}$, where $dist(\cdot,\cdot)$ denotes the Euclidean distance.
  \label{RISK:Def:SkSKQ}
\end{definition}

\subsection{Threat and security models}
\label{RISK:Preliminaries:ThreatModel}
In RISK, the data owner and the clients are honest, while the public cloud is honest-but-curious, which is widely accepted in both database community~\cite{Chen2025SIGMOD,Ahmed2025VLDB} and security community~\cite{perez2025USENIX,Sen2025CCS,Yao2025TIFS}. Furthermore, such model has also been deployed in industrial products, such as Goole's Mayfly~\cite{Bian2024mayfly}, Alibaba's SecretFlow~\cite{Fang2024VLDB}, etc. In such a threat model, the public cloud will follow the specified protocol exactly but may run any polynomial-time analysis to learn extra information about the data and query.

RISK adheres to IND-CKA2~\cite{Curtmola2006CCS} security model, which denotes indistinguishability under adaptive chosen keyword attack and serves as the security target for SOTA SSE solutions~\cite{Jiang2022TIFS,Tong2023TKDE}. This model assumes an adversary powerful enough to capture all encrypted indices, complete access patterns, and even continuously issue adaptive queries. Building on this, the security proof aims at demonstrating that such an adversary cannot distinguish between the real game $\mathsf{Game}_{\mathcal{R}}$ and simulated game $\mathsf{Game}_{\mathcal{S}}$ with a negligible probability by observing their respective outputs. Should this proof hold, the RISK scheme achieves security against such adversaries.

\begin{definition}[IND-CKA2 security] For any SSE instance, and any probabilistic polynomial-time (PPT) algorithm adversary $\mathcal{A}$ bounded by the security parameter $\lambda$, if there exists an efficient simulator such that $\mathcal{A}$ can distinguish between $\mathsf{Game}_{\mathcal{S}}$ and $\mathsf{Game}_{\mathcal{R}}$ only with  a negligible probability, then the SSE scheme is IND-CKA2 secure. That is to say,
$$|\mathsf{Pr}[\mathsf{Game}_{\mathcal{R}}=1]-\mathsf{Pr}[\mathsf{Game}_{\mathcal{S}}=1]|\leq \mathsf{negl}(\lambda).$$
Herein $\mathsf{negl}(\lambda)$ is a $\lambda$-bounded negligible function, where $\lambda$ is a security parameter.
  \label{RISK:Def:Security}
  \end{definition}

  \subsection{Preliminaries}
\label{RISK:Preliminaries:Preliminaries}
The \emph{quadtree} (Q-tree), a hierarchical spatial structure where each internal node has exactly four children, was proposed by Finkel et al.~\cite{Finkel1974Acta}. It is most widely adopted for partitioning two-dimensional space via recursive subdivision into four congruent cells~\cite{Samet1984CSUR,Li2024TKDE}. All nodes reside at distinct levels. Leaf nodes store actual data objects, whereas other nodes maintain the tree structure without holding any actual data objects. We thus refer to the former as real nodes and the latter as virtual nodes as illustrated in Fig.~\ref{RISK:fig:HybridIndex:kQ-tree}. A Q-tree satisfies the following features.

\begin{enumerate}
	\item Each cell maintains an object set $\Delta$ that comprises all objects falling in that cell.
	\item All cells share an identical maximum capacity. When this capacity threshold is reached, the cell is further subdivided into four new sub-cells.
	\item Each real node $N$ and its corresponding cell share a unique path $N.path$ which is generated in a top-down manner from the root to this node. Specifically, for each division, all cells are assigned path $0$ to $3$ left-to-right, bottom-to-top. As illustrated in Fig.~\ref{RISK:fig:HybridIndex:kQ-tree}, the path of node $N_{300}$ is ``$300$'', while that of node $N_{21}$ is ``$21$''. 
\end{enumerate}

In practical scenarios, inappropriate encryption schemes applied to the Q-tree will not only incur privacy leakage risks but also fail to support the efficient execution of \emph{k}NN queries due to their restriction on in-place computations and comparisons over ciphertext. To resolve these issues, we propose a well-designed secure hybrid index structure  that ensures privacy preservation while enabling both \emph{k}NN queries and range queries in subsequent sections.

\section{Secure hybrid index}
\label{RISK:Index}
As discussed in Section~\ref{RISK:RelatedWorks:Summary}, existing secure indices are tailored for specific types of spatial-keyword queries, thus rendering them incompatible with other query types. To address this limitation, a secure hybrid index is carefully designed by leveraging standard cryptographic primitives to simultaneously support both RSK and \emph{k}SK queries.

\subsection{Construction of kNN quadtree}
\label{RISK:Index:kQ-tree}
Both spatial-first-then-textual and textual-first-then-spatial paradigms are widely adopted for efficient geo-textual data indexing~\cite{Chen2020,Chen2013VLDB}. The former filters spatially first then textually, performs keyword queries within a fixed spatial range, and natively supports secure RSK queries~\cite{Lv2023}. However, encryption invalidates direct ciphertext comparisons, prevents dynamic spatial range expansion, and thus creates an inherent conflict with the core operations of $k$-nearest neighbor spatial keyword (\emph{k}SK) queries. As a result, this paradigm is unsuitable for \emph{k}SK queries. Conversely, the latter index type is better suited for \emph{k}SK queries, as it first rapidly filters out textually unmatched objects in the dataset and then performs spatial range retrieval. The remaining challenge is to embed neighborhood semantics into the textual-first-then-spatial index to support both RSK and \emph{k}SK queries. To address this challenge, we propose a novel \emph{k}NN quadtree (\emph{k}Q-tree) that effectively incorporates both range-related and neighborhood semantics.

\begin{figure}
	\centering
	\subfigure[Textual division and index.]{\includegraphics[width=85mm]{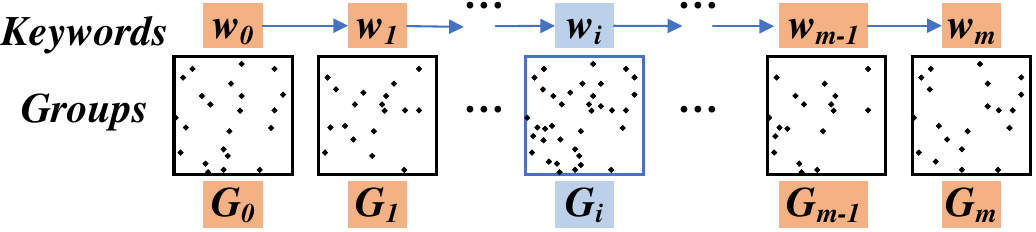}\label{RISK:fig:HybridIndex:Text}}\\
	\subfigure[\emph{k}Q-tree for group $G_i$.]{\includegraphics[width=85mm]{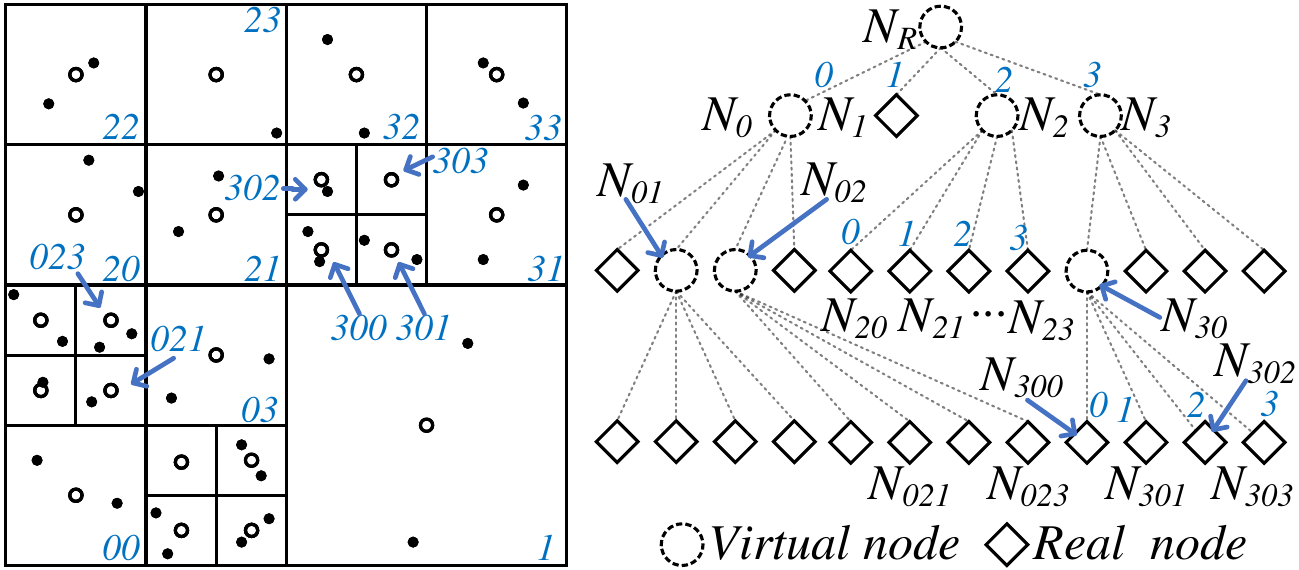}\label{RISK:fig:HybridIndex:kQ-tree}}
	\caption{Generic indexing concept ($k_{\max}=2$ and $d=3$). Virtual nodes (i.e., inner nodes), which only store structural semantics, are denoted by circles; real nodes (i.e., leaf nodes), which store objects, are represented by diamonds. All light blue numbers in the cells are unique identifiers, identical to the path of their corresponding nodes.}
	\label{RISK:fig:HybridIndex}
\end{figure}

Specifically, as illustrated in Fig.~\ref{RISK:fig:HybridIndex}, the entire geo-textual dataset is partitioned textually into $m$ groups, each corresponding to a unique keyword, where $m$ is the number of distinct keywords in $D$. For each group, we construct a \emph{k}Q-tree to accelerate spatial-keyword queries. To embed the neighborhood semantics of \emph{k}SK queries into the quadtree and convert the structure into a key-value tuple list, the \emph{k}Q-tree further incorporates three specialized features while retaining the inherent properties of a standard Q-tree.

\begin{enumerate}
	\addtocounter{enumi}{+3}
	\item All cells have a maximum capacity of $k_{\max}$, where $k_{\max}$ exceeds the number of returned NNs.
	
	\item For each real node $N$, a representative object $\dot{o}$ is selected for its corresponding cell. NN set $\Delta^k$ comprising $k_{\max}$ NNs to $\dot{o}$ in the same group is also inserted into real nodes.
	
	\item The whole tree for each keyword is converted into a key-value tuple list where the key $e$ is ``$keyword||N.path$'' and the value is $v=(e, \Delta, \Delta^k)$. As illustrated in Fig.~\ref{RISK:fig:HybridIndex:kQ-tree}, the key for node $N_{300}$ is ``$w_i||300$'' and the key for node $N_{21}$ is ``$w_i||21$'' in group $G_i$. 
\end{enumerate}

Given a \emph{k}SK query $q_k$, the cloud locates the cell containing $q_k.p$. Feature 4 mitigates, to a certain extent, the scenario where a cell cannot respond to $q_k$ due to an insufficient number of objects (such as cells $303$, $201$, etc. in Fig.~\ref{RISK:fig:HybridIndex:kQ-tree}). However, the true \emph{k}NNs may still fail to be retrieved. For example, if the query's NNs lie outside the located cell and are distributed across adjacent cells, the cell cannot provide valid candidates (such as a query locates in cell $1$'s left-top corner in Fig.~\ref{RISK:fig:HybridIndex:kQ-tree}). Feature 5 addresses this issue by ensuring each cell always maintains a sufficient number of candidates to cover the neighborhood range required by the query. Moreover, feature 6 converts the \emph{k}Q-tree into a key-value list for facilitating subsequent encryption.

\begin{algorithm}[!t]
	\footnotesize
	\caption{Construction of \emph{k}Q-tree.}
	\label{RISK:alg:kQ-tree}
	\begin{algorithmic}[1]
		\Require A group $G_i$ with a keyword $w$ and a maximum capacity $k_{\max}$.
		\Ensure A \emph{k}Q-tree $\mathbb{T}$ for group $G_i$ and the tree height $d_i$.
		\State Build a Q-tree $\mathbb{T}_i$ on $G_i$ with $k_{\max}$ and $d_i$ is the height of $\mathbb{T}_i$;
		\State $\mathbb{T} \leftarrow \emptyset$;
		\State $\mathbb{N} \leftarrow \cup \{\mathbb{T}_i.root.child_j\}_{j=1}^{3}$; \Comment{Initialize a node candidate queue.}
		\For{each $N \in \mathbb{N}$} \Comment{Process every node until the queue is empty.}
		\State Pop $N$ from $\mathbb{N}$;
		\If{$N$ is a virtual node}
		\State $\mathbb{N} \leftarrow \mathbb{N} \cup \{N.child_j\}_{j=1}^{3}$; \Comment{Insert child nodes into the queue.}
		\Else
		\State $e \leftarrow ``w||N.path"$; \Comment{Set key for node $N$.}
		\State Calculate $\Delta$ consisting all objects that fall in the node;
		\State Calculate the representative object $\dot{o}$ for $N$;
		\State Calculate NN set $\Delta^k$ to $\dot{o}$ in the same group;
		\State $v \leftarrow (e, \Delta, \Delta^k)$; \Comment{Set value for node $N$.}
		\State $\mathbb{T} \leftarrow \mathbb{T} \cup (e,v)$; \Comment{Insert key-value pair into the \emph{k}Q-tree.}
		\EndIf
		\EndFor
		\State \Return $\mathbb{T}_i=\mathbb{T}$ as the \emph{k}Q-tree and $d_i$ as the tree height;
	\end{algorithmic}
\end{algorithm}

\emph{Construction details.} For each group, the client executes Algorithm~\ref{RISK:alg:kQ-tree} to generate a \emph{k}Q-tree indexing objects. Totally, $m$ \emph{k}Q-trees will be built. First, a standard Q-tree is constructed (Line 1) following~\cite{Finkel1974Acta,Samet1984CSUR,Li2024TKDE}, and the tree height is derived simultaneously. The Q-tree is then traversed top-down to embed semantics and support both RSK and \emph{k}SK queries (Lines 4-14). Specifically, for each virtual node, only its child nodes are inserted into the candidate queue $\mathbb{N}$ (Line 7). For each real node, a key $e$  is formed by concatenating $w$ and the node's path (Line 9), and the NNs to its representative object $\dot{o}$ are computed in $G_i$ (Lines 11-12).  Deriving NNs is not the focus of this paper, and relevant details are referred to in~\cite{Hjaltason1995ASD}. The final value $v$ is the concatenation of $e$, $\Delta$ and $\Delta^k$ (Line 13), and the key-value pair $(e,v)$ is added to the \emph{k}Q-tree $\mathbb{T}$ (Line 14). After processing all real nodes, the complete \emph{k}Q-tree $\mathbb{T}_i$ and the height $d_i$ are obtained (Line 15). 

\subsection{Construction of secure kNN quadtree}
\label{RISK:Index:SkQ-tree}
To safeguard index privacy and confidentiality, the \emph{k}Q-tree requires encryption prior to cloud outsourcing. In contrast to existing works in literatures~\cite{Cui2019ICDE,Wang2021TIFS,Yang2022ICDCS,Song2024IoTsJ}, this paper leverages standardized cryptographic primitives to obfuscate the complete hierarchical structure of the \emph{k}Q-tree, conceal its associated geo-textual objects, and finally construct a secure \emph{k}NN quadtree (S\emph{k}Q-tree) by integrating $m$ \emph{k}Q-trees (one per distinct keyword). This design precludes the cloud from inferring the index topology or the underlying data distribution.

The client executes Algorithm~\ref{RISK:alg:SkQ-tree} to generate a S\emph{k}Q-tree. For a geo-textual dataset $D$, $m$ keyword-based groups are formed to construct $m$ \emph{k}Q-trees (Lines 1-2), which are then integrated into a single tree $\hat{\mathbb{T}}$ (Line 3). To enforce uniform value lengths, the maximum length $len$ is computed by scanning all values in $\mathbb{T}$ (Line 4), which is critical for ciphertext indistinguishability, consistent with prior works~\cite{Wang2021TIFS,Lv2023}. Next, through a loop (Lines 5-9), all keys are hashed, values are padded to the uniform length, and encrypted with a symmetric encryption scheme. Finally, $\hat{\mathbb{T}}$ is returned as the S\emph{k}Q-tree, with the maximum tree height $d$ returned simultaneously.

\begin{algorithm}[!t]
	\footnotesize
	\caption{Construction of S\emph{k}Q-tree.}
	\label{RISK:alg:SkQ-tree}
	\begin{algorithmic}[1]
		\Require A dataset $D$, a capacity $k_{\max}$, and secret keys $sk_H, sk_E$. 
		\Ensure A S\emph{k}Q-tree $\hat{\mathbb{T}}$ as the secure index and a maximum tree height $d$.
		\State Reorganize $m$ groups $\{G_i\}_{i=1}^{m}$ such that $\forall o \in G_i, w_i \in o.\psi$.
		\State Derive \emph{k}Q-trees $\{\mathbb{T}_i\}_{i=1}^{m}$ and tree heights $\{d_i\}_{i=1}^{m}$ by Algorithm~\ref{RISK:alg:kQ-tree} with $k_{\max}$;
		\State $\hat{\mathbb{T}} \leftarrow \cup{\{\mathbb{T}_i\}_{i=1}^{m}}$; \Comment{Integrate all \emph{k}Q-trees as a single tree.}
		\State Calculate maximum length $len$ for all values $v \in \hat{\mathbb{T}}$ by linear scanning;
		\For{each $(e,v)\in \hat{\mathbb{T}}$} \Comment{Process every key-value pair in $\hat{\mathbb{T}}$.}
		\State $e \leftarrow H_{sk_H}(e)$; \Comment{Hash the key $e$.}
		\State $v$ is padded to length $len$; \Comment{pad $v$ to a fixed length $len$.}
		\State $v \leftarrow Enc_{sk_E}(v)$; \Comment{Encrypt the value $v$.}
		\EndFor
		\State $d \leftarrow \max_{1 \leq i \leq m} \{d_i\}$; \Comment{Calculate the maximum tree height.}
		\State \Return $\hat{\mathbb{T}}$ as the secure index and $d$ as the maximum tree height;
	\end{algorithmic}
\end{algorithm}

The employed keyed hash function $H$ and symmetric encryption scheme $Enc$ collectively conceal the \emph{k}Q-tree structure and all  its values. Thus, by deploying a $SkQ$-tree, geo-textual data can be securely outsourced to the public cloud for rich spatial-keyword queries. In fact, to efficient process a S\emph{k}Q-tree, a perfect hash function~\cite{Fox1992SIGIR,Pibiri2021SIGIR} can be introduced on the public cloud to accelerate hashed key matching.

\section{Construction of RISK}
\label{RISK:Cons}
In RISK, both secure RSK and \emph{k}SK queries are supported over a S\emph{k}Q-tree. All primitives for RISK construction, as discussed in Section~\ref{RISK:Preliminaries:SystemModel}, are detailed below.

\subsection{Setup}
\label{RISK:Cons:Setup}
This primitive initializes prerequisite cryptographic tools and is executed by the data owner. The RISK system is parameterized by a security parameter $\lambda$ quantifying security strength. In practice, $\lambda$ usually denotes the length of a symmetric encryption key or a keyed hash function output. A larger $\lambda$  implies stronger security, selected per application-specific requirements. The National Institute of Standards and Technology (NIST)~\cite{Barker2020NIST} has explicitly deprecated $112$-bit symmetric security since 2020, setting the minimum accepted $\lambda$ to $128$ bits.

Given a security parameter $\lambda$, a keyed hash function $H_{sk_H}(\cdot) :\{0,1\}^{\lambda} \times \{0,1\}^{*} \rightarrow \{0,1\}^{\lambda}$ and a symmetric encryption $Enc_{sk_E}( \cdot):\{0,1\}^{\lambda} \times \{0,1\}^{*} \rightarrow \{0,1\}^{*}$ are appropriately selected. The  associated symmetric decryption is defined as $Dec_{sk_E}(\cdot):\{0,1\}^{\lambda} \times \{0,1\}^{*} \rightarrow \{0,1\}^{*}$ such that $Dec_{sk_E}(Enc_{sk_E}(m))=m$. Both secret keys $sk_H$ and $sk_E$ are also randomly selected and bounded by $\lambda$.

For the entire system, these secret keys are $sk_H$ and $sk_E$.

\subsection{IndexGen}
\label{RISK:Cons:IndexGen}
The S\emph{k}Q-tree serves as RISK's secure index, generated by the data owner with this primitive. For a geo-textual dataset $D$ and predefined $k_{\max}$, the secure index $\hat{\mathbb{T}}$ is constructed by executing Algorithms~\ref{RISK:alg:kQ-tree} and ~\ref{RISK:alg:SkQ-tree} sequentially, and then outsourced to the public cloud for client query processing. 

Concurrently, the algorithm derives the maximum tree height $d$ and calculates the width $W$ of dataset $D$, both of which are incorporated into the system parameters to underpin trapdoor generation. Here,
$$W=\max_{o \in D} \{o.p.x,o.p.y\}-\min_{o \in D} \{o.p.x,o.p.y\}.$$
Thus, system parameters are $SP=(d, W, H, Enc, Dec)$, publicly available to all participants.

\subsection{Trapdoor}
\label{RISK:Cons:Trapdoor}
For different secure spatial-keyword query types, trapdoor primitive adopts tailored strategies to generate trapdoors for the public cloud's candidate location filtering. Such primitives are executed by the client. Trapdoors for secure RSK and \emph{k}SK spatial-keyword queries are presented sequentially below.

\textbf{\emph{RTrapdoor}} is the primitive for trapdoor generation in secure RSK query. Since the client does not store the partition details of the S\emph{k}Q-tree, it cannot directly locate candidate tuples. Instead, it infers the coverage of a given query range based on the maximum tree height $d$.

To generate identifiers for the potentially covered query region, a false positive-tolerant approach is proposed. Since these false positive identifiers map to non-existent virtual nodes in the S\emph{k}Q-tree, the public cloud can efficiently filter them out. As illustrated in Fig.~\ref{RISK:fig:TrapdoorGen}, the detailed procedure illustrated in Algorithm~\ref{RISK:alg:CoverIDs} is executed by the client initiating a secure RSK query.

\begin{figure}[t!]
	\centerline{\includegraphics[width=85mm]{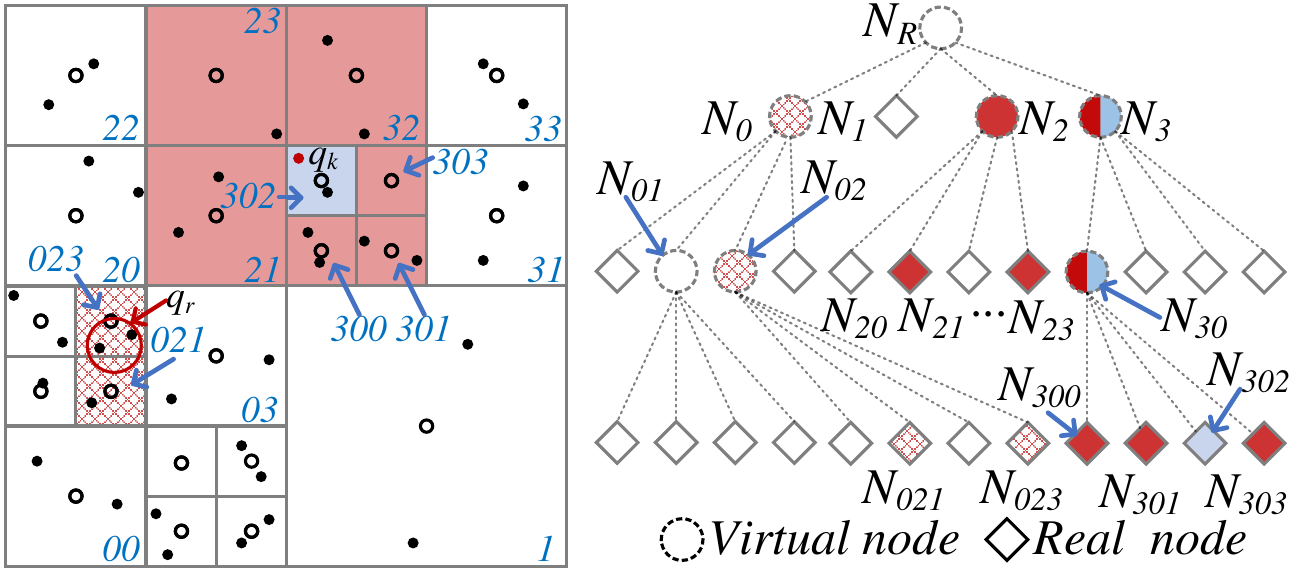}}
	\caption{An example for generating trapdoors for RISK. Here, in the left region, the red circle denotes a RSK query $q_r$ and the object $q_k$ colored by red dot is the center for both NSK and \emph{k}SK queries. Here, $k=2$ is for \emph{k}SK query. For both regions, the patterned, light-colored, and dark-colored cells indicate touched cells and nodes for RSK, NSK, and \emph{k}SK queries, respectively.}
	\label{RISK:fig:TrapdoorGen}
\end{figure}

Initially, the bottom-left and top-right coordinates of the circular query range's circumscribed square are computed to facilitate subsequent trapdoor generation (Line 1). The covered identifier set $\mathbb{ID}$ is initialized as empty, and the minimum inter-cell interval $w$ is derived (Line 2). A nested loop is then executed to compute identifiers of cells intersecting the range query by traversing the locating point from bottom-left to top-right in steps of $w$ (Lines 3-10). Leveraging  the uniform partitioning of the \emph{k}Q-tree, the  identifiers of all intersecting cells are directly computable (Lines 7-9). However, unaware of the tree structure and unable to distinguish real from virtual nodes in the path from the root to leaf, the client cannot determine the exact cell containing the locating point. Thus, all node identifiers along this path are inserted into $\mathbb{ID}$ (Lines 6, 10). Finally, $\mathbb{ID}$ is returned (Line 11).

\begin{algorithm}[!t]
	\footnotesize
	\caption{Generation of potential covered identifiers.}
	\label{RISK:alg:CoverIDs}
	\begin{algorithmic}[1]
		\Require A center $p$, a radius $r$, and the public parameter $SP=(d, W)$. 
		\Ensure A set $\mathbb{ID}$ of covered identifiers.
		\State $x_{\min} \leftarrow p.x-r$, $x_{\max} \leftarrow p.x+r$, $y_{\min} \leftarrow p.y-r$ and $y_{\max} \leftarrow p.y+r$;
		\State $w \leftarrow W \cdot 2^{-d}$, $\mathbb{ID} \leftarrow \emptyset$; \Comment{Initialize the minimum interval and $\mathbb{ID}$.}
		\For{$x=x_{\min}$ to $x_{\max}$ with step $w$} \Comment{Traverse along with X-axis.}
		\For{$y=y_{\min}$ to $y_{\max}$ with step $w$} \Comment{Traverse along with Y-axis.}
		\State $ID \leftarrow \emptyset$; \Comment{Initialize a covered identifier.}
		\For{$i=1$ to $d$ with step $1$} \Comment{Recursively compute $ID$.}
		\State $code \leftarrow 2\cdot \lfloor (\lfloor y/w \rfloor \mod 2^{d-i+1}) \cdot 2^{-(d-i)}\rfloor $;
		\State $code \leftarrow code + \lfloor (\lfloor x/w \rfloor \mod 2^{d-i+1}) \cdot 2^{-(d-i)}\rfloor$
		\State $ID \leftarrow ID || code$;
		\State If $\{ID\} \not\subseteq \mathbb{ID}$ then $\mathbb{ID} \leftarrow \mathbb{ID} \cup \{ID\}$;
		\EndFor
		\EndFor
		\EndFor
		\State \Return $\mathbb{ID}$;
	\end{algorithmic}
\end{algorithm}



The trapdoor $T_R$ for $q_r$ is generated as follows. 
\begin{equation}
	T_R=\{h_i=H_{sk_H}(w_i||ID_j) | w_i\in q_r.\psi, ID_j \in \mathbb{ID}\}.
	\label{RISK:Equa:RTrapdoor}
\end{equation}
Here $\mathbb{ID}$ is derived by Algorithm~\ref{RISK:alg:CoverIDs} with $p=q_r.p$ and $r=q_r.r$.

\emph{Example}. For the RSK query $q_r$ marked by the red circle in Fig.~\ref{RISK:fig:TrapdoorGen}, the two red-patterned cells $021$ and $023$ both intersect with $q_r$. However, since the client has no knowledge of the tree structure, it cannot determine the hierarchical levels of these two cells. Fortunately, the identifiers of all virtual and real nodes covered by the query range can be deduced by the client from the tree height. Thus, cells $0$ and $02$ are also considered as potential intersected cells. The final set of covered identifiers is $\mathbb{ID}=\{0, 02, 021, 023\}$. Furthermore, if the query's description satisfies $\{w_i\} \in q_r.\psi$, the trapdoor for $q_r$ is formulated as follows.
\begin{align*}
	T_R=\{& H_{sk_H}(w_i||0), H_{sk_H}(w_i||02), \\ &H_{sk_H}(w_i||021), H_{sk_H}(w_i||023)\}.
\end{align*}

\textbf{\emph{NTrapdoor}} is the primitive for trapdoor generation in secure nearest neighbor spatial-keyword (NSK) queries. By integrating the NN set $\Delta^{k}$ into the value field $v$ of each \emph{k}Q-tree tuple (Algorithm~\ref{RISK:alg:kQ-tree}, Line 13), we fully exploit secure RSK queries and reduce the NN query $q_k=(p, \psi, 1)$ to a secure RSK query with $r=0$. The trapdoor $T_N$ is calculated as follows. 
\begin{equation}
	T_N=\{H_{sk_H}(w_i||ID_j) | w_i\in q_k.\psi, ID_j \in \mathbb{ID}\}.
	\label{RISK:Equa:NTrapdoor}
\end{equation}
Where $\mathbb{ID}$ is derived by Algorithm~\ref{RISK:alg:CoverIDs}, with $p=q_k.p$ and $r=0$.

\emph{Example}. For the NSK query $q_k$ marked by the red dot in Fig.~\ref{RISK:fig:TrapdoorGen}, the light-colored cell $302$ is exactly the target cell containing this query. Following the same rationale for RSK queries, two additional cells $3$ and $30$ are also considered as intersected cells. The final set of covered identifiers is $\mathbb{ID}=\{3, 30, 302\}$. Furthermore, if the query keyword satisfies $\{w_i\} \in q_k.\psi$, the trapdoor for $q_k$ is constructed as follows.
\begin{equation*}
	T_N=\{ H_{sk_H}(w_i||3), H_{sk_H}(w_i||30), H_{sk_H}(w_i||302)\}.
\end{equation*}

\textbf{\emph{kTrapdoor}} is the primitive for generating trapdoors tailored to secure \emph{k}SK queries. During the construction of a S\emph{k}Q-tree, the NN set $\Delta^k$ is integrated into value $v$ (Algorithm~\ref{RISK:alg:kQ-tree}, Lines 10–11), whereby the results of \emph{k}NN queries are highly likely to fall within this set. However, the distance between $q_k.p$ and the representative object $\dot{o}$ introduces a non-negligible probability that the $k$ objects in $\Delta^k$ are not the accurate \emph{k}NN matches. To improve query accuracy, a two-phase strategy is proposed to achieve secure \emph{k}SK queries. 

In the first phase, a secure RSK query is issued with $r=0$ to spatially locate the cell containing the query. Thus, 
\begin{equation}
	T_k^{(1)}=T_N.
	\label{RISK:Equa:kTrapdoor1}
\end{equation}

Then, $T_k^{(1)}$ is sent to the public cloud, which returns an encrypted candidate set $\hat{C}$. The plaintext candidate set is derived as $C=\{(e, v)=Dec_{sk_E}(\hat{c}) | \hat{c}\in\hat{C}\}$. Next, $v$ is parsed into $e$, $\Delta$ and $\Delta^{k}$, with $e$ further decomposed into $w$ and $path$. The length of $path$ indicates the located level of the cell that $q_k.p$ falls in. To enhance the query's accuracy, all cells around the located cell are fetched back for more candidates. 

To this end, we generate an appropriate radius $r^k$ to cover all directly adjacent cells. For any key-value pair $(e_i, v_i) \in C$, the query radius for the second phase is as follows.
\begin{equation}
	r_i^k=\sqrt{2} \cdot W \cdot 2^{-|v_i.e.path|-1} + 0.1 \cdot W \cdot 2^{-d}.
	\label{RISK:Equa:rk}
\end{equation}
Here, the first component constructs the circumscribed circle enclosing the target cell, and the second slightly extends the radius to encompass all immediate neighboring cells.The touched cells and nodes are shaded in dark color in Fig.~\ref{RISK:fig:TrapdoorGen}.

The trapdoor $T_k^{(2)}$ for the second phase is formulated as,
\begin{equation}
	T_k^{(2)}=\{H_{sk_H}(w_i||ID_j) | w_i\in q_k.\psi, ID_j \in \mathbb{ID}_i\}.
	\label{RISK:Equa:kTrapdoor2}
\end{equation}
Here $\mathbb{ID}_i$ is derived by Algorithm~\ref{RISK:alg:CoverIDs}, with $p=q_k.p$ and $r=r_i^k$.


Thus, the final trapdoor for a secure \emph{k}SK query $q_k$ is
\begin{equation}
	T_k=(T_k^{(1)}, T_k^{(2)}).
	\label{RISK:Equa:kTrapdoor}
\end{equation}

\emph{Example}. For the aforementioned \emph{k}SK query $q_k$ (denoted by the red dot in Fig.~\ref{RISK:fig:TrapdoorGen}, where $k=2$), the light-colored cell $302$ is retrieved in first phase. The final set of covered identifiers for $T_k^{(1)}$ is $\{3, 30, 302\}$. In the second phase, the dark-colored cells are located, the final set of covered identifiers is $\mathbb{ID}=\{300, 301, 303, 32, 320, 321, 2, 23, 231, 21, 211, 213\}$. Moreover, for any query keyword $\{w_i\} \in q_r.\psi$, the trapdoors for $q_k$ are as follows.
\begin{align*}
	T_k^{(1)}=\{ &  H_{sk_H}(w_i||3), H_{sk_H}(w_i||30), H_{sk_H}(w_i||302)\}.\\	
	T_k^{(2)}=\{
	&H_{sk_H}(w_i||300), H_{sk_H}(w_i||301), H_{sk_H}(w_i||303), \\
	&H_{sk_H}(w_i||32), H_{sk_H}(w_i||320), H_{sk_H}(w_i||321),\\	
	&H_{sk_H}(w_i||2), H_{sk_H}(w_i||23), H_{sk_H}(w_i||231),\\	
	&H_{sk_H}(w_i||21), H_{sk_H}(w_i||211), H_{sk_H}(w_i||213)\}.
\end{align*}

In both secure NSK and \emph{k}SK queries, the intersection results across distinct keywords may fail to satisfy the required cardinality of $1$ (for NSK queries)  or $k$ (for \emph{k}SK queries) objects, respectively. This limitation exposes the two-phase strategy to the risk of overlooking valid matches, thereby necessitating supplementary query phases for mitigation. 

\textbf{Additional \emph{Trapdoor$^+$}} is the primitive for trapdoor generation in supplementary phases, addressing the aforementioned issues. Its generated trapdoor is similar to $T_R$ primitive for secure RSK queries. The radiuses for secure NSK and \emph{k}SK queries are initialized to be $r=0$ and $r=r_i^k$ respectively. The radius is then stepwise increased by $W \cdot 2^{-d}$ until sufficient results are derived. Hence, the supplementary radiuses for both queries are as follows.
\begin{equation}
	\begin{split}
		\left\{
		\begin{aligned}
		r=&0 .\\
		r_i^k=&\sqrt{2} \cdot W \cdot 2^{-|v_i.e.path|-1} + (\theta + 0.1) \cdot W \cdot 2^{-d}.
		\end{aligned} 
		\right.
	\end{split}
	\label{RISK:Equa:r+}
\end{equation}

For secure NSK and \emph{k}SK queries, the additional trapdoors remain unchanged when Equations~\ref{RISK:Equa:NTrapdoor} and~\ref{RISK:Equa:kTrapdoor2} are computed with the above radiuses. Here, positive integer $\theta$ counts the number of additional phases performed, increasing by $1$ in each round. These phases proceed iteratively until enough results are generated, guaranteeing fully accurate final outcomes.

\subsection{Query}
\label{RISK:Cons:Query}
While the generated trapdoors for secure RSK, NSK, and \emph{k}SK queries are distinct, the initial cloud query procedure remains uniform. Core differences exist in the client query processes, presented in sequence herein.

\textbf{\emph{CQuery}} is the primitive for a public cloud generating encrypted candidates for a received trapdoor $T$, which is one of $T_R$, $T_N$, $T_k^{(1)}$, and $T_k^{(2)}$. The cloud generates encrypted candidates as follows.
\begin{equation}
	\hat{C}=\{(\hat{e}, \hat{v}) | \hat{e}=T_i \land (\hat{e}, \hat{v}) \in \hat{\mathbb{T}} \land T_i \in T\}.
	\label{RISK:Equa:CCandidate}
\end{equation}

When the client receives the encrypted candidate $\hat{C}$, it can first obtain the plaintext candidate $C$ as follows:
\begin{equation}
	C=\{(e, v)=Dec_{sk_E}(\hat{c}) | \hat{c}\in\hat{C}\}.
	\label{RISK:Equa:Candidate}
\end{equation}

Subsequently, distinct primitives are invoked for different query types.

\textbf{\emph{RQuery}} is the primitive for a client to derive results for a secure RSK query $q_r$, with its workflow delineated in Algorithm~\ref{RISK:alg:RQuery}.  Initially, the result set $R$ is initialized as empty (Line 1). A loop is then executed to traverse all elements in the candidate set $C$ (Lines 2-3). Specifically, each element $v$ is parsed, and only the component $v.\Delta$ is leveraged for result refinement. If an object $o$ satisfies the condition specified in Line 3, it is added to $R$. Finally, $R$ is returned to the client.  

\begin{algorithm}[!t]
	\footnotesize
	\caption{Client-side query for a secure RSK query.}
	\label{RISK:alg:RQuery}
	\begin{algorithmic}[1]
		\Require A query $q_r$, an encrypted candidate $C$. 
		\Ensure A result set $R$.
		\State $R \leftarrow \emptyset$;
		\For{each $o \in v.\Delta$ and $(e, v) \in C$} \Comment{Traverse all objects in $C$.}
		\State If $o \not\in R \land dist(o, q_r.p) \leq q_r.r \land q_r.\psi \subseteq o.\psi$ then insert $o$ into $R$;
		\EndFor
		\State \Return $R$;
	\end{algorithmic}
\end{algorithm}

\textbf{\emph{NQuery}} is the primitive for a client to derive result for a secure NSK query $q_k$ in which $k=1$. The query result should be one of the objects in $v.\Delta^k$ or $v.\Delta$. Hence, the result $R$ is
\begin{equation}
	R=\{o | dist(o, q_k.p) \leq dist(o', q_k.p)\}.
	\label{RISK:Equa:NQuery}
\end{equation}
Where, $o, o' \in (v.\Delta^k \cup v.\Delta)$, $o' \neq o$, and $q_k.\psi \subseteq (o.\psi \cap o'.\psi)$.

\textbf{\emph{kQuery}} is the primitive for a client to derive result for a secure \emph{k}SK query $q_k$ with $k>1$. Upon completing two-phase queries on the public cloud, the client executes Algorithm~\ref{RISK:alg:kQuery} to derive the \emph{k}SK query result. Specifically, the result is derived from all NN sets. First, objects in these sets are aggregated with duplicates removed (Lines 1-2). Then, a linear scan is performed to generate the exact query result $R$ (Lines 4-7).

\begin{algorithm}[!t]
	\footnotesize
	\caption{Client-side query for a secure \emph{k}SK query.}
	\label{RISK:alg:kQuery}
	\begin{algorithmic}[1]
		\Require A secure \emph{k}SK query $q_k$, encrypted candidates $C^{(1)}$ and $C^{(2)}$ for trapdoor $T_k^{(1)}$ and $T_k^{(2)}$,  and a secret key  $sk_E$. 
		\Ensure A result set $R$.
		\State $V \leftarrow \{o | (o \in v.\Delta \cup  v.\Delta^k) \land ((\hat{e}, v)\in C^{(i)}) \land (i\in\{1,2\})\}$;
		\State Repeat objects in $V$ are removed;
		\State $R \leftarrow \emptyset$, $\delta  \leftarrow 0$; \Comment{Initialize result set.}
		\For{each $o \in V$} \Comment{Traverse all objects in $V$.}
		\If{$(|R|<k) \lor (q_k.\psi \subseteq o.\psi \land dist(q_k.p, o.p)<\delta)$} 
		\State $o$ is inserted into $R$; \Comment{$o$ is a valid result.}
		\State $\delta \leftarrow \max\{\delta,dist(q_k.p, o.p)\}$; \Comment{Reset the maximum distance.}
		\EndIf
		\EndFor
		\State \Return $R$;
	\end{algorithmic}
\end{algorithm}

\section{Theoretical analysis}
\label{RISK:Analyses}
RISK is analyzed in this section from aspects of security and complexity, followed by theoretical comparisons.

\subsection{Security analysis}
\label{RISK:Analyses:Security}
The security of RISK is proved by the following theorem.

\begin{theorem}[IND-CKA2 security for RISK]
	RISK is IND-CKA2 $(\mathcal{L}_1,\mathcal{L}_2)$-secure in the random oracle model. For any probabilistic polynomial-time adversary $\mathcal{A}$ attempting to break RISK, its advantage is bounded by
	\begin{equation*}
	  \begin{split}
		&|\mathsf{Pr}[\mathsf{Game}_{\mathcal{R}}=1]-\mathsf{Pr}[\mathsf{Game}_{\mathcal{S}}=1]|\\
		\leq & \mathsf{negl}(\lambda)=\epsilon (\sum_{i=1}^{m}2^{d_i-1}+ s \cdot d) \frac{poly(\lambda)}{2^{\lambda}}.
	  \end{split}
	\end{equation*}
	Herein, $\mathsf{Game}_{\mathcal{R}}$ and $\mathsf{Game}_{\mathcal{S}}$ denote the real and simulated games, respectively, and $\epsilon$ is an adjustable polynomial expansion factor for range queries. The leakage functions are $\mathcal{L}_1=\emptyset$ and $\mathcal{L}_2=\{SP(Q), Hist(Q)\}$. Both the real game $\mathsf{Game}_{\mathcal{R}}$ and the simulated game $\mathsf{Game}_{\mathcal{S}}$ are defined as follows,
	\begin{itemize}
		\item $\mathsf{Game}_{\mathcal{R}}$. In the real game, $\mathcal{A}$ acquires the outputs of the real primitives except query. Then, $\mathcal{A}$ adaptively performs query and obtains the real transcripts generated by these primitives. Finally, $\mathcal{A}$ observes the real transcripts and outputs a bit $b\in\{0,1\}$.

		\item $\mathsf{Game}_{\mathcal{S}}$. In the simulated game, all hash functions are replaced by random oracles, and symmetric encryption scheme generates random values as outputs. $\mathcal{A}$ obtains the outputs of the simulated primitives. Then, $\mathcal{A}$ adaptively performs simulated queries and gets the simulated transcripts generated by these primitives. Finally, $\mathcal{A}$ observes the simulated transcripts and outputs a bit $b\in\{0,1\}$.
	\end{itemize}
  	\label{RISK:Theorem:Security}
\end{theorem}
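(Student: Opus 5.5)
We follow the standard simulation paradigm for SSE in the random oracle model. The plan is to build a simulator $\mathcal{S}$ that receives only the leakage $\mathcal{L}_1$ at setup and $\mathcal{L}_2=\{SP(Q),Hist(Q)\}$ at query time. Here $SP(Q)$ is the search pattern, i.e.\ which trapdoor elements repeat across queries, and $Hist(Q)$ is the history, i.e.\ the access pattern of matched encrypted tuples. From these, $\mathcal{S}$ produces an index and transcripts whose distribution is computationally indistinguishable from those of $\mathsf{Game}_{\mathcal{R}}$. We then bound the distinguishing advantage by counting the number of random oracle points that must be programmed consistently.

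\emph{Simulated index.} Since $\mathcal{L}_1=\emptyset$, the simulator must output an S$k$Q-tree without knowing $D$. We take the size of the list, $\sum_{i}|\mathbb{T}_i|$, and the padded length $len$ as public parameters; the size is bounded by $\sum_{i=1}^{m}2^{d_i-1}$ real nodes as in the bound. $\mathcal{S}$ outputs that many pairs $(\hat{e},\hat{v})$ with $\hat{e}\xleftarrow{\$}\{0,1\}^{\lambda}$ and $\hat{v}$ a uniformly random string of the ciphertext length. Indistinguishability of the values reduces to IND-CPA security of $Enc$: every real value is padded to $len$ (Algorithm~\ref{RISK:alg:SkQ-tree}, Line 7), so a hybrid replacing ciphertexts one by one costs at most the IND-CPA advantage per tuple. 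Indistinguishability of the keys follows because $H_{sk_H}$ is modeled as a random oracle with a secret key, so real keys are uniform in $\{0,1\}^{\lambda}$ unless the adversary queries the oracle on $sk_H\|\cdot$, which happens with probability $poly(\lambda)/2^{\lambda}$.

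\emph{Simulated trapdoors and responses.} For each adaptive query (RSK, NSK, $k$SK phase 1/2, or a Trapdoor$^+$ round), the trapdoor consists of at most $s\cdot d$ hash values per keyword-covered identifier set, where $s$ bounds the number of identifiers touched per query; the factor $\epsilon$ absorbs the polynomial expansion of range or supplementary rounds. Using $SP(Q)$, $\mathcal{S}$ reuses previously issued strings for repeated elements and samples fresh random strings otherwise. Using $Hist(Q)$, it programs the random oracle so that exactly the trapdoor elements that matched in the real execution map to previously unassigned simulated keys $\hat{e}$ of the index. Non-matching elements, the false positives from Algorithm~\ref{RISK:alg:CoverIDs}, are mapped to fresh strings that match nothing. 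The cloud's response (Equation~\ref{RISK:Equa:CCandidate}) is then determined by the index and the trapdoor and is consistent with the real access pattern. Client-side refinement (RQuery, NQuery, kQuery) happens locally and adds nothing to the view. Multi-phase $k$SK queries are handled identically: the phase-2 radius depends only on decrypted data, but the resulting trapdoor's repetition and match structure is part of $SP$ and $Hist$.

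\emph{Bounding the advantage; main obstacle.} Through a sequence of hybrids (real $\to$ random-oracle keys $\to$ random ciphertexts $\to$ programmed oracle), the only non-negligible failure event is a programming collision. This occurs when the adversary has already queried the oracle at a point the simulator later needs to program, or when two fresh random strings collide. Each such event has probability $poly(\lambda)/2^{\lambda}$. A union bound over the at most $\epsilon(\sum_{i=1}^{m}2^{d_i-1}+s\cdot d)$ programmed or sampled points gives the stated bound. The delicate step is the adaptive programming: we must argue that $\mathcal{S}$ can always choose which unused index key to bind without contradicting earlier answers. We plan to handle this by binding keys lazily, at the first query that touches a given tuple. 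The same tuple must then be returned whenever $Hist$ indicates a repeat, which holds because real node keys $w\|path$ are deterministic and the search pattern reveals exactly these equalities.
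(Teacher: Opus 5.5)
Your route is genuinely different from the paper's. The paper never builds a simulator from leakage. It hops $\mathsf{G}_0=\mathsf{Game}_{\mathcal{R}}\to\mathsf{G}_1$, replacing $H$ and $Enc$ in Lines 6 and 8 of Algorithm~\ref{RISK:alg:SkQ-tree} by random oracles $O_H,O_E$. It then hops $\mathsf{G}_1\to\mathsf{G}_2$, replacing $H$ in Equation~(\ref{RISK:Equa:RTrapdoor}) by $O_H$; this is $s\cdot d$ calls for a single-cell query, where $s$ is the number of query keywords. Each oracle call costs roughly $poly(\lambda)/2^{\lambda}$. $\mathsf{G}_2$ is $\mathsf{Game}_{\mathcal{S}}$ by definition. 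Multi-cell RSK, NSK and two-phase $k$SK queries are absorbed into expansion factors, and $\mathcal{L}_1=\emptyset$, $\mathcal{L}_2=\{SP(Q),Hist(Q)\}$ are argued informally afterwards. Your first two hybrids do the same work, regrouped and with IND-CPA instead of an idealized $Enc$. Your last hop, to a simulator that sees only leakage, is extra. It is what actually ties the adversary's view to $(\mathcal{L}_1,\mathcal{L}_2)$ in the sense of Curtmola et al.~\cite{Curtmola2006CCS}; the paper's $\mathsf{Game}_{\mathcal{S}}$ still runs on $D$ and the real queries. The paper's route buys brevity and exactly the displayed bound. Yours buys an honest leakage statement.

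That honesty has a price you should state explicitly, because your simulator uses more than the theorem's leakage. At setup it needs $|\hat{\mathbb{T}}|$ and $len$. These are not in $SP=(d,W,H,Enc,Dec)$, so $\mathcal{L}_1$ is not really $\emptyset$. At query time it needs a token-level pattern: the number of tokens per trapdoor (for $T_k^{(2)}$ this depends on the decrypted $|path|$), and which individual values $H_{sk_H}(w\|ID)$ recur across \emph{different} queries. It also needs a tuple-level access pattern. The paper's $SP(Q)$ is only query equality ($sp_{ij}=1$ iff $q_i=q_j$), and its $Hist(Q)$ records touched objects. With those, a non-matching token shared by two distinct range queries cannot be reproduced, e.g.\ ancestor identifiers such as $0$ or $02$ emitted by Algorithm~\ref{RISK:alg:CoverIDs}.

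Smaller points:
IND-CPA yields per-tuple terms $\mathrm{Adv}^{\text{ind-cpa}}_{Enc}(\lambda)$ rather than $poly(\lambda)/2^{\lambda}$. Replacing ciphertexts by uniform strings needs IND\$-CPA, or else have the simulator encrypt $0^{len}$ under its own key. Either way, reaching the displayed bound requires idealizing $Enc$ as the paper does.
The adaptive binding you flag as delicate is straightforward. The cloud never holds $sk_H$ or $sk_E$ and never decrypts, so unused labels are interchangeable. The only bad events are collisions and $\mathcal{A}$ querying $O_H$ under $sk_H$.
$\sum_i 2^{d_i-1}$ does not bound the number of real nodes: a full quadtree of height $d_i$ has at least $4^{d_i-1}$ leaves. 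The paper's count of $\sum_i 2^{d_i}$ oracle calls has the same defect, which only changes the polynomial factor.
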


\begin{proof}
	In RISK, there are three indispensable primitives (i.e., \textbf{RTrapdoor}, \textbf{CQuery}, and \textbf{RQuery}) for achieving secure queries. By composing these primitives, RISK can be regarded as a secure range spatial-keyword (RSK) query system, which also supports secure \emph{k}-nearest neighbor spatial-keyword (\emph{k}SK) queries. If these three primitives are proven IND-CKA2 secure, other primitives can be proved due to the combinable employments of the foregoing primitives.

    In the following, we first prove IND-CKA2 security for secure RSK queries based on the above three primitives. Then, we further promote the IND-CKA2 security for secure rich spatial-keyword queries in RISK.

    \emph{IND-CKA2 security of RSK queries falling in a single cell} is proved in a game hopping manner, in which a sequence of games approaching $\mathsf{Game}_{\mathcal{S}}$ from $\mathsf{Game}_{\mathcal{R}}$ is carefully crafted. The proof is carried out by attempting to distinguish the transcripts of two adjacent games in the sequence. If no polynomial-time adversary can distinguish two adjacent games, the security of $\mathsf{Game}_{\mathcal{R}}$ is equivalent to that of $\mathsf{Game}_{\mathcal{S}}$.

    Throughout the following analysis of the indistinguishability between two adjacent games, the adopted keyed hash function and symmetric encryption algorithm are replaced by random oracles, as the random oracle model is adopted in the proof. A random oracle maintains a triple list $\mathbb{O}=\{(in, sk, out)\}$ to answer a query $in$ with secret key $sk$ and output a randomly generated $out$. Three elements $(in, sk, out)$ are just peremptorily linked with no semantic correlation. Thus, the advantage for any probabilistic polynomial-time adversary in breaking a random oracle is $poly(\lambda)/(2^{\lambda})$ which is negligible~\cite{Canetti2004JACM}. 

	\emph{Game $\mathsf{G}_{0}$} is exactly the real game $\mathsf{Game}_{\mathcal{R}}$. They both output the same transcript, and hence it holds that
	\begin{equation*}
	  \begin{split}
		|\mathsf{Pr}[\mathsf{G}_{0}=1]-\mathsf{Pr}[\mathsf{Game}_{\mathcal{R}}=1]|=0.
	  \end{split}
	\end{equation*}

	\emph{Game $\mathsf{G}_{1}$} is identical to $\mathsf{G}_{0}$ except that $H$ and $Enc$ adopted in lines 6 and 8 of Algorithm~\ref{RISK:alg:SkQ-tree} (from the original paper) are replaced by two orthogonal random oracles $O_H$ and $O_E$. For distinct inputs $in_0$ and $in_1$, both random oracles output random values $out_0$ and $out_1$ respectively. Obviously, $\mathsf{Pr}[out_0=out_1] \leq poly(\lambda)/(2^{\lambda})$ holds by the security of random oracles. Since $H$ and $Enc$ are each called at most $\sum_{i=1}^{m}(2^{d_i})$ times, the advantage for any probabilistic polynomial-time adversary in distinguishing the games is,
	\begin{equation*}
	  \begin{split}
		|\mathsf{Pr}[\mathsf{G}_{1}=1]-\mathsf{Pr}[\mathsf{G}_{0}=1]|\leq \sum_{i=1}^{m}\frac{2^{d_i}\cdot poly(\lambda)}{2^{(\lambda -1 )}}.
	  \end{split}
	\end{equation*}

	\emph{Game $\mathsf{G}_{2}$} precisely follows $\mathsf{G}_{1}$ except that the keyed hash function $H$ adopted in Equation~(\ref{RISK:Equa:RTrapdoor}) (from the original paper) is replaced by random oracle $O_H$ defined in $\mathsf{G}_{1}$. Assume the given query falls in a single cell, there are $d$ elements in $\mathbb{ID}$.  Since the query involves $s$ distinct keywords, $\mathbb{ID}$ thus has a total of $s \cdot d$ elements. As $H$ is executed $s \cdot d$ times, the advantage for any probabilistic polynomial-time adversary in distinguishing the two games is,
	\begin{equation*}
	  \begin{split}
		|\mathsf{Pr}[\mathsf{G}_{2}=1]-\mathsf{Pr}[\mathsf{G}_{1}=1]|\leq \frac{s \cdot d \cdot poly(\lambda)}{2^{\lambda}}.
	  \end{split}
	\end{equation*}
	
	\emph{Game $\mathsf{G}_{2}$} is exactly the simulated game $\mathsf{Game}_{\mathcal{S}}$, since all keyed hash functions and symmetric encryption algorithms in $\mathsf{Game}_{\mathcal{R}}$ are fully replaced by random oracles. Hence, we get that $|\mathsf{Pr}[\mathsf{Game}_{\mathcal{S}}=1]-\mathsf{Pr}[\mathsf{G}_{2}=1]|=0$ holds.

	By generalization, there exists a probabilistic polynomial-time  adversary distinguishing $\mathsf{Game}_{\mathcal{R}}$ from $\mathsf{Game}_{\mathcal{S}}$ with a negligible probability. Hence, we have
	\begin{equation*}
	  \begin{split}
		&|\mathsf{Pr}[\mathsf{Game}_{\mathcal{R}}=1]-\mathsf{Pr}[\mathsf{Game}_{\mathcal{S}}=1]|\\
		\leq & \mathsf{negl}_0(\lambda)=(\sum_{i=1}^{m}2^{d_i-1}+ s \cdot d) \frac{poly(\lambda)}{2^{\lambda}}.
	  \end{split}
	\end{equation*}
	
	\emph{IND-CKA2 security for RSK queries falling in multiple cells}. If a range query falls in multiple cells, it can be viewed as the client packaging multiple queries falling in a single cell. Further, the repeated generating trapdoors will be removed to conserve both bandwidth and computation resources. Hence, by introducing an expansion factor $\epsilon_1$, we have
		\begin{equation*}
	  \begin{split}
		&|\mathsf{Pr}[\mathsf{Game}_{\mathcal{R}}=1]-\mathsf{Pr}[\mathsf{Game}_{\mathcal{S}}=1]|\\
		\leq & \mathsf{negl}_1(\lambda)= \epsilon_1 \cdot \mathsf{negl}_0(\lambda).
	  \end{split}
	\end{equation*}


	\emph{IND-CKA2 security for nearest neighbor spatial-keyword (NSK) queries}. The advantage of any probabilistic polynomial-time adversary in breaking NSK queries is equivalent to that in breaking range queries. The only difference between the two queries is that the radius for the former is fixed to $0$ while the radius for the latter is variable according to the given query. The advantages analyses are identical to those above, since NSK queries must fall in a single cell. Hence, we have
	\begin{equation*}
	  \begin{split}
		|\mathsf{Pr}[\mathsf{Game}_{\mathcal{R}}=1]-\mathsf{Pr}[\mathsf{Game}_{\mathcal{S}}=1]|\leq  \mathsf{negl}_2(\lambda)=\mathsf{negl}_0(\lambda).
	  \end{split}
	\end{equation*}

	\emph{IND-CKA2 security for k\emph{SK} queries}. In RISK, a \emph{k}SK query is accomplished by sequentially executing a NSK query and a RSK query. Where the expansion factor for the range query is $\epsilon_2$. Thus, we have 
	\begin{equation*}
		\begin{split}
		  &|\mathsf{Pr}[\mathsf{Game}_{\mathcal{R}}=1]-\mathsf{Pr}[\mathsf{Game}_{\mathcal{S}}=1]|\\
		  \leq & \mathsf{negl}_3(\lambda)= (\epsilon_2+1) \cdot \mathsf{negl}_0(\lambda).
		\end{split}
	  \end{equation*}

	  \emph{Analysis of leakage functions}. Leakages occur only at \textbf{Trapdoor} and \textbf{Query}, as analyzed below. For a keyed hash function (resp. a symmetric encryption), the probability of producing the same output for two distinct keys (resp. values) in a \emph{k}Q-tree is negligible. Thus, $\mathcal{L}_1=\emptyset$ holds. During \textbf{Trapdoor} and \textbf{Query}, a series $Q=\{q_i\}$ of queries are issued. The query pattern is defined as $SP(Q)={sp_{ij}}$ where $sp_{ij}=sp_{ji}=1$ if $q_i=q_j$ holds or $sp_{ij}=sp_{ji}=0$. It is apparent that $SP(Q)$ is inevitably leaked. $Hist(Q)$ maintains all pairs $<ID, ts>$ for all touched objects when answering $q_i \in Q$. Herein, $ts$ is a timestamp for the time when the object is touched. $Hist(Q)$ is also evidently leaked because the public cloud can observe it by monitoring the stored S\emph{k}Q-tree. The leakage function is $\mathcal{L}_2=\{SP(Q), Hist(Q)\}$.
\end{proof}

\emph{Resistance to Access Pattern Leakage Attacks}. Numerous attacks exploit accurate access patterns~\cite{Lacharite2018SP,Zhang2016USENIXSecurity,Li2023SCUR}, among which the file-injection attack is particularly notable. This attack assumes adversaries can inject forged files embedded with predefined queries and leverage observed access patterns to launch attacks. However, such attacks are inherently ineffective against RISK for two key reasons. First, file-injection attacks depend on the critical assumption that keyword spaces are limited and enumerable, a condition inapplicable to RISK as its query location space is continuous and non-enumerable. Second, RISK leaks imprecise access patterns, as the cloud server returns candidates with false positives. Adversaries thus obtain erroneous patterns when injecting forged files and cannot distinguish between valid and invalid ones, results in failed file-injection attacks. 

\subsection{Complexity analysis}
\label{RISK:Analyses:Complexity}
This section analyzes the complexity of RISK in trapdoor generation and query processing, in terms of query latency and transmission overhead. We consider the baseline scenario where a range query is confined to a single cell with $k_{\max}$ objects (the maximum per cell). For cross-cell queries, the corresponding complexity can be derived via multiplication by a scaling factor, which is omitted here for brevity.

\begin{table*}
	\centering
	\caption{Theoretical comparisons with existing constructions.}
	\label{RISK:table:TheoreticalComparison}
	\begin{threeparttable}
		\begin{tabular}{c|c|c|c|c|c}
			\hline
			Schemes  & Type & Security & Encryption tools\tnote{$\ast$} & Query time\tnote{$\dagger$} & Trapdoor overhead\tnote{$\dagger$}\\
			\hline
			ELCBFR+~\cite{Cui2019ICDE}  & Range & Broken & ASPE & $O(\beta_1 k\log_k (n)) t_{\mathrm{ASPE}}$ & $O(\beta_1 \ell_{B} \lambda)$\\
			PBRQ~\cite{Wang2020INFOCOM}  & Range &  IND-CPA &  HVE & $O(\log_4 (n)(2^{|log_2{4^{d}}|}+\ell_{B}))t_{\mathrm{HVE}}$ & $O(3\ell_{B}\lambda)$\\
			SKSE~\cite{Wang2021TIFS}  & Range &  IND-SCPA &  HVE & $O(\beta_2 k\log_k (n)) t_{\mathrm{HVE}}$ & $O(\beta_2 f \lambda)$\\
			SKQ~\cite{Yang2022ICDCS}  & Range &  IND-SCPA &  EASPE & $O(16n(36+m)) t_{\mathrm{IP}}$ & $O(8(36+m)^2)$\\
			LSKQ~\cite{Yang2022ICDCS}  & Range &  IND-SCPA &  EASPE & $O(2n(44+m))t_{\mathrm{IP}}$ & $(44+m)^2$\\
			RASK~\cite{Lv2023}  & Range & IND-CKA2 & KH \& SE & $O(s)(t_D+t_H) + O(\frac{n}{2^{kd}})t_C$ & $O(s\lambda)$\\
			PBKQ~\cite{Song2024IoTsJ}  & \emph{k}NN & IND-CKA2 & EASPE & $O(|DB(w)|+1) t_{IP} + O(M)t_H + O(t n) t_{E}$ & $O(s\lambda)$\\
			\hline
			\multirow{2}{*}{RISK}  & Range & IND-CKA2 & KH \& SE & $O(s)(t_H + t_L) + O(s)t_L+O(s(2k_{\max}+1))(t_D+t_C)$ & $O(s\lambda)$\\
			  & \emph{k}NN & IND-CKA2 & KH \& SE & $O(s(\eta+1))(t_H + t_L) + O((\eta+1) s(2k_{\max}+1))(t_D+t_C)$ & $O(s(\eta+2)\lambda)$\\
			\hline
		\end{tabular}
		\begin{tablenotes}
			\footnotesize
			\item[$\ast$] SE is symmetric encryption, KH is a keyed hash function, ASPE is an asymmetric scalar product-preserving encryption~\cite{Wong2009SIGMOD}, EASPE is an enhanced ASPE presented in~\cite{Yang2022ICDCS}, and HVE is hidden vector encryption~\cite{Lai2018CCS}.
			\item[$\dagger$] $\beta_1$ and $\beta_2$ are the number of bloom filters in~\cite{Cui2019ICDE} and~\cite{Wang2021TIFS} respectively. $\ell_{B}$ is the size of a bloom filter. $t_{\mathrm{ASPE}}$, $t_{\mathrm{HVE}}$, and $t_{\mathrm{IP}}$ are the running times for a single comparison on encrypted bloom filters in ASPE, HVE, and EASPE, respectively. $f$ is the size of the bitmap in PBRQ.
		\end{tablenotes}
	\end{threeparttable}
\end{table*}

\emph{Complexity of trapdoor generating}. First, for an RSK query, $s$ trapdoors are generated, one for each of its $s$ keywords. For each keyword, there is only a single cell covering the query range, the trapdoor generation time is $O(s) t_H$ and the transmission overhead is $O(s\lambda)$, where $\lambda$ is the size of hash value. Second, for a NSK query, only one cell is involved due to the corresponding node contains all potential NNs. Thus, its trapdoor generation time and transmission overhead match those of the RSK query. Finally, for a \emph{k}SK query, the combined deployment of \textbf{NTrapdoor} and \textbf{\emph{k}Trapdoor} yields a total trapdoor generation time of $O(s) t_H + O(\eta \cdot s) t_H = O(s(\eta+1)) t_H$. Here, $\eta$ is the average number of adjacent cells to the cell containing all query keywords. Transmission overhead comprises $O(s(\eta+1))$ for the aforementioned trapdoors and $O(s)$ for the target cell identifier, leading to a total overhead of $O(s(\eta+2)\lambda)$.

\emph{Query complexity analysis}. First, for a RSK query, the public cloud traverses all $s$ keywords and locates the corresponding cells in the \emph{k}Q-tree. Leveraging perfect hash during \emph{k}Q-tree construction, the time complexity of cell localization is $O(s)t_L$, where $t_L$ is the single localization latency. The located cells are then transmitted to the client for decryption and refinement. The transmission overhead for delivering encrypted candidate objects amounts to $O(s(2k_{\max}+1)\lambda)$, since each located node contains at most $k_{\max}$ objects and $k$ NNs to the representative object (cf. Lines 10-12 of Algorithm~\ref{RISK:alg:kQ-tree}). The time complexity for decryption and refinement at the client is $O(s(2k_{\max}+1))(t_D+t_C)$, where $t_D$ and $t_C$ denote the latency per decryption and comparison operation, respectively. The total time complexity thus equals $O(s)t_L+O(s(2k_{\max}+1))(t_D+t_C)$. Furthermore, the NSK query reduces to the RSK query with $r=0$, and its complexity analysis is consistent with that of the RSK query.

A \emph{k}SK query involves two phases on the public cloud. The first phase executes a NSK query, with time complexity $O(s)t_L+O(s(2k_{\max}+1))(t_D+t_C)$ and transmission overhead $O(s(2k_{\max}+1)\lambda)$. The second phase, which executes $\eta$ consecutive NSK queries, achieves time complexity $O(\eta s)t_L+O(\eta s(2k_{\max}+1))(t_D+t_C)$ and transmission overhead $O(\eta s(2k_{\max}+1)\lambda)$. Overall, the total time complexity of a \emph{k}SK query is $O((\eta+1) s)t_L+O((\eta+1) s(2k_{\max}+1))(t_D+t_C)$ with corresponding transmission overhead $O((\eta+1) s(2k_{\max}+1)\lambda)$.

\subsection{Theoretical comparisons}
\label{RISK:Analyses:TheoreticalComparisons}
This section compares RISK with SOTA schemes for secure spatial-keyword queries, as summarized in Table~\ref{RISK:table:TheoreticalComparison}. 


Notably, all models listed in Table~\ref{RISK:table:TheoreticalComparison} except ELCBFR+ are provably secure under equivalent security level (i.e., IND-SCPA and IND-CKA2). Most of the schemes, however, rely on PPE tools for direct querying over encrypted geo-textual data. Neither HVE nor EASPE has industrial recognition, with no viable industrial-grade encryption alternatives available. In contrast, RISK as well as RASK exclusively adopts industrially standardized cryptographic tools, rendering it better aligned with industrial deployment than SOTA schemes.

Both RASK and RISK outperform SOTA schemes by leveraging highly efficient cryptographic primitives (i.e., KH and SE), whose computational overheads are orders of magnitude lower than that of HVE and EASPE. For secure RSK queries, RISK achieves minimal trapdoor complexity, on par with PBKQ~\cite{Song2024IoTsJ} and RASK~\cite{Lv2023}, and substantially lower than that of SOTA schemes. For secure \emph{k}SK queries, RISK's trapdoor size increases to accommodate access to neighboring cells, and this expansion is bounded by the finite number of adjacent cells. In summary, RISK's efficiency and trapdoor complexity establish it as the top-tier schemes.

\section{Practical extensions}\label{RISK:Extensions}

In this section, we extend RISK to natively support both capabilities.

	

\subsection{Extensions to multi-party scenarios}
\label{RISK:Extensions:MultiUser}
In RISK, the employed cryptographic primitives (i.e., symmetric encryption and keyed hash function) natively preclude multi-user or multi-owner scenarios support. Directly extending RISK to such scenarios poses a critical challenge rooted in the key sharing problem. Fortunately, trusted execution environments (TEEs)~\cite{Wu2025TDSC,Lv2025TKDE,Wang2025TIFS} provide a robust security framework enforced by hardware for sensitive data and programs hosted on the cloud. Endowed with secure enclave and remote attestation capabilities, TEEs effectively resolve this issue.

To extend RISK for multi-user scenarios, secret keys $sk_H$, $sk_E$, and the system parameter $SP$ are securely embedded into the TEE of the public cloud.  Leveraging the TEE's secure enclave, external adversaries are denied access to these secret keys. When multiple clients initiate spatial-keyword queries, each negotiates a session key $sk_S$ with the TEE. The client encrypts the original query using $sk_S$ and transmits it to the TEE, which then recovers the query, acts as a proxy client, and further  execute RISK's concrete protocol to retrieve the query result. Finally, the TEE encrypts the result with $sk_S$ and returns it to the client. This mechanism enables secure spatial-keyword querying for multiple clients.

Multi-owner scenarios are supported similarly. Specifically, all data owners securely transmit their geo-textual objects to the TEE, which acts a proxy owner and constructs the secure index as specified in Section~\ref{RISK:Cons}. Furthermore, the secure index is stored outside the TEE. Authorized clients can then initiate spatial-keyword queries over this index.

\emph{Security analysis of multi-party extensions.} Each client can only access its own query and corresponding result, as the communication between the client and TEE is encrypted using an independent session key $sk_S$ and the former cannot touch $sk_H$ and $sk_E$. So, malicious clients are isolated from the security index, thereby preventing collusion among them. While the TEE can access all queries and results across multiple clients, its full trustworthiness ensures no leakage of sensitive information. Accordingly, the security guarantees of RISK in multi-user scenarios are equivalent to those in single-user scenarios. The same security assurance applies to multi-owner scenarios similarly.

\subsection{Extensions for data updates}
\label{RISK:Extensions:DataUpdates}
A further critical challenge for RISK in practical deployment is data updates. Since data owners often need to modify their outsourced objects, we extend RISK to support dynamic data updates based on the primitives introduced in Section~\ref{RISK:Cons}.

In RISK, small-scale data updates are implemented via interactions between the data owner and public cloud. The core challenges lie in locating all existing cells for which the newly inserted object $o^*$ serves as a nearest neighbor, and updating the secure index to support both secure RSK and \emph{k}SK queries. For each keyword $w^* \in o^*.\psi$, the data owner first issues a \emph{k}SK query with $q_k^*=(o^*.p, w^*, k_{\max})$. This query retrieves all objects $\tilde{\Delta}$ within and adjacent to the cells containing $o^*.p$. The retrieval radius is then expanded, and all objects $\tilde{\Delta}^*$ in the newly covered cells are derived using additional \emph{Trapdoor$^+$}. This expansion repeats until $o^*.p$ no longer qualifies as a nearest neighbor for any object in $\tilde{\Delta}^*$. If expansion proceeds, $\tilde{\Delta}$ is updated as $\tilde{\Delta}=\tilde{\Delta}\cup \tilde{\Delta}^*$. Finally, the NN set for each touched cell in the S\emph{k}Q-tree is recalculated, and $o^*$ is inserted. For the deletion of an existing object $o'$, the process is similar, the only distinction is that $o'$ is removed after updating NN sets for all touched cells.

Notably, the S\emph{k}Q-tree incurs no splits during updates, making it efficient only for small-scale data updates. Large-scale updates may induce data skew, and improving efficiency for such updates remains an interesting open problem. From a security perspective, the update operations only use the primitives in Section~\ref{RISK:Cons}, whose security is proved in Theorem~\ref{RISK:Theorem:Security}. Thus, the security of RISK remains unchanged under data updates, and its security guarantee holds.

\section{Experimental evaluations}
\label{RISK:Experi}
Both RISK and SOTA schemes are evaluated on the same datasets to demonstrate the superiority of RISK.

\subsection{Experimental environments and datasets}
\label{RISK:Experi:Environments}
Three real-world datasets are collected from social and map applications, differing in spatial distribution, keyword frequency, and other characteristics as summarized in Table~\ref{RISK:table:Datasets}. Specifically, \emph{Twitter} comprises geotagged tweets posted during the $2020$ U.S. election. \emph{NewYork} and \emph{Paris} contain locations with description in New York City and Paris City, respectively. All the three real-world datasets exhibit skewed keyword frequency distributions with large mean squared error (MSE). For a more comprehensive evaluation, we additionally introduced a non-skewed synthetic dataset denoted as \emph{Gaussian}, whose spatial coordinates conform to a Gaussian distribution. 

\begin{table}[!t]
	\centering
	\caption{Overview of the employed real datasets.}
	\label{RISK:table:Datasets}
	\begin{tabular}{crrrr}
		\hline
		\multirow{2}{*}{Dataset}  & \multirow{2}{*}{Total objects}   & \multicolumn{1}{c}{Total distinct}       & \multicolumn{2}{c}{Keyword}\\
		\cline{4-5}
		                          &                                  & \multicolumn{1}{c}{keywords}             & Freq.      & MSE \\
		\hline
		\emph{Twitter}~\cite{URLTwitter} & $800,437$       & $1,185,558$           & $13$            &    $671$  \\
		\emph{NewYork}~\cite{URLNewYork} & $4,181,477$       & $63,854$              & $622$        &   $31,759$   \\
		\emph{Paris}~\cite{URLParis}     & $5,711,822$       & $103,110$           & $739$        &  $45,969$ \\
		\emph{Gaussian}   & $10,000,000$ & $9,074,395$ & $13.77$ & $11$ \\
		\hline
	\end{tabular}
\end{table}

All experiments are conducted on a machine equipped with two Intel E5-2630v4 3.1 GHz CPUs and 384 GB of DDR4 RAM. A query set of $100$ objects is randomly sampled from each dataset. All performance comparisons between RISK and SOTA schemes are reported as the mean across $5$ independent runs per query instance. The representative object for each cell is its center object as widely adopted in clustering~\cite{Xie2022CSUR,Zhou2024CSUR}. 

\subsection{Parameter tuning}
\label{RISK:Experi:Parameters}
In RISK, $k_{\max}$ denotes the maximum number of objects per representative object in a cell. According to \emph{k}Q-tree's characteristics, $k_{\max}$ directly affects the tree depth, which in turn dictates trapdoor count and query response time. To optimize RISK's performance, $k_{\max}$ is tuned from $2$ to $100$ during dataset partitioning. For this process, the RSK query range is fixed at $1\%$  and the number of returned NNs is set to $k_{\max}$. Tuning results are summarized in Fig.~\ref{RISK:fig:Parameter}.

\begin{figure}
	\centering
	\includegraphics[width=85mm]{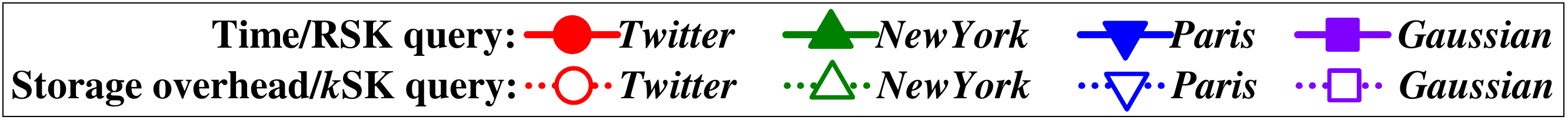}

	\subfigure[Index building performances.]{\includegraphics[width=42.5mm]{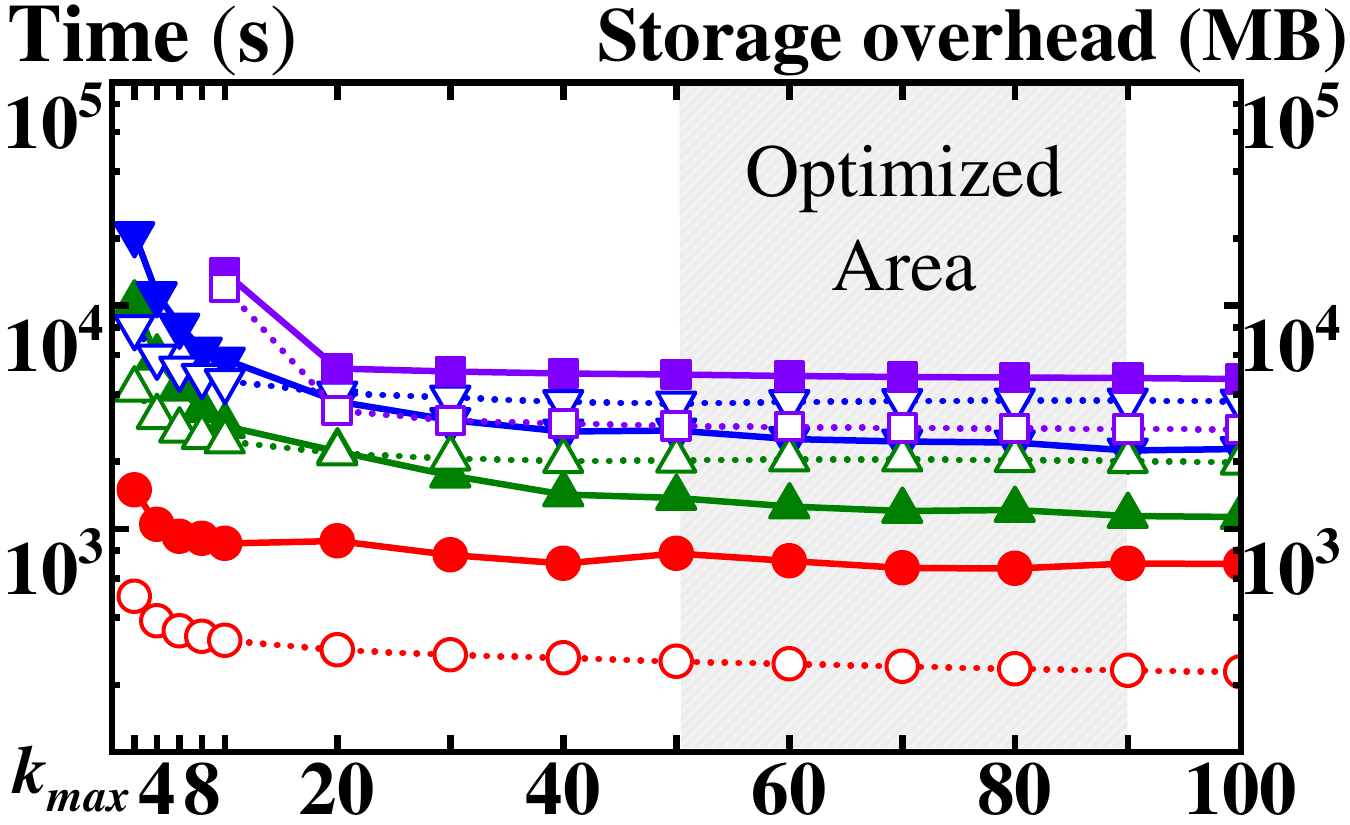}\label{RISK:fig:Parameter:IndexBuilding}}
	\subfigure[Communication performance.]{\includegraphics[width=42.5mm]{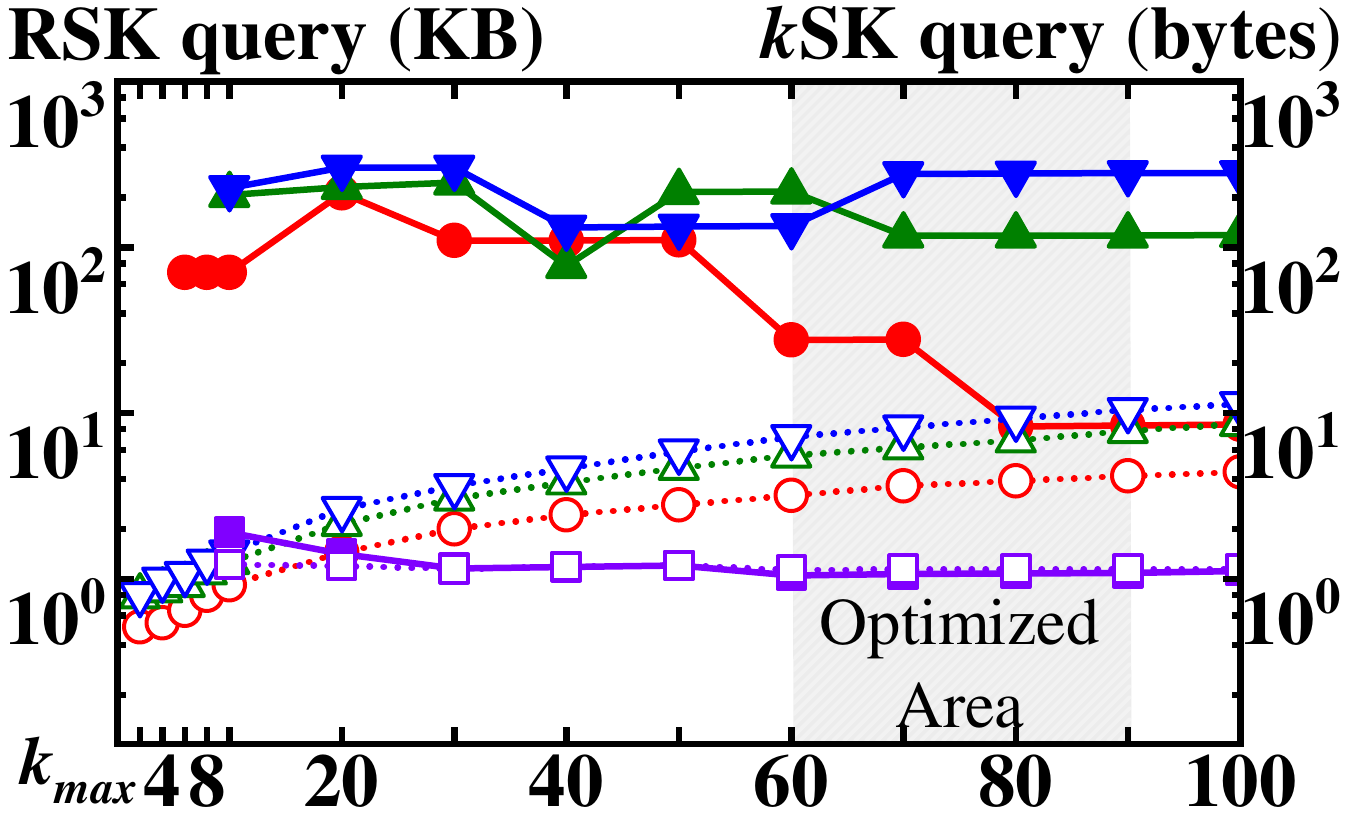}\label{RISK:fig:Parameter:Communication}}\\
	\subfigure[Client response time.]{\includegraphics[width=42.5mm]{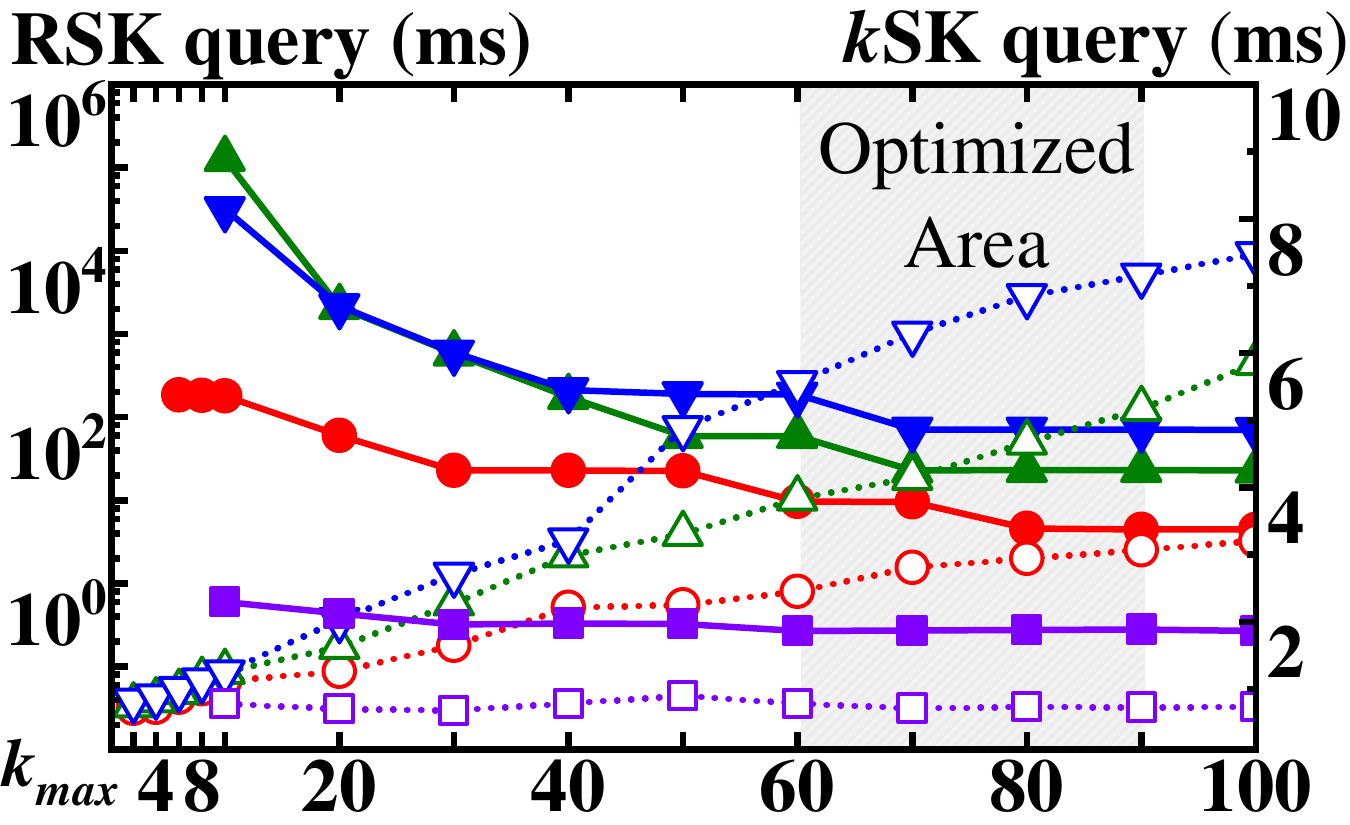}\label{RISK:fig:Parameter:ClientTime}}
	\subfigure[Cloud response time.]{\includegraphics[width=42.5mm]{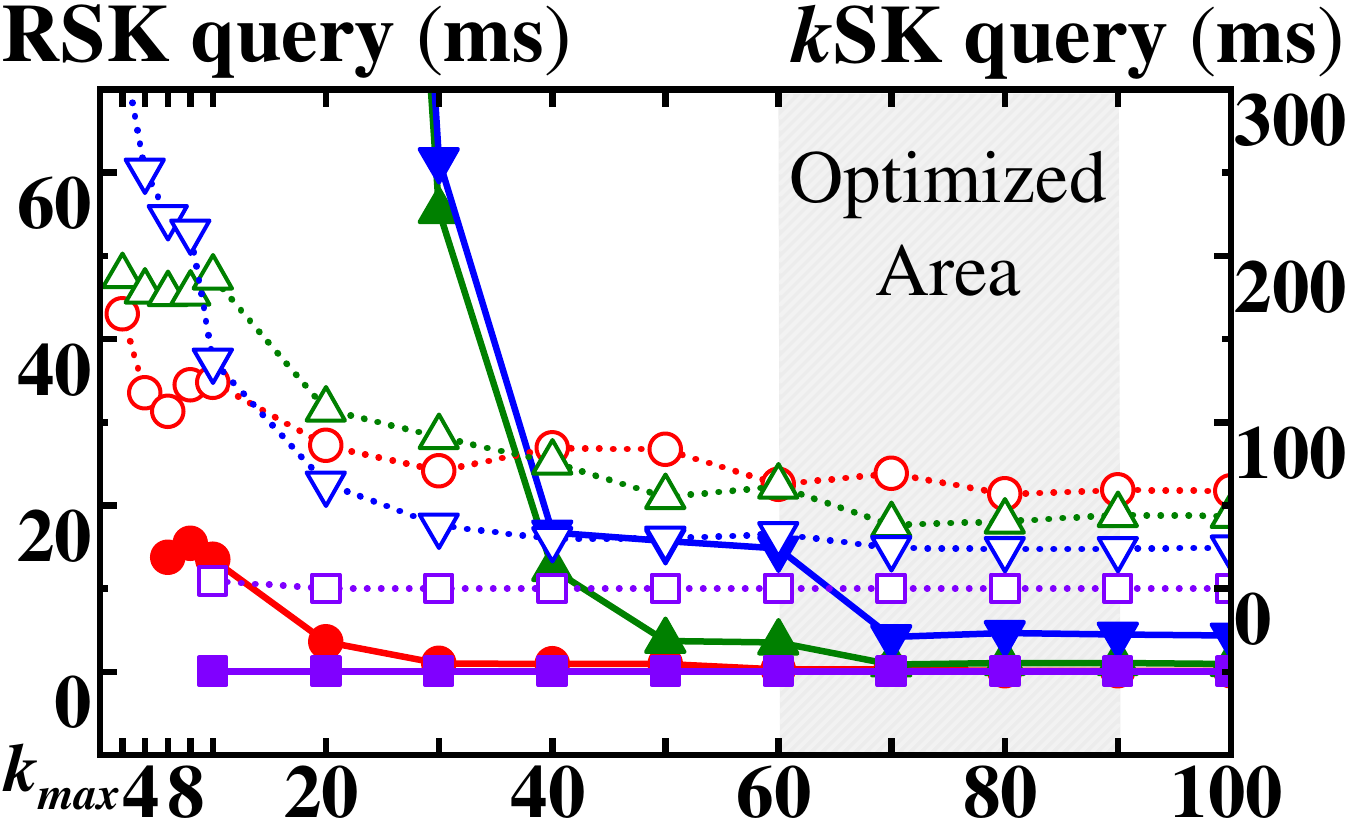}\label{RISK:fig:Parameter:CloudTime}}
	\caption{The parameter tuning results for RISK. }
	\label{RISK:fig:Parameter}
\end{figure}


As shown in Fig.~\ref{RISK:fig:Parameter:IndexBuilding}, as $k_{\max}$ increases, the number of touched nodes decreases, and both the index construction time and cloud storage cost decrease. Communications cost consists of trapdoor overhead and candidate overhead. The \emph{k}Q-tree height is inversely proportional to the number of generated trapdoors and directly proportional to the number of located candidates. As illustrated in Fig.~\ref{RISK:fig:Parameter:Communication}, for both RSK and \emph{k}SK queries, communication performance asymptotically stabilizes when $k_{\max}\geq 60$.

Client and cloud response times are illustrated in Fig.~\ref{RISK:fig:Parameter:ClientTime} and Fig.~\ref{RISK:fig:Parameter:CloudTime}, respectively. For RSK queries, increasing $k_{\max}$ significantly reduces response time, as the number of trapdoors required to cover the target area drops sharply. Thus, RSK query response times for both sides decrease dramatically with increasing $k_{\max}$. For \emph{k}SK queries, larger $k_{\max}$ values greatly reduce trapdoor matching time on the cloud, which dominates overall \emph{k}SK query response time, thus decreasing cloud response time. Moreover, increasing $k_{\max}$ sharply elevates the number of generated trapdoors, which in turn significantly increases client response time. This increment, however, is limited to several milliseconds. 

Based on the above performance observations, optimized areas are marked in Fig.~\ref{RISK:fig:Parameter} to highlight the optimum $k_{\max}$ values. For performance optimization, $k_{\max}$ is fixed at $80$ across all three datasets in the subsequent experiments. The query range and number of returned NNs are tuned within $[1\%,5\%]$ and $[2,10]$, respectively, with the number of keywords per query fixed at $2$. To further reduce cloud storage costs, the number of NNs per representative object is set to $10$.

\subsection{Experimental comparisons}
\label{RISK:Experi:Comparisons}
Herein, RISK is compared with SOTA schemes for RSK and \emph{k}SK queries on the \emph{Twitter} dataset. PBRQ~\cite{Wang2020INFOCOM}, SKQ~\cite{Yang2022ICDCS}, LSKQ~\cite{Yang2022ICDCS}, SKSE~\cite{Wang2021TIFS} and PBKQ~\cite{Song2024IoTsJ} are excluded from evaluations on the \emph{NewYork}, \emph{Paris}, and \emph{Gaussian} datasets, as all these schemes suffer from either excessive query latency ($>100$ s) or exorbitant storage overhead ($>500$ GB). Such limitations render them impractical for large-scale deployment, with their performance metrics on such datasets unavailable.

\begin{figure}
	\centering
	\includegraphics[width=85mm]{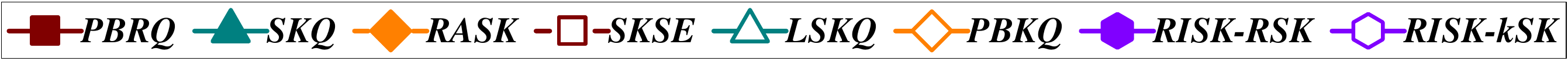}
	\subfigure[Cloud storage overhead.]{\includegraphics[width=42.5mm]{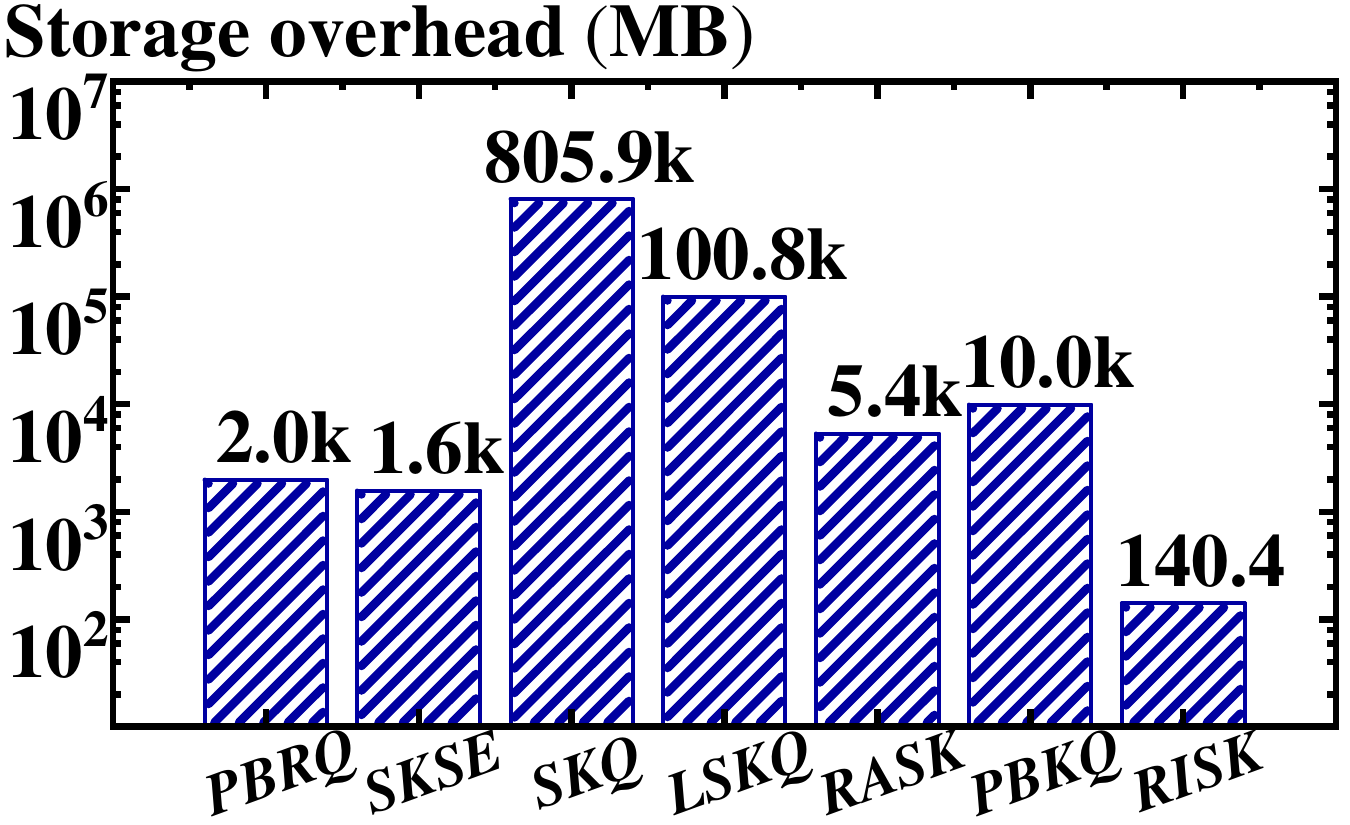}\label{RISK:fig:Comparisons:Storage:Twitter}}
	\subfigure[Total response time.]{\includegraphics[width=42.5mm]{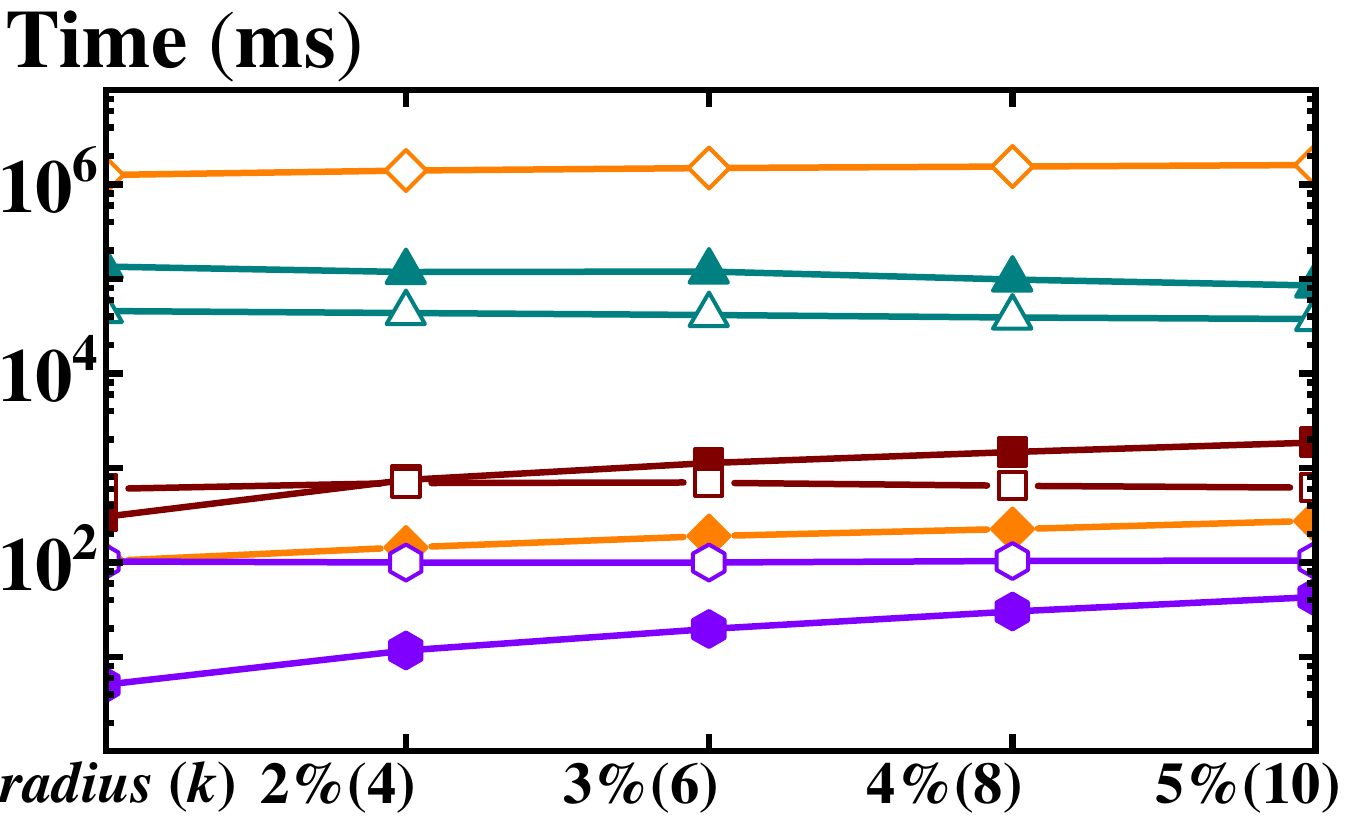}\label{RISK:fig:Comparisons:Response:Twitter}}\\
	\subfigure[Trapdoor overhead.]{\includegraphics[width=42.5mm]{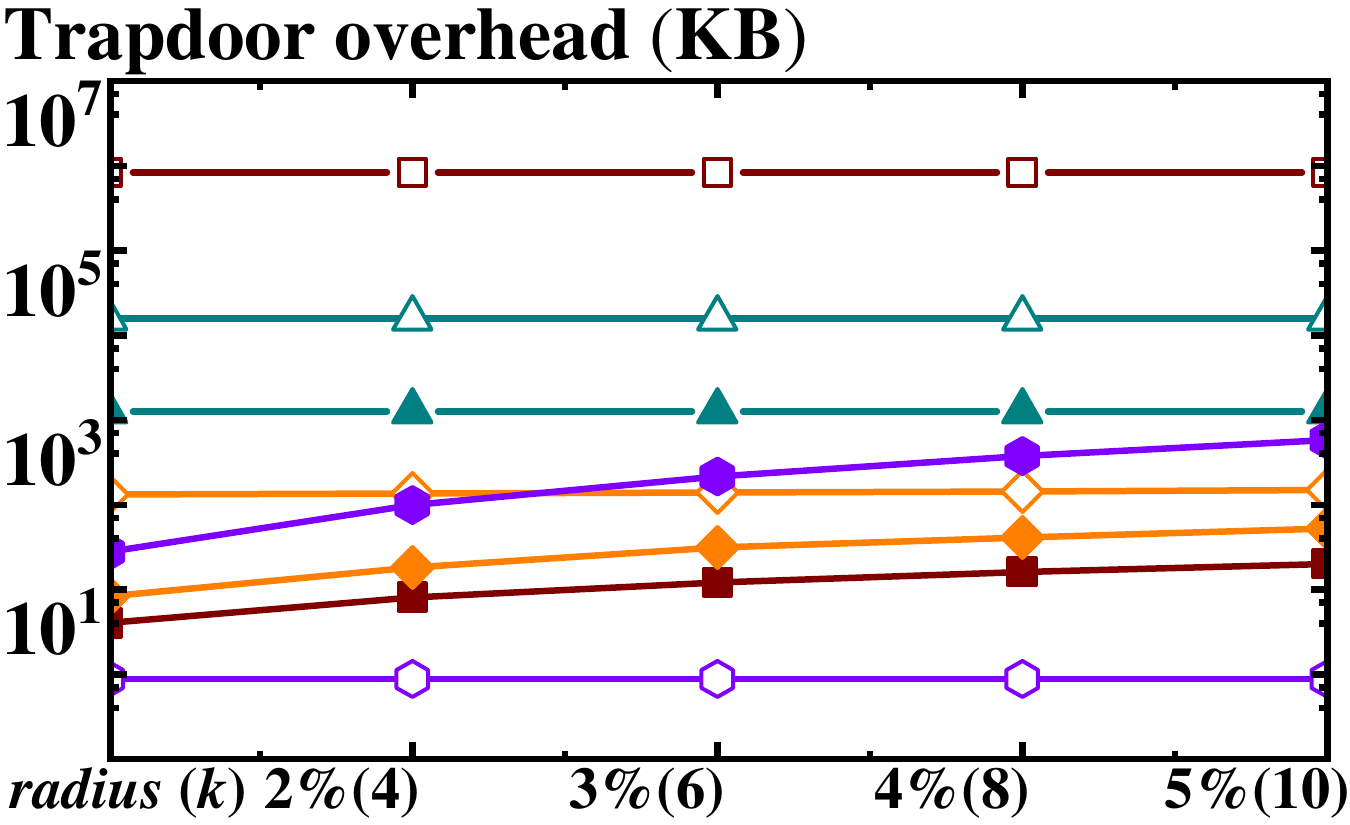}\label{RISK:fig:Comparisons:TrapOver:Twitter}}
	\subfigure[Trapdoor generation time.]{\includegraphics[width=42.5mm]{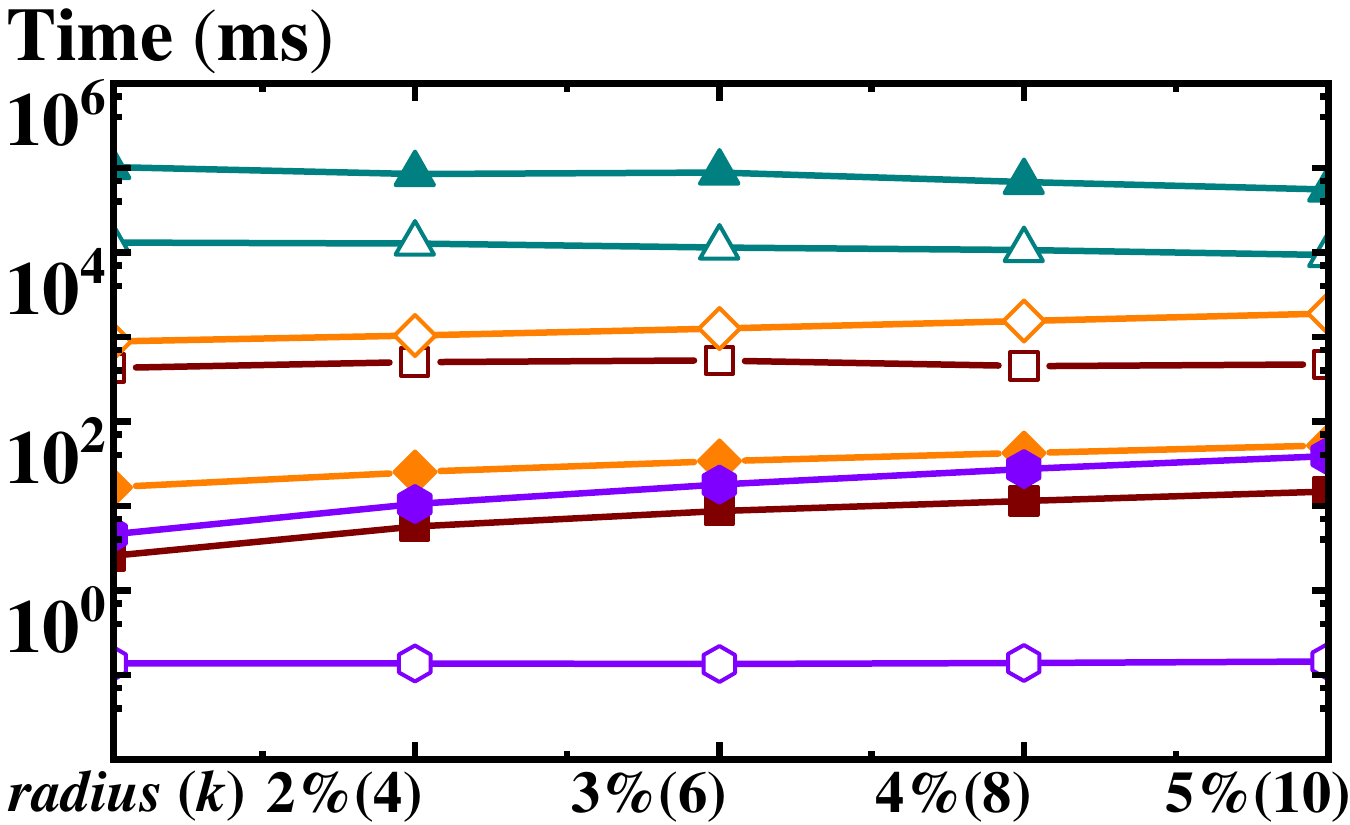}\label{RISK:fig:Comparisons:TrapTime:Twitter}}
	\caption{Experimental comparisons on \emph{Twitter} dataset.}
	\label{RISK:fig:Comparisons}
\end{figure}

\emph{Storage cost comparisons.} As illustrated in Fig.~\ref{RISK:fig:Comparisons:Storage:Twitter}, RISK reduces storage costs by $1$--$3$ orders of magnitude, primarily due to minimized encrypted objects overhead. Unlike other SOTA schemes (except RASK) that employ PPEs (i.e., ASPE and HVE) inducing substantial storage expansion. RISK adopts keyed hash function and symmetric encryption to maintain ciphertext lengths nearly identical to plaintext.

\emph{Total response time comparisons.} As shown in Fig.~\ref{RISK:fig:Comparisons:Response:Twitter}, for RSK queries, RISK reduces total response time by $0.8$--$4.4$ orders of magnitude relative to SOTA schemes. Response time scales with query radius, as the volume of returned encrypted candidates increases. For \emph{k}SK queries, RISK achieves $4$ orders of magnitude reduction compared with PBKQ~\cite{Song2024IoTsJ}. Fixed returned cell counts stabilize candidate set sizes, ensuring consistent response time despite increasing $k$, attributed to  efficient encryption and the \emph{k}Q-tree robust filtering.

\emph{Trapdoor generation performances comparisons.} As depicted in Figs.~\ref{RISK:fig:Comparisons:TrapOver:Twitter} and \ref{RISK:fig:Comparisons:TrapTime:Twitter}, RISK matches SOTA trapdoor generation performance (RASK and PBRQ) for RSK queries. By contrast, it reduces \emph{k}SK query trapdoor generation time by $4$ orders of magnitude compared to PBKQ, as the keyed hash function and symmetric encryption employed in RISK significantly outperform EASPE and HVE.

Overall, RISK delivers remarkable advantages in storage overhead, with a reduction of at least $1$ order of magnitude. It also achieves significant speedups in total response time over all SOTA methods, with no less than $0.5$ order of magnitude for RSK queries and up to $4$ orders of magnitude for \emph{k}SK queries. Notably, RISK first natively supports secure RSK and \emph{k}SK queries, whereas existing schemes are restricted to one.

\subsection{Experimental performances}
\label{RISK:Experi:Performances}
We evaluate RISK on three real-world and one synthetic datasets, using RASK (for RSK queries) and PBKQ (for \emph{k}SK queries) as baselines. PBKQ is not evaluable on the \emph{NewYork}, \emph{Paris}, and \emph{Gaussian} datasets, as its prohibitive memory demands exceed the physical memory of the test machine.

\begin{figure}
	\centering
	\includegraphics[width=85mm]{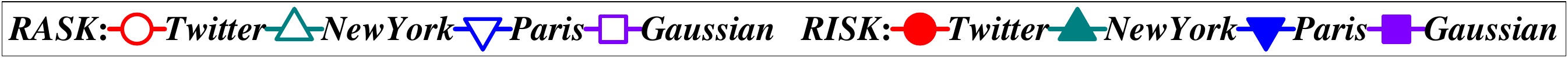}
	\subfigure[Cloud time for RSK query.]{\includegraphics[width=42.5mm]{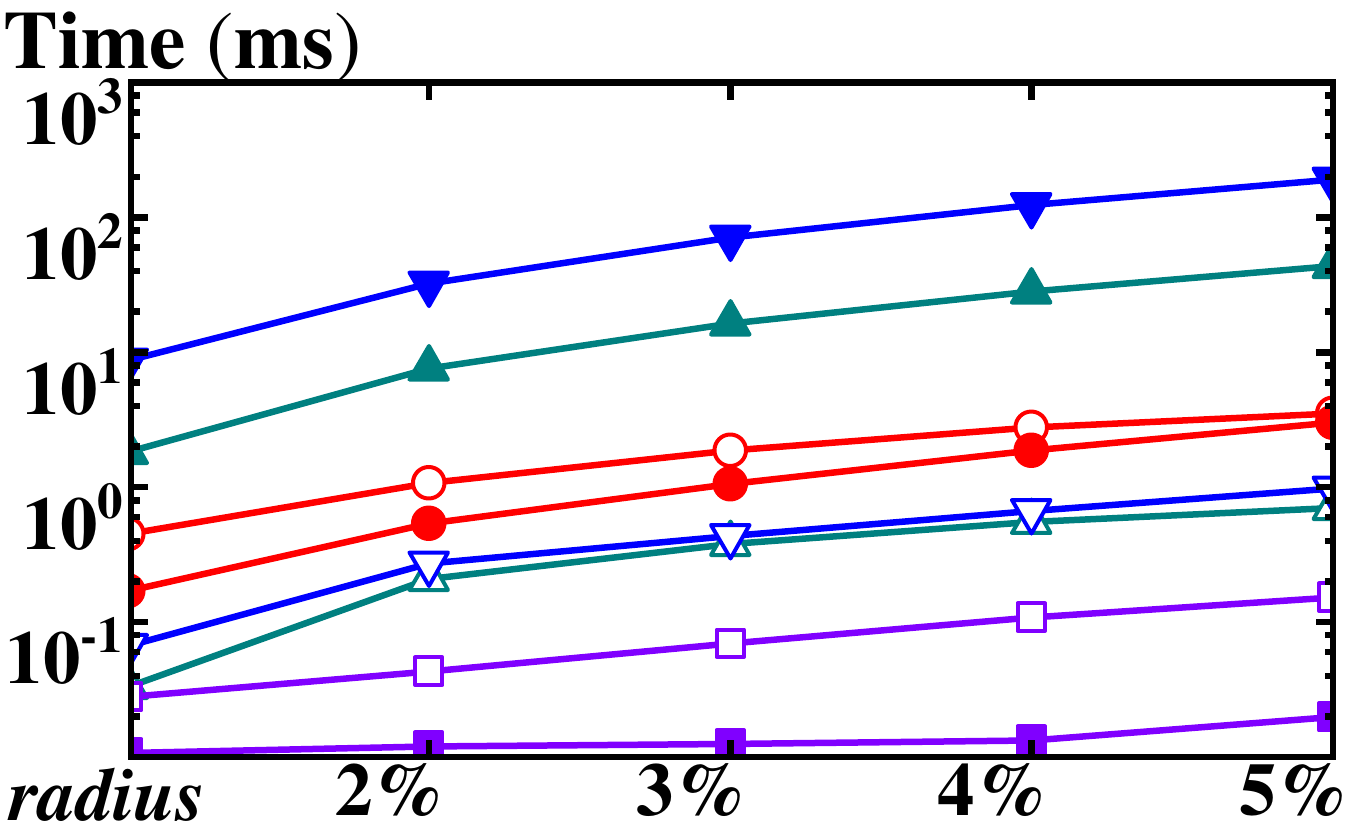}\label{RISK:fig:Performance:RSK:CloudRSK}}
	\subfigure[Client time for RSK query.]{\includegraphics[width=42.5mm]{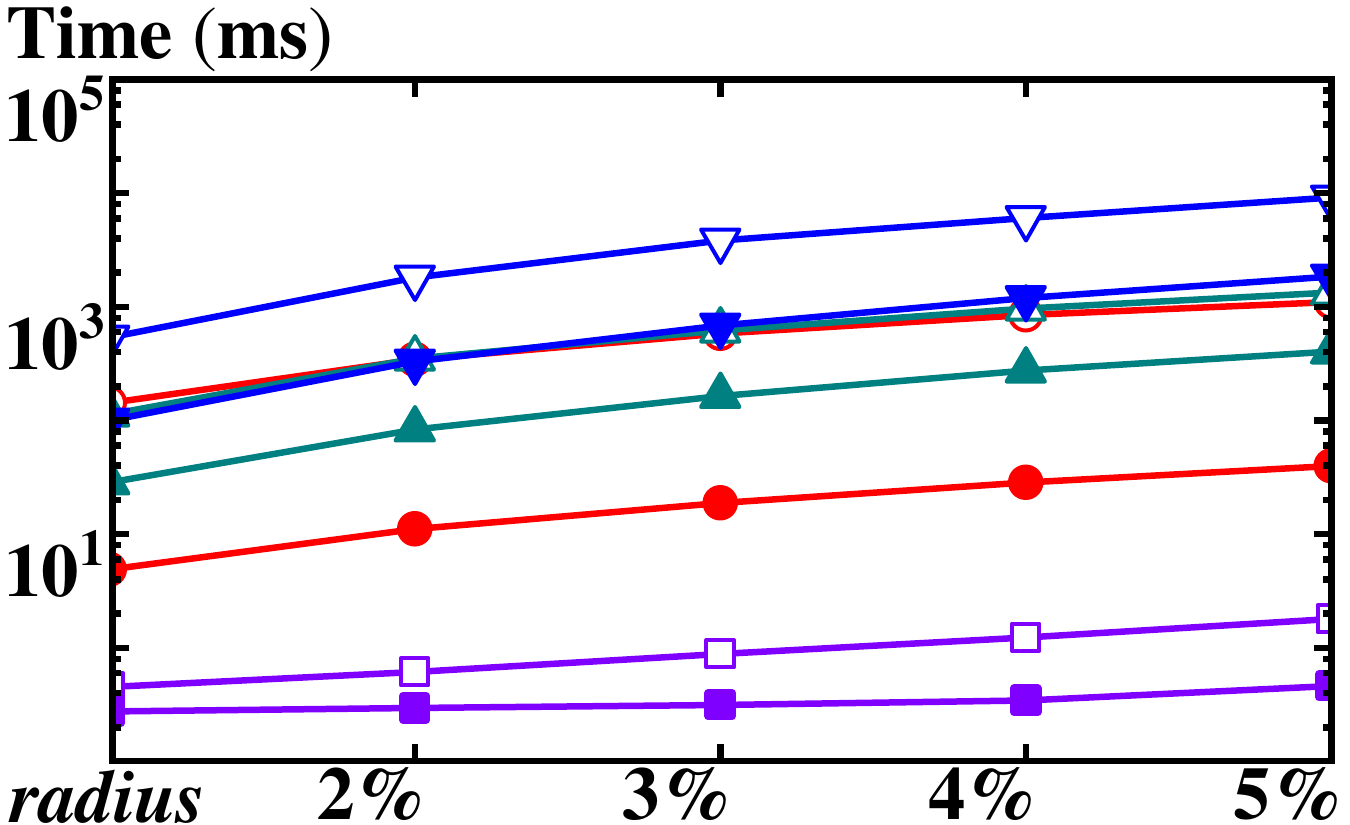}\label{RISK:fig:Performance:RSK:ClientRSK}}\\
	\subfigure[RSK's trapdoor overhead.]{\includegraphics[width=42.5mm]{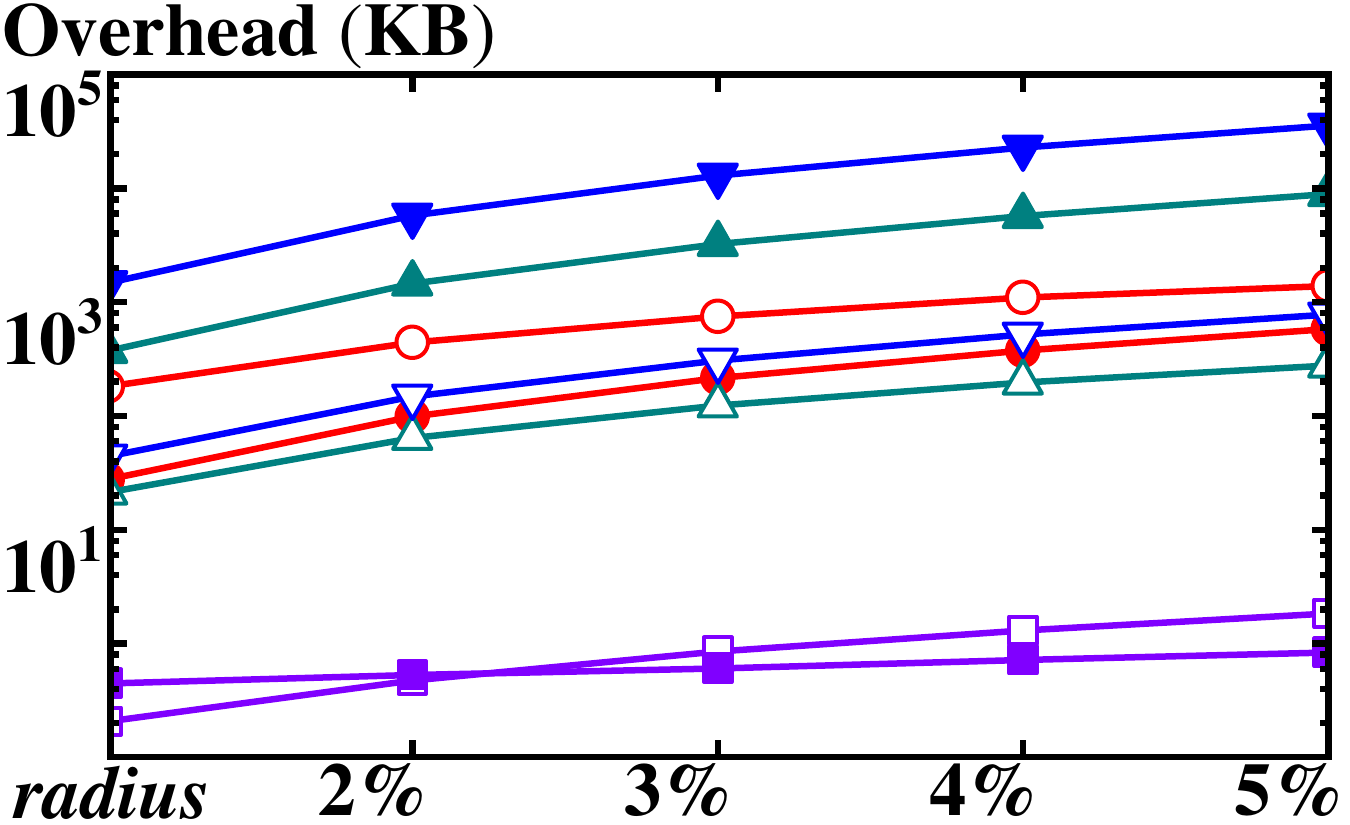}\label{RISK:fig:Performance:RSK:Trapdoor_RSK}}
	\subfigure[RSK's candidate overhead.]{\includegraphics[width=42.5mm]{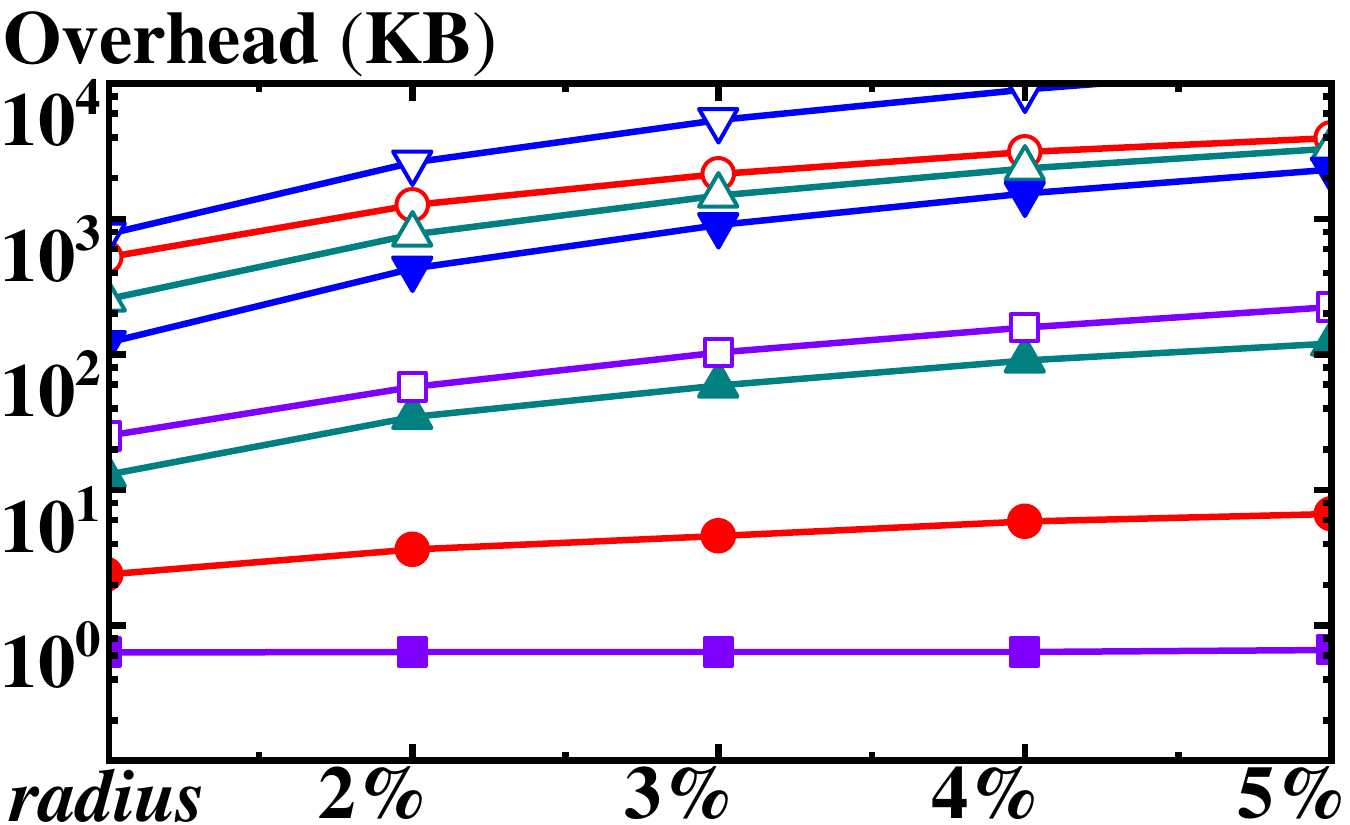}\label{RISK:fig:Performance:RSK:Candidates_RSK}}
	\caption{The performances for RSK query.}
	\label{RISK:fig:Performance:RSK}
\end{figure}

For RSK queries, total response time is the sum of cloud time and client time. RISK achieves a one-order-of-magnitude response time reduction over RASK across all these datasets, a discrepancy validated by the cloud-client processing time breakdown in Figs.~\ref{RISK:fig:Performance:RSK:CloudRSK} and \ref{RISK:fig:Performance:RSK:ClientRSK}. RISK's trapdoor cost correlates positively with dataset scale (Fig.~\ref{RISK:fig:Performance:RSK:Trapdoor_RSK}). Larger datasets yield finer S\emph{k}Q-Tree partitioning, increasing trapdoor element counts. Overall, RISK incurs higher trapdoor cost than RASK due to differing region partitioning mechanisms. RISK adopts S\emph{k}Q-Tree partitioning strategy independent of data distribution, while RASK's partitioning adapts dynamically, causing trapdoor cost volatility across dataset sizes and skew. However, S\emph{k}Q-Tree's robust textual filtering enables RISK to efficiently prune geo-textual objects not described by $q_r.\psi$. Consequently, RISK's encrypted candidates transmission cost is $0.7$ to $2.7$ orders of magnitude lower than RASK's (Fig.~\ref{RISK:fig:Performance:RSK:Candidates_RSK}). 

For \emph{k}SK queries, RISK outperforms PBKQ~\cite{Song2024IoTsJ} by $4$ orders of magnitude in response time, as depicted in Figs.~\ref{RISK:fig:Performance:KSK:CloudkSK} and \ref{RISK:fig:Performance:KSK:ClientkSK}. For both RISK and PBKQ, cloud-side response time exceeds that of the client-side. RISK generates trapdoors that are approximately $2$ orders of magnitude shorter than those of PBKQ (Fig.~\ref{RISK:fig:Performance:KSK:Trapdoor_kSK}), an advantage stemming from its avoidance of PPE. While RISK's candidate transmission overhead is significantly higher than that of PBKQ due to the introduction of false positives, these overheads are typically limited to several KBs, which is acceptable for most real-world applications (Fig.~\ref{RISK:fig:Performance:KSK:Candidates_kSK}). 

\begin{figure}[!t]
	\centering
	\includegraphics[width=85mm]{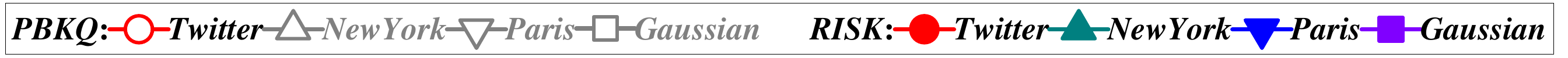}
	\subfigure[Cloud time for \emph{k}SK query.]{\includegraphics[width=42.5mm]{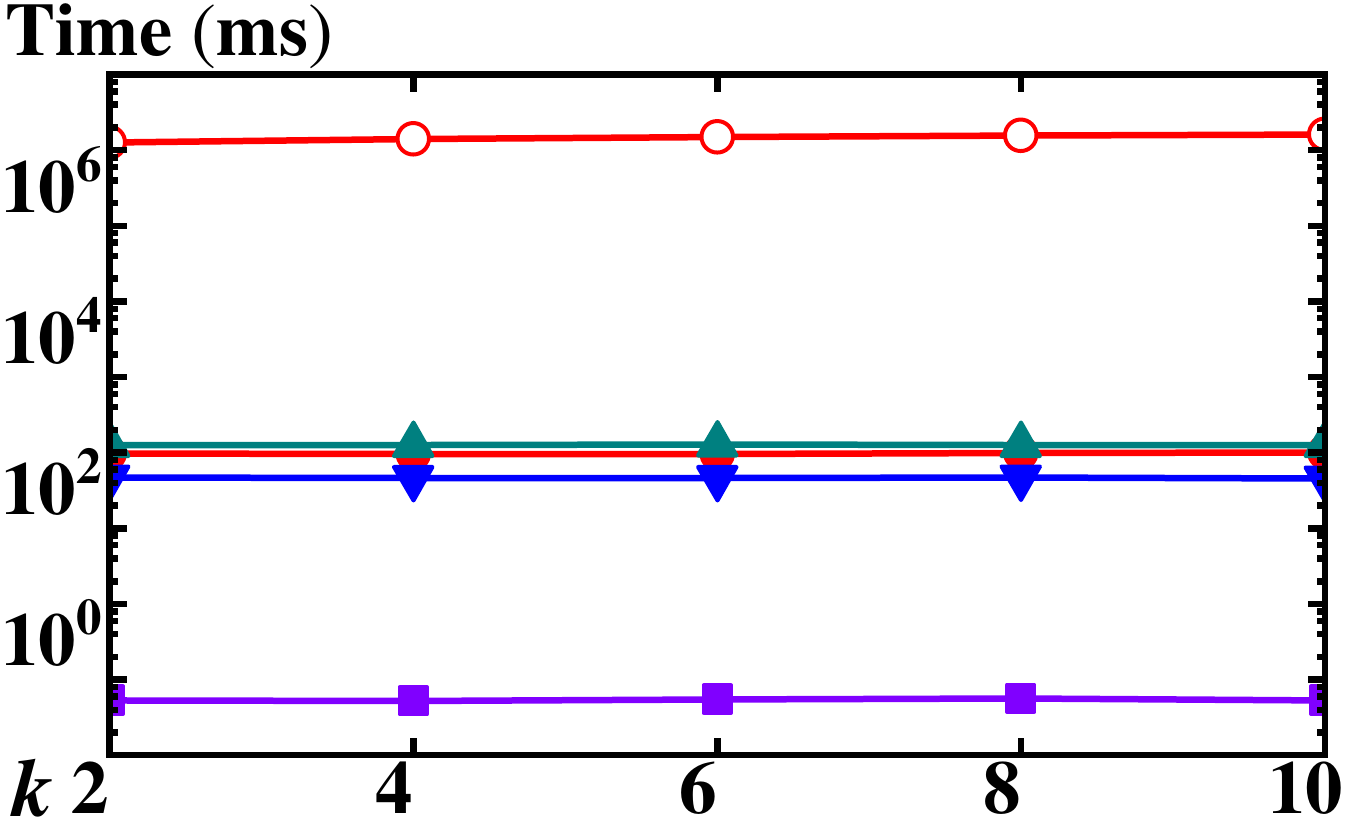}\label{RISK:fig:Performance:KSK:CloudkSK}}
	\subfigure[Client time for \emph{k}SK query.]{\includegraphics[width=42.5mm]{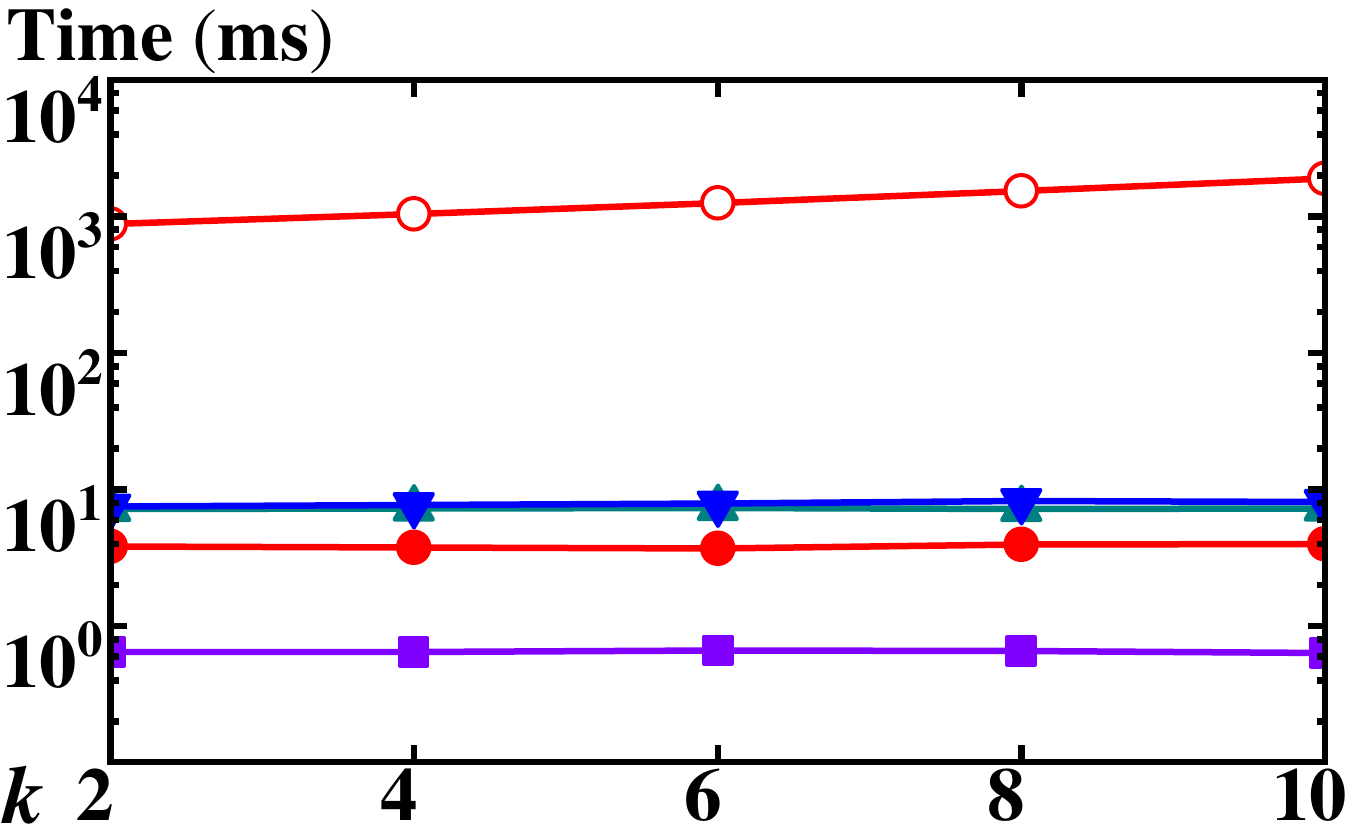}\label{RISK:fig:Performance:KSK:ClientkSK}}\\
	\subfigure[\emph{k}SK's trapdoor overhead.]{\includegraphics[width=42.5mm]{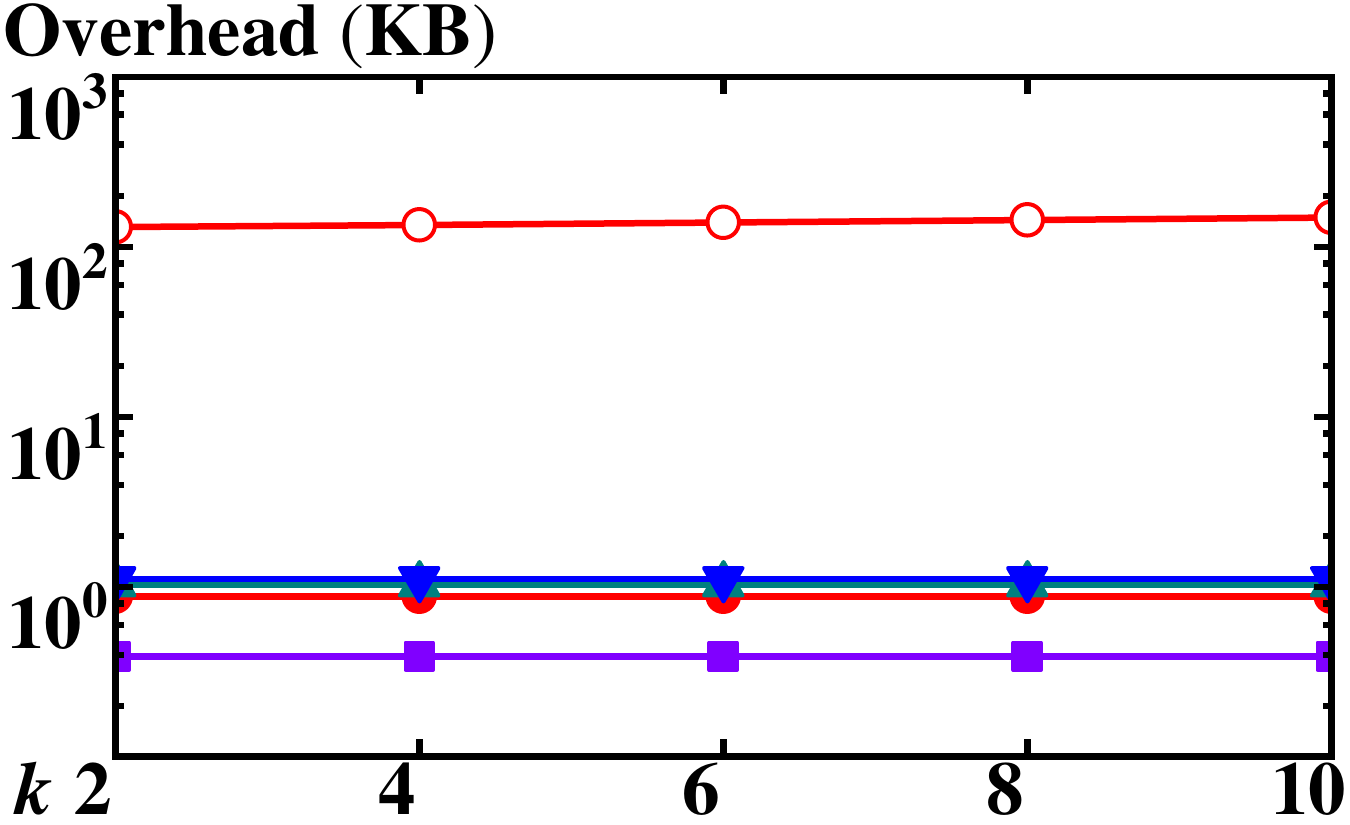}\label{RISK:fig:Performance:KSK:Trapdoor_kSK}}
	\subfigure[\emph{k}SK's candidate overhead.]{\includegraphics[width=42.5mm]{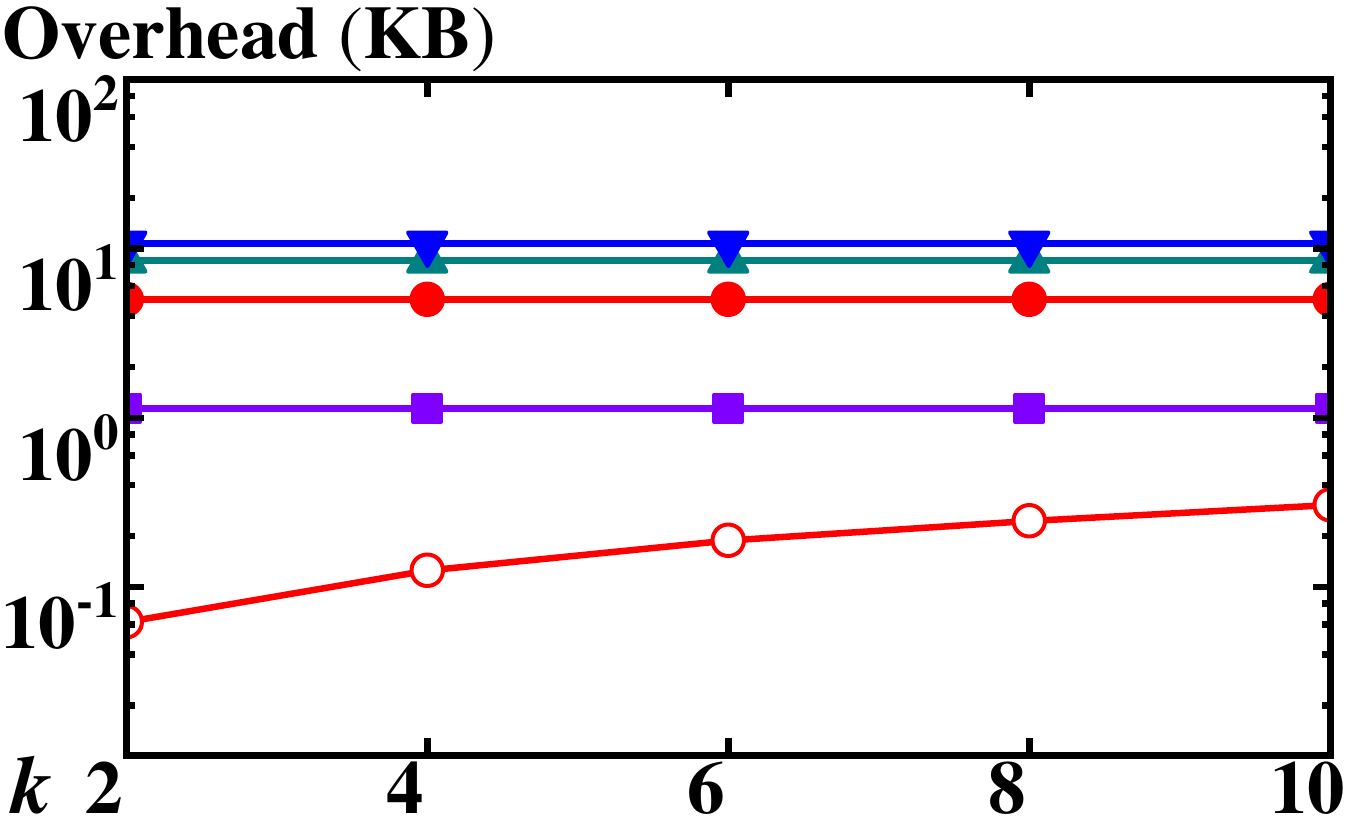}\label{RISK:fig:Performance:KSK:Candidates_kSK}}
	\caption{The performances for \emph{k}SK query.}
	\label{RISK:fig:Performance:KSK}
\end{figure}

\emph{Storage cost.} Index construction is executed at the client, with dominant overhead attributed to computing NNs to the representative objects and spatial division. As illustrated in Fig.~\ref{RISK:fig:Performance:Storage:Time}, RASK and RISK achieve nearly identical index building times, outperforming PBKQ by $0.4$ orders of magnitude. On the public cloud, storage overhead is dominated by hash values and multi-copy storage for each geo-textual object. Storage performance is summarized in Fig.~\ref{RISK:fig:Performance:Storage:Overhead}. Leveraging the high storage efficiency of the \emph{k}Q-tree, RISK achieves drastically lower storage costs than RASK and PBKQ, with storage space savings ranging from a minimum of $91\%$ to a maximum of $97\%$ relative to baseline methods.

\begin{figure}[!t]
	\centering
	\includegraphics[width=85mm]{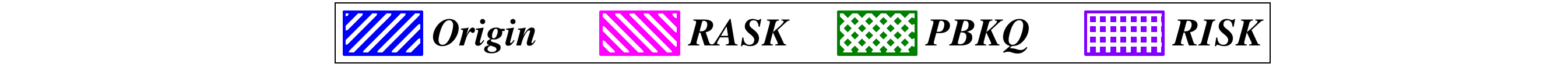}
	\subfigure[Index building time.]{\includegraphics[width=42.5mm]{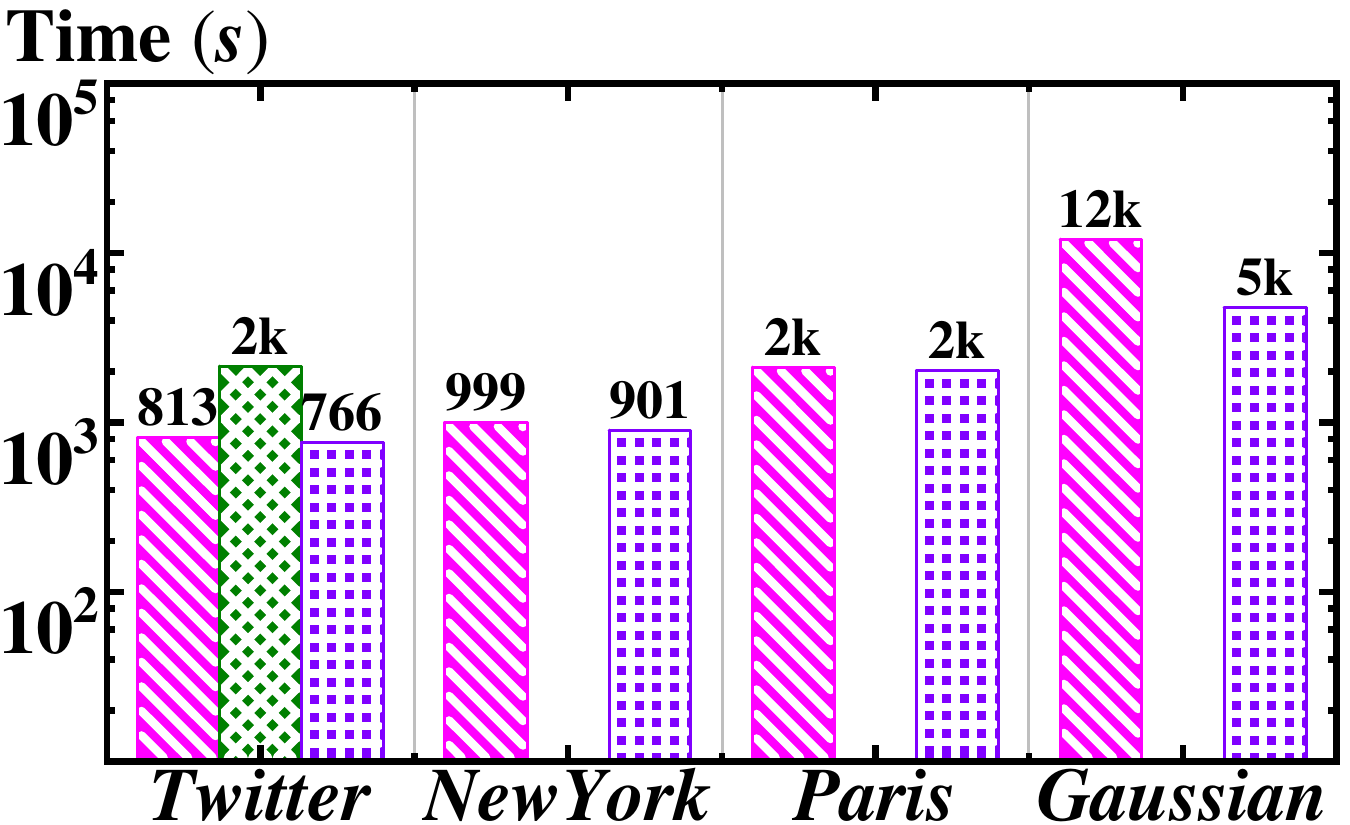}\label{RISK:fig:Performance:Storage:Time}}
	\subfigure[Storage overhead for cloud.]{\includegraphics[width=42.5mm]{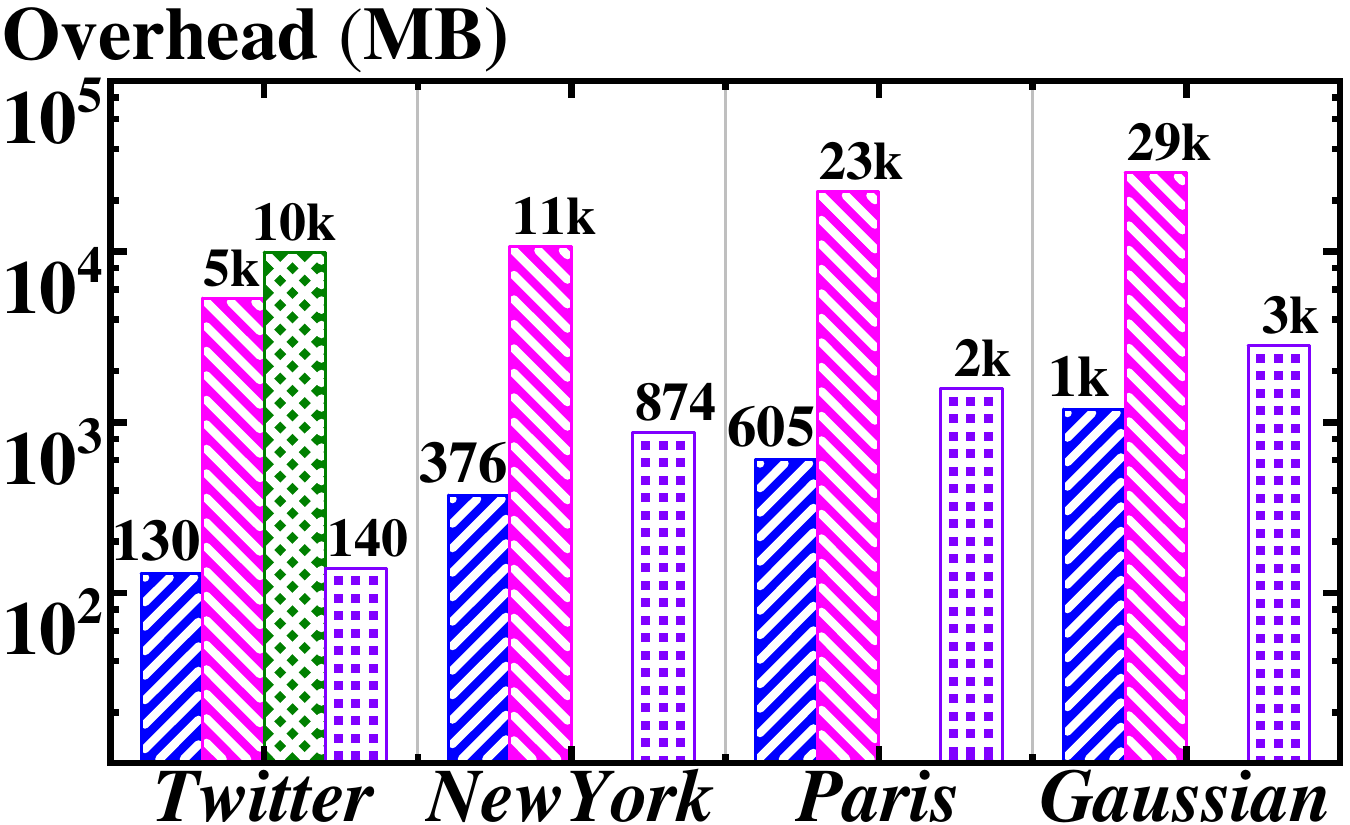}\label{RISK:fig:Performance:Storage:Overhead}}
	\caption{The performance of secure hybrid index.}
	\label{RISK:fig:Performance:Storage}
\end{figure}

\emph{Performance of the extensions to multi-party scenarios.} We vary the client count from  $20$ to $100$, where each client is assigned an independent random session key. Key negotiation between clients and the TEE employs RSA, with AES as the symmetric encryption algorithm. For all datasets, $100$ queries were executed $5$ times, and the total and average per-client per-query response time is reported. As shown in Fig.~\ref{RISK:fig:Performance:Extensions:MultiUserTotal} and~\ref{RISK:fig:Performance:Extensions:MultiUserAvg}, response time for RSK and \emph{k}SK queries remains stable across all datasets as the client count increases.
Per-client per-query response time increases marginally due to additional key negotiation, data encryption, and data decryption. For RSK and \emph{k}SK queries across all datasets, this increment is constrained to tens of milliseconds. The performance of \emph{k}SK queries is extremely close to that of RSK queries, as the majority of the response time is attributed to key agreement, communication overhead, etc. Extensions to multi-owner scenarios only prolong index construction time without degrading query performance, so detailed results are omitted.

\begin{figure}[!t]
	\centering
	\includegraphics[width=85mm]{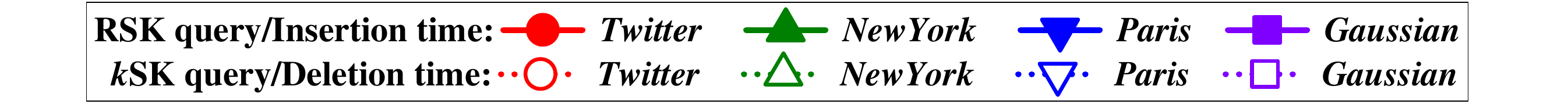}	
	\subfigure[Total time for multi-user.]{\includegraphics[width=42.5mm]{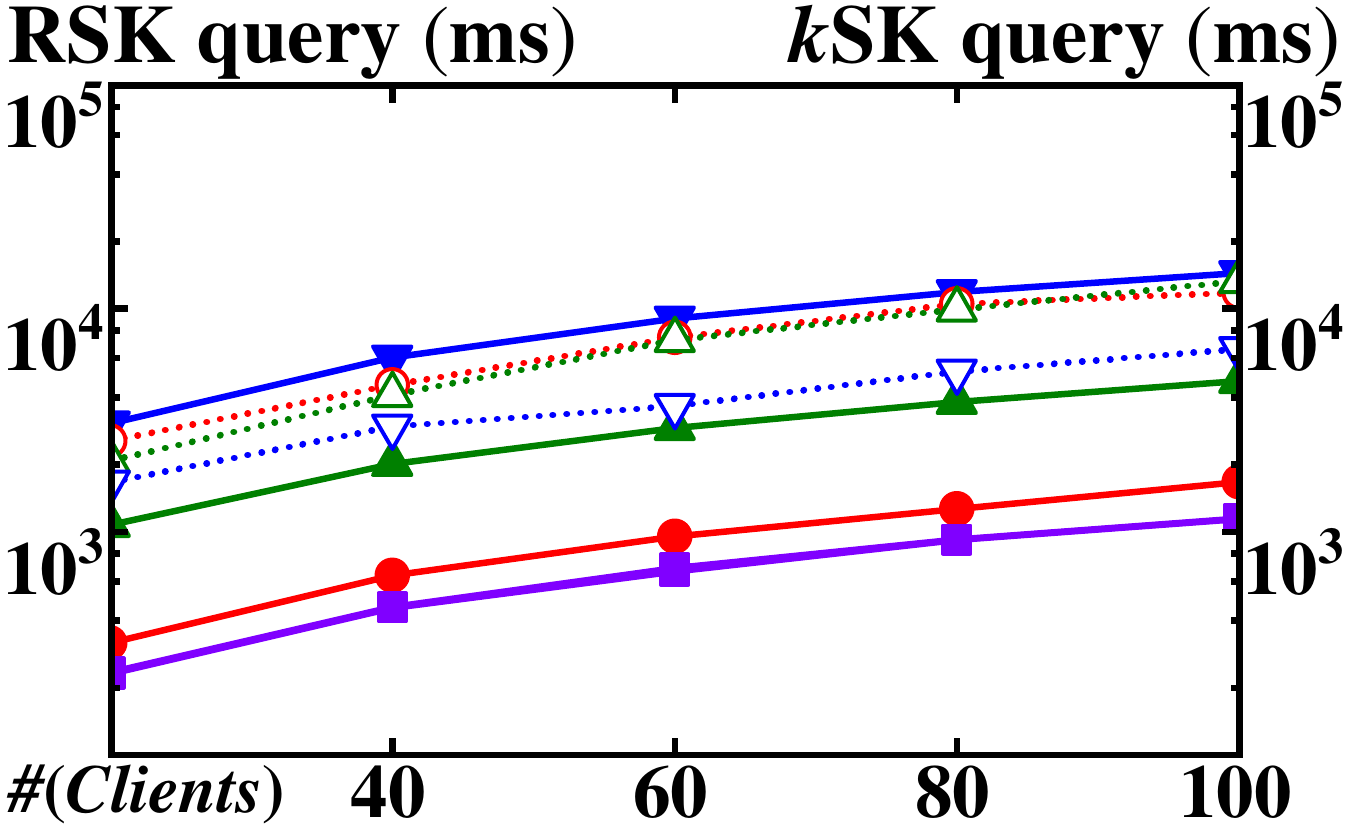}\label{RISK:fig:Performance:Extensions:MultiUserTotal}}
	\subfigure[Average time for multi-user.]{\includegraphics[width=42.5mm]{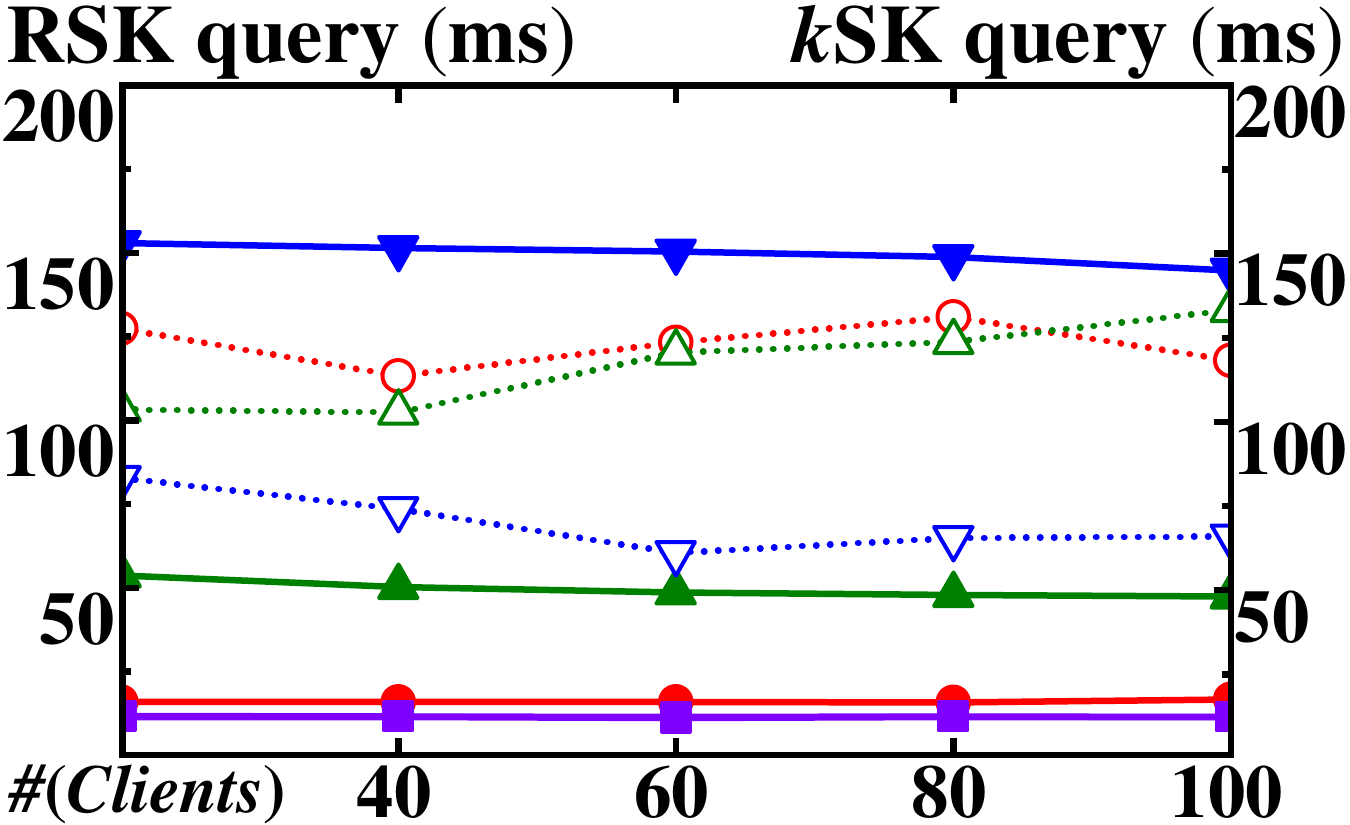}\label{RISK:fig:Performance:Extensions:MultiUserAvg}}\\
	\subfigure[Total time for data updates.]{\includegraphics[width=42.5mm]{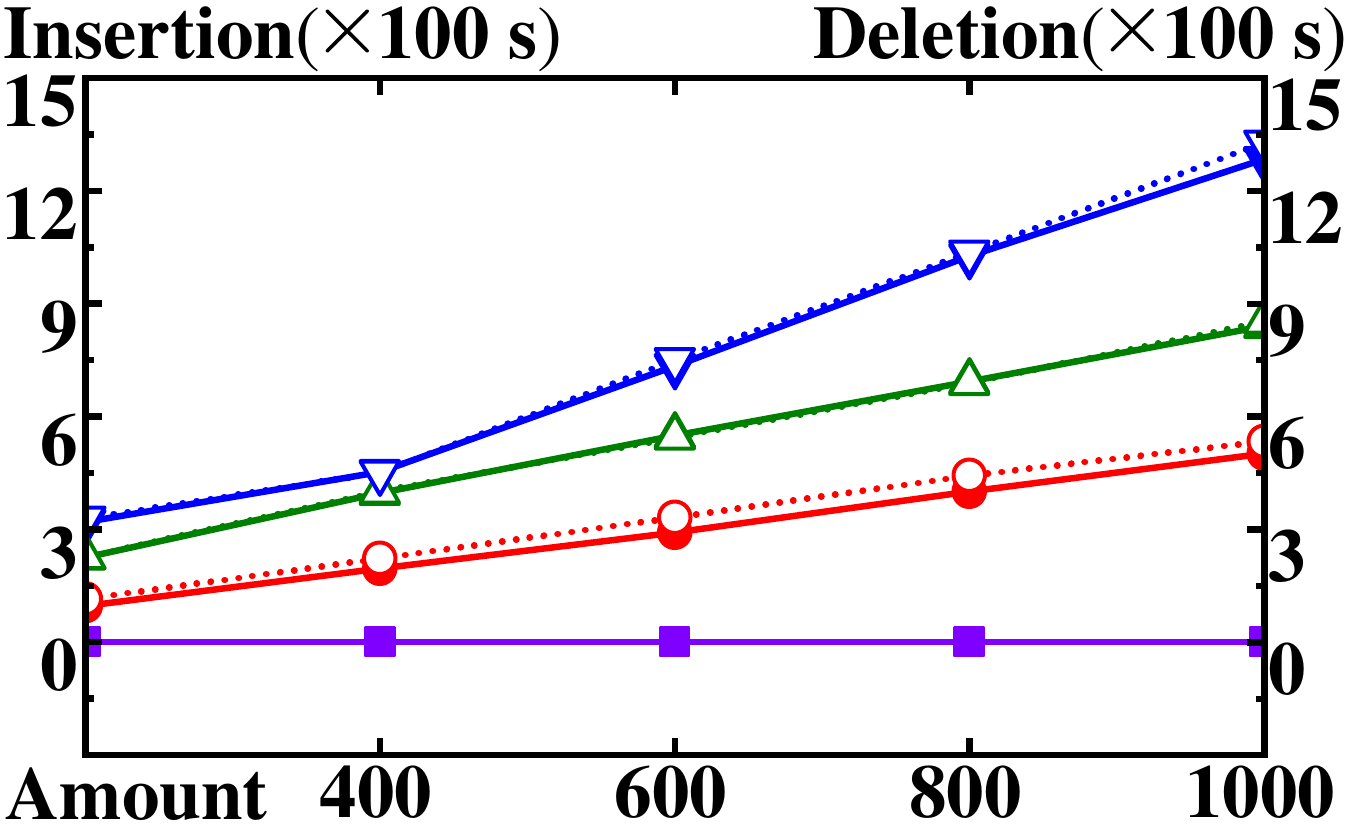}\label{RISK:fig:Performance:Extensions:DataUpdatesTotal}}
	\subfigure[Average time for data updates.]{\includegraphics[width=42.5mm]{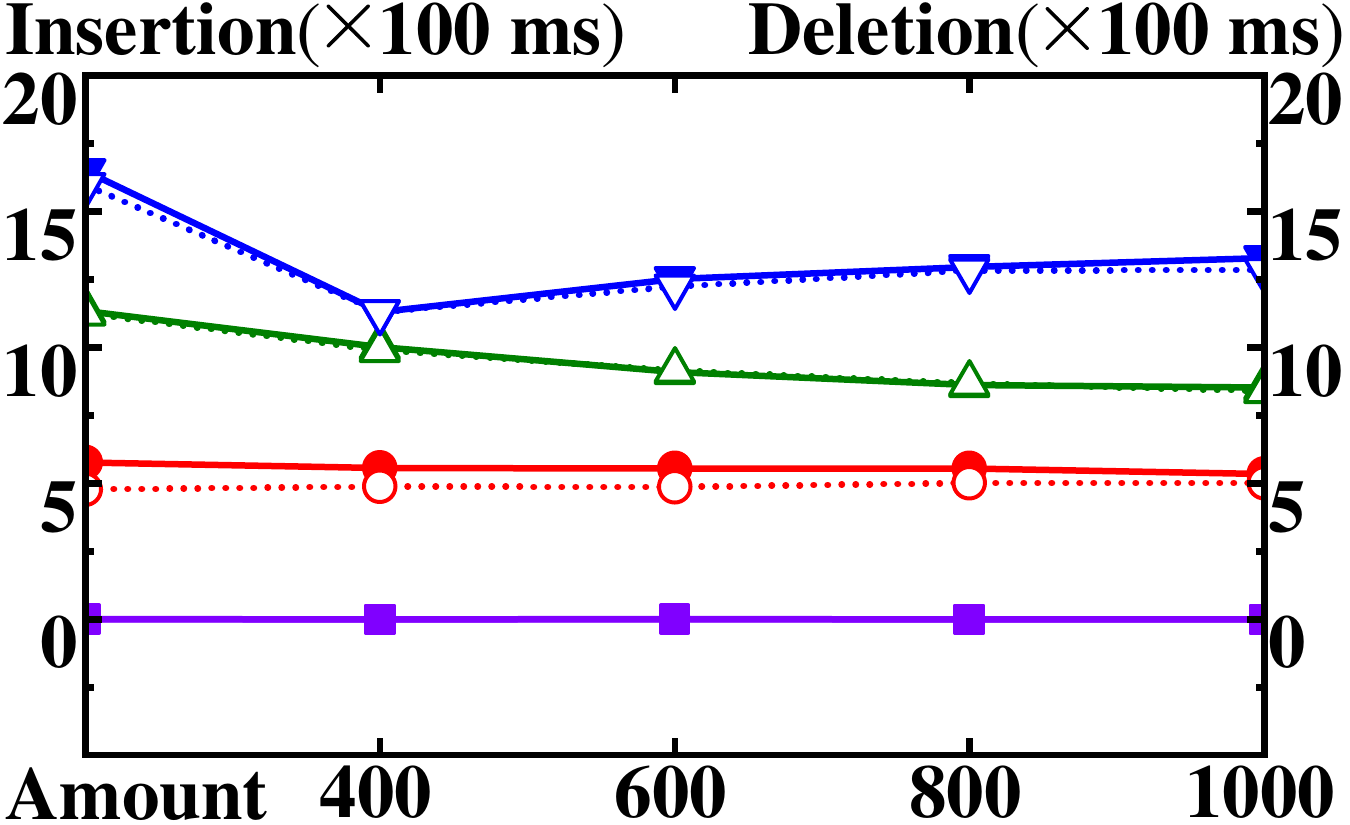}\label{RISK:fig:Performance:Extensions:DataUpdatesAvg}}
	\caption{Performances of RISK's extensions.}
	\label{RISK:fig:Performance:Extensions}
\end{figure}

\emph{Performance of the extension for data updates.} The volume of data updates is tuned from $200$ to $1000$ insertions and deletions on all datasets, with update performance summarized in Fig.~\ref{RISK:fig:Performance:Extensions:DataUpdatesTotal} and~\ref{RISK:fig:Performance:Extensions:DataUpdatesAvg}. As the update volume increases, insertion and deletion latencies grow approximately linearly across all datasets. On average, the time required to insert or delete a single object is marginally longer than that of a single \emph{k}SK query, with a maximum latency of $1,700$ ms. Data deletion performance is nearly identical to data insertion on the \emph{Gaussian} dataset, as the extremely low tree height yields negligible response times with no significant difference between the two operations. These performance metrics are indeed well within the acceptable range for practical applications.

\section{Conclusion}
\label{RISK:Conclusion}
RISK represents a promising attempt to enable secure RSK and \emph{k}SK queries in a unified framework, and it satisfies the IND-CKA2 security. Furthermore, RISK is extended to multi-party scenarios and dynamic updates. Experiments on three real-world and one synthetic datasets demonstrate that RISK outperforms SOTA schemes across multiple metrics. For millions of encrypted geo-textual objects, the response time for $10$-nearest neighbor or $1\%$ range queries is consistently under $150$ ms. RISK is the first solution that supports secure rich spatial-keyword queries based on a unified secure index.

In the future, secure spatial-keyword query will continue to confront challenges, including enabling secure Top-\emph{k} query, ensuring forward security, and supporting efficient large-scale data updates and diverse data models.

%

\section*{AI-Generated Content Acknowledgement}
We used Doubao AI for language polishing and revision in this submission to refine sentence structure, correct grammatical errors, and improve readability. All core content, including research design, algorithm design, theoretical analysis, experimental code, data analysis, and conclusions, is original to the human authors. Every suggestion from the model was critically evaluated and revised by the authors, who take full responsibility for the final content.

\bibliographystyle{IEEEtran}
\bibliography{ref_RISK}

@Article{Chen2021,
  author  = {Chen, Zhida and Chen, Lisi and Cong, Gao and Jensen, Christian},
  journal = {The VLDB Journal},
  title   = {Location- and keyword-based querying of geo-textual data: a survey},
  year    = {2021},
  month   = {07},
  pages   = {603--640},
  volume  = {30},
  doi     = {10.1007/s00778-021-00661-w},
}

@InProceedings{Wang2020INFOCOM,
  author    = {Wang, Xiangyu and Ma, Jianfeng and Liu, Ximeng and Deng, Robert H. and Miao, Yinbin and Zhu, Dan and Ma, Zhuoran},
  booktitle = {IEEE INFOCOM 2020 - IEEE Conference on Computer Communications},
  title     = {Search Me in the Dark: Privacy-preserving Boolean Range Query over Encrypted Spatial Data},
  year      = {2020},
  pages     = {2253--2262},
  doi       = {10.1109/INFOCOM41043.2020.9155505},
}

@Article{Wang2021TIFS,
  author  = {Wang, Xiangyu and Ma, Jianfeng and Li, Feng and Liu, Ximeng and Miao, Yinbin and Deng, Robert H.},
  journal = {IEEE Transactions on Information Forensics and Security},
  title   = {Enabling Efficient Spatial Keyword Queries on Encrypted Data With Strong Security Guarantees},
  year    = {2021},
  pages   = {4909--4923},
  volume  = {16},
  doi     = {10.1109/TIFS.2021.3118880},
}

@InProceedings{Cui2019ICDE,
  author       = {Cui, Ningning and Li, Jianxin and Yang, Xiaochun and Wang, Bin and Reynolds, Mark and Xiang, Yong},
  booktitle    = {2019 IEEE 35th International Conference on Data Engineering (ICDE)},
  title        = {When geo-text meets security: privacy-preserving boolean spatial keyword queries},
  year         = {2019},
  organization = {IEEE},
  pages        = {1046--1057},
  doi          = {10.1109/ICDE.2019.00097},
}

@InProceedings{Yang2022ICDCS,
  author    = {Yang, Yutao and Miao, Yinbin and Choo, Kim-Kwang Raymond and Deng, Robert H.},
  booktitle = {2022 IEEE 42nd International Conference on Distributed Computing Systems (ICDCS)},
  title     = {Lightweight Privacy-Preserving Spatial Keyword Query over Encrypted Cloud Data},
  year      = {2022},
  pages     = {392--402},
  doi       = {10.1109/ICDCS54860.2022.00045},
}

@Article{Lv2023,
  author  = {Lv, Zhen and Shang, Kaiyu and Huo, Hongwei and Liu, Ximeng and Peng, Yanguo and Wang, Xiangyu and Tan, Yaorong},
  journal = {IEEE Transactions on Services Computing},
  title   = {{RASK}: Range spatial keyword queries on massive encrypted geo-textual data},
  year    = {2023},
  pages   = {1-14},
  doi     = {10.1109/TSC.2023.3289654},
}

@Article{Chen2020,
  author   = {Chen, Lisi and Shang, Shuo and Yang, Chengcheng and Li, Jing},
  journal  = {GeoInformatica},
  title    = {Spatial keyword search: a survey},
  year     = {2020},
  issn     = {1573-7624},
  number   = {1},
  pages    = {85--106},
  volume   = {24},
  doi      = {10.1007/s10707-019-00373-y},
  refid    = {Chen2020},
}

@InProceedings{Zhou2005,
  author    = {Zhou, Yinghua and Xie, Xing and Wang, Chuang and Gong, Yuchang and Ma, Wei-Ying},
  booktitle = {Proceedings of the 14th ACM International Conference on Information and Knowledge Management},
  title     = {Hybrid index structures for location-based web search},
  year      = {2005},
  address   = {New York, NY, USA},
  pages     = {155–162},
  publisher = {Association for Computing Machinery},
  series    = {CIKM '05},
  doi       = {10.1145/1099554.1099584},
  isbn      = {1595931406},
  location  = {Bremen, Germany},
  numpages  = {8},
}

@InProceedings{Vaid2005STD,
  author    = {Vaid, Subodh and Jones, Christopher B. and Joho, Hideo and Sanderson, Mark},
  booktitle = {Advances in Spatial and Temporal Databases},
  title     = {Spatio-textual Indexing for Geographical Search on the Web},
  year      = {2005},
  address   = {Berlin, Heidelberg},
  editor    = {Bauzer Medeiros, Claudia and Egenhofer, Max J. and Bertino, Elisa},
  pages     = {218--235},
  publisher = {Springer Berlin Heidelberg},
  doi       = {10.1007/11535331_13},
  isbn      = {978-3-540-31904-7},
}

@InProceedings{Gobel2009CIKM,
  author    = {G\"{o}bel, Richard and Henrich, Andreas and Niemann, Raik and Blank, Daniel},
  booktitle = {Proceedings of the 18th ACM Conference on Information and Knowledge Management},
  title     = {A hybrid index structure for geo-textual searches},
  year      = {2009},
  address   = {New York, NY, USA},
  pages     = {1625–1628},
  publisher = {Association for Computing Machinery},
  series    = {CIKM '09},
  doi       = {10.1145/1645953.1646188},
  isbn      = {9781605585123},
  location  = {Hong Kong, China},
  numpages  = {4},
}

@InProceedings{Chen2006SIGMOD,
  author    = {Chen, Yen-Yu and Suel, Torsten and Markowetz, Alexander},
  booktitle = {Proceedings of the 2006 ACM SIGMOD International Conference on Management of Data},
  title     = {Efficient query processing in geographic web search engines},
  year      = {2006},
  address   = {New York, NY, USA},
  pages     = {277–288},
  publisher = {Association for Computing Machinery},
  series    = {SIGMOD '06},
  doi       = {10.1145/1142473.1142505},
  isbn      = {1595934340},
  location  = {Chicago, IL, USA},
  numpages  = {12},
}

@InProceedings{Christoforaki2011CIKM,
  author    = {Christoforaki, Maria and He, Jinru and Dimopoulos, Constantinos and Markowetz, Alexander and Suel, Torsten},
  booktitle = {Proceedings of the 20th ACM International Conference on Information and Knowledge Management},
  title     = {Text vs. space: efficient geo-search query processing},
  year      = {2011},
  address   = {New York, NY, USA},
  pages     = {423–432},
  publisher = {Association for Computing Machinery},
  series    = {CIKM '11},
  doi       = {10.1145/2063576.2063641},
  isbn      = {9781450307178},
  location  = {Glasgow, Scotland, UK},
  numpages  = {10},
}

@InProceedings{Ma2013WAIM,
  author    = {Ma, Youzhong and Zhang, Yu and Meng, Xiaofeng},
  booktitle = {Web-Age Information Management},
  title     = {{ST-HBase}: A Scalable Data Management System for Massive Geo-tagged Objects},
  year      = {2013},
  address   = {Berlin, Heidelberg},
  editor    = {Wang, Jianyong and Xiong, Hui and Ishikawa, Yoshiharu and Xu, Jianliang and Zhou, Junfeng},
  pages     = {155--166},
  publisher = {Springer Berlin Heidelberg},
  doi       = {10.1007/978-3-642-38562-9_16},
  isbn      = {978-3-642-38562-9},
}

@InProceedings{Li2019ICDE,
  author    = {Li, Rui and Liu, Alex X. and Liu, Ying and Xu, Huanle and Yuan, Huaqiang},
  booktitle = {2019 IEEE 35th International Conference on Data Engineering (ICDE)},
  title     = {Insecurity and Hardness of Nearest Neighbor Queries Over Encrypted Data},
  year      = {2019},
  pages     = {1614-1617},
  doi       = {10.1109/ICDE.2019.00155},
}

@InProceedings{DeFelipe2008ICDE,
  author    = {De Felipe, Ian and Hristidis, Vagelis and Rishe, Naphtali},
  booktitle = {2008 IEEE 24th International Conference on Data Engineering},
  title     = {Keyword Search on Spatial Databases},
  year      = {2008},
  pages     = {656-665},
  doi       = {10.1109/ICDE.2008.4497474},
}

@Article{Tao2014TKDE,
  author   = {Tao, Yufei and Sheng, Cheng},
  journal  = {IEEE Transactions on Knowledge and Data Engineering},
  title    = {Fast Nearest Neighbor Search with Keywords},
  year     = {2014},
  number   = {4},
  pages    = {878-888},
  volume   = {26},
  doi      = {10.1109/TKDE.2013.66},
}

@Article{Wu2012TKDE,
  author   = {Wu, Dingming and Yiu, Man Lung and Cong, Gao and Jensen, Christian S.},
  journal  = {IEEE Transactions on Knowledge and Data Engineering},
  title    = {Joint Top-K Spatial Keyword Query Processing},
  year     = {2012},
  number   = {10},
  pages    = {1889-1903},
  volume   = {24},
  doi      = {10.1109/TKDE.2011.172},
}

@Article{Zhang2016TKDE,
  author   = {Zhang, Chengyuan and Zhang, Ying and Zhang, Wenjie and Lin, Xuemin},
  journal  = {IEEE Transactions on Knowledge and Data Engineering},
  title    = {Inverted Linear Quadtree: Efficient Top K Spatial Keyword Search},
  year     = {2016},
  number   = {7},
  pages    = {1706-1721},
  volume   = {28},
  doi      = {10.1109/TKDE.2016.2530060},
}

@InProceedings{Zhang2013EDBT,
  author    = {Zhang, Dongxiang and Tan, Kian-Lee and Tung, Anthony K. H.},
  booktitle = {Proceedings of the 16th International Conference on Extending Database Technology},
  title     = {Scalable top-k spatial keyword search},
  year      = {2013},
  address   = {New York, NY, USA},
  pages     = {359–370},
  publisher = {Association for Computing Machinery},
  series    = {EDBT '13},
  doi       = {10.1145/2452376.2452419},
  isbn      = {9781450315975},
  location  = {Genoa, Italy},
  numpages  = {12},
}

@Article{Song2024IoTsJ,
  author   = {Song, Yu and Yu, Jia and Ge, Xinrui and Hao, Rong},
  journal  = {IEEE Internet of Things Journal},
  title    = {Enabling Privacy-Preserving Boolean {kNN} Query Over Cloud-Based Spatial Data},
  year     = {2024},
  number   = {23},
  pages    = {38262-38272},
  volume   = {11},
  doi      = {10.1109/JIOT.2024.3445170},
}

@InProceedings{Wong2009SIGMOD,
  author    = {Wong, Wai Kit and Cheung, David Wai-lok and Kao, Ben and Mamoulis, Nikos},
  booktitle = {Proceedings of the 2009 ACM SIGMOD International Conference on Management of Data},
  title     = {Secure {kNN} computation on encrypted databases},
  year      = {2009},
  address   = {New York, NY, USA},
  pages     = {139–152},
  publisher = {Association for Computing Machinery},
  series    = {SIGMOD '09},
  doi       = {10.1145/1559845.1559862},
  isbn      = {9781605585512},
  location  = {Providence, Rhode Island, USA},
  numpages  = {14},
}

@Article{Tong2023TKDE,
  author   = {Tong, Qiuyun and Miao, Yinbin and Weng, Jian and Liu, Ximeng and Choo, Kim-Kwang Raymond and Deng, Robert H.},
  journal  = {IEEE Transactions on Knowledge and Data Engineering},
  title    = {Verifiable Fuzzy Multi-Keyword Search Over Encrypted Data With Adaptive Security},
  year     = {2023},
  number   = {5},
  pages    = {5386-5399},
  volume   = {35},
  doi      = {10.1109/TKDE.2022.3152033},
}

@InProceedings{Curtmola2006CCS,
  author    = {Curtmola, Reza and Garay, Juan and Kamara, Seny and Ostrovsky, Rafail},
  booktitle = {Proceedings of the 13th ACM Conference on Computer and Communications Security},
  title     = {Searchable symmetric encryption: improved definitions and efficient constructions},
  year      = {2006},
  address   = {New York, NY, USA},
  pages     = {79–88},
  publisher = {Association for Computing Machinery},
  series    = {CCS '06},
  doi       = {10.1145/1180405.1180417},
  isbn      = {1595935185},
  location  = {Alexandria, Virginia, USA},
  numpages  = {10},
}

@Article{Jiang2022TIFS,
  author   = {Jiang, Qin and Chang, Ee-Chien and Qi, Yong and Qi, Saiyu and Wu, Pengfei and Wang, Jianfeng},
  journal  = {IEEE Transactions on Information Forensics and Security},
  title    = {Rphx: Result Pattern Hiding Conjunctive Query Over Private Compressed Index Using Intel SGX},
  year     = {2022},
  pages    = {1053-1068},
  volume   = {17},
  doi      = {10.1109/TIFS.2022.3144877},
}

@Article{Finkel1974Acta,
  author   = {Finkel, R. A. and Bentley, J. L.},
  journal  = {Acta Informatica},
  title    = {Quad trees a data structure for retrieval on composite keys},
  year     = {1974},
  issn     = {1432-0525},
  number   = {1},
  pages    = {1--9},
  volume   = {4},
  doi      = {10.1007/BF00288933},
  refid    = {Finkel1974},
}

@Article{Li2024TKDE,
  author   = {Li, Mingxin and Wang, Hancheng and Dai, Haipeng and Li, Meng and Chai, Chengliang and Gu, Rong and Chen, Feng and Chen, Zhiyuan and Li, Shuaituan and Liu, Qizhi and Chen, Guihai},
  journal  = {IEEE Transactions on Knowledge and Data Engineering},
  title    = {A Survey of Multi-Dimensional Indexes: Past and Future Trends},
  year     = {2024},
  number   = {8},
  pages    = {3635-3655},
  volume   = {36},
  doi      = {10.1109/TKDE.2024.3364183},
}

@Article{Samet1984CSUR,
  author     = {Samet, Hanan},
  journal    = {ACM Comput. Surv.},
  title      = {The Quadtree and Related Hierarchical Data Structures},
  year       = {1984},
  issn       = {0360-0300},
  month      = jun,
  number     = {2},
  pages      = {187–260},
  volume     = {16},
  address    = {New York, NY, USA},
  doi        = {10.1145/356924.356930},
  issue_date = {June 1984},
  numpages   = {74},
  publisher  = {Association for Computing Machinery},
}

@InProceedings{Hjaltason1995ASD,
  author    = {Hjaltason, G{\'i}sli R. and Samet, Hanan},
  booktitle = {Advances in Spatial Databases},
  title     = {Ranking in spatial databases},
  year      = {1995},
  address   = {Berlin, Heidelberg},
  editor    = {Egenhofer, Max J. and Herring, John R.},
  pages     = {83--95},
  publisher = {Springer Berlin Heidelberg},
  doi       = {doi.org/10.1007/3-540-60159-7_6},
  isbn      = {978-3-540-49536-9},
}

@InProceedings{Fox1992SIGIR,
  author    = {Fox, Edward A. and Chen, Qi Fan and Heath, Lenwood S.},
  booktitle = {Proceedings of the 15th Annual International ACM SIGIR Conference on Research and Development in Information Retrieval},
  title     = {A faster algorithm for constructing minimal perfect hash functions},
  year      = {1992},
  address   = {New York, NY, USA},
  pages     = {266–273},
  publisher = {Association for Computing Machinery},
  series    = {SIGIR '92},
  doi       = {10.1145/133160.133209},
  isbn      = {0897915232},
  location  = {Copenhagen, Denmark},
  numpages  = {8},
}

@InProceedings{Pibiri2021SIGIR,
  author    = {Pibiri, Giulio Ermanno and Trani, Roberto},
  booktitle = {Proceedings of the 44th International ACM SIGIR Conference on Research and Development in Information Retrieval},
  title     = {{PTHash}: Revisiting FCH Minimal Perfect Hashing},
  year      = {2021},
  address   = {New York, NY, USA},
  pages     = {1339–1348},
  publisher = {Association for Computing Machinery},
  series    = {SIGIR '21},
  doi       = {10.1145/3404835.3462849},
  isbn      = {9781450380379},
  location  = {Virtual Event, Canada},
  numpages  = {10},
}

@Article{Zhou2024CSUR,
  author     = {Zhou, Sheng and Xu, Hongjia and Zheng, Zhuonan and Chen, Jiawei and Li, Zhao and Bu, Jiajun and Wu, Jia and Wang, Xin and Zhu, Wenwu and Ester, Martin},
  journal    = {ACM Comput. Surv.},
  title      = {A Comprehensive Survey on Deep Clustering: Taxonomy, Challenges, and Future Directions},
  year       = {2024},
  issn       = {0360-0300},
  month      = nov,
  number     = {3},
  volume     = {57},
  address    = {New York, NY, USA},
  articleno  = {69},
  doi        = {10.1145/3689036},
  issue_date = {March 2025},
  numpages   = {38},
  publisher  = {Association for Computing Machinery},
}

@Article{Xie2022CSUR,
  author     = {Xie, Yiqun and Shekhar, Shashi and Li, Yan},
  journal    = {ACM Comput. Surv.},
  title      = {Statistically-Robust Clustering Techniques for Mapping Spatial Hotspots: A Survey},
  year       = {2022},
  issn       = {0360-0300},
  month      = jan,
  number     = {2},
  volume     = {55},
  address    = {New York, NY, USA},
  articleno  = {36},
  doi        = {10.1145/3487893},
  issue_date = {February 2023},
  numpages   = {38},
  publisher  = {Association for Computing Machinery},
}

@Article{Canetti2004JACM,
  author     = {Canetti, Ran and Goldreich, Oded and Halevi, Shai},
  journal    = {J. ACM},
  title      = {The random oracle methodology, revisited},
  year       = {2004},
  issn       = {0004-5411},
  month      = jul,
  number     = {4},
  pages      = {557–594},
  volume     = {51},
  address    = {New York, NY, USA},
  doi        = {10.1145/1008731.1008734},
  issue_date = {July 2004},
  numpages   = {38},
  publisher  = {Association for Computing Machinery},
}

@InProceedings{Zhang2016USENIXSecurity,
  author    = {Yupeng Zhang and Jonathan Katz and Charalampos Papamanthou},
  booktitle = {25th USENIX Security Symposium (USENIX Security 16)},
  title     = {All Your Queries Are Belong to Us: The Power of {File-Injection} Attacks on Searchable Encryption},
  year      = {2016},
  address   = {Austin, TX},
  month     = aug,
  pages     = {707--720},
  publisher = {USENIX Association},
  isbn      = {978-1-931971-32-4},
}

@InProceedings{Lacharite2018SP,
  author    = {Lacharit{\'{e}}, Marie-Sarah and Minaud, Brice and Paterson, Kenneth G.},
  booktitle = {2018 IEEE Symposium on Security and Privacy (SP)},
  title     = {Improved Reconstruction Attacks on Encrypted Data Using Range Query Leakage},
  year      = {2018},
  pages     = {297-314},
  doi       = {10.1109/SP.2018.00002},
}

@Article{Li2023SCUR,
  author     = {Li, Feng and Ma, Jianfeng and Miao, Yinbin and Liu, Ximeng and Ning, Jianting and Deng, Robert H.},
  journal    = {ACM Comput. Surv.},
  title      = {A Survey on Searchable Symmetric Encryption},
  year       = {2023},
  issn       = {0360-0300},
  month      = nov,
  number     = {5},
  volume     = {56},
  address    = {New York, NY, USA},
  articleno  = {119},
  doi        = {10.1145/3617991},
  issue_date = {May 2024},
  numpages   = {42},
  publisher  = {Association for Computing Machinery},
}

@InProceedings{Lai2018CCS,
  author    = {Lai, Shangqi and Patranabis, Sikhar and Sakzad, Amin and Liu, Joseph K. and Mukhopadhyay, Debdeep and Steinfeld, Ron and Sun, Shi-Feng and Liu, Dongxi and Zuo, Cong},
  booktitle = {Proceedings of the 2018 ACM SIGSAC Conference on Computer and Communications Security},
  title     = {Result Pattern Hiding Searchable Encryption for Conjunctive Queries},
  year      = {2018},
  address   = {New York, NY, USA},
  pages     = {745--762},
  publisher = {Association for Computing Machinery},
  series    = {CCS '18},
  doi       = {10.1145/3243734.3243753},
  isbn      = {9781450356930},
  location  = {Toronto, Canada},
  numpages  = {18},
}

@Article{Chen2013VLDB,
  author     = {Chen, Lisi and Cong, Gao and Jensen, Christian S. and Wu, Dingming},
  journal    = {Proc. VLDB Endow.},
  title      = {Spatial keyword query processing: an experimental evaluation},
  year       = {2013},
  issn       = {2150-8097},
  month      = jan,
  number     = {3},
  pages      = {217–228},
  volume     = {6},
  doi        = {10.14778/2535569.2448955},
  issue_date = {January 2013},
  numpages   = {12},
  publisher  = {VLDB Endowment},
}

@TechReport{Barker2020NIST,
  author      = {Barker, Elaine and Dang, Quynh},
  institution = {National Institute of Standards and Technology},
  title       = {{NIST} special publication 800-57 part 1, revision 5, recommendation for key management: Part 1--general},
  year        = {2020},
  journal     = {NIST, Tech. Rep},
  pages       = {171},
}

@Article{Wu2025TDSC,
  author   = {Wu, Haotian and Peng, Zhe and Xiao, Jiang and Xue, Lei and Lin, Chenhao and Chung, Sai-Ho},
  journal  = {IEEE Transactions on Dependable and Secure Computing},
  title    = {{HeX}: Encrypted Rich Queries With Forward and Backward Privacy Using Trusted Hardware},
  year     = {2025},
  number   = {4},
  pages    = {3751-3765},
  volume   = {22},
  doi      = {10.1109/TDSC.2025.3540958},
}

@Article{Lv2025TKDE,
  author   = {Lv, Siyi and Huang, Yanyu and Li, Xinhao and Li, Tong and Guo, Liang and Chen, Xiaofeng and Liu, Zheli},
  journal  = {IEEE Transactions on Knowledge and Data Engineering},
  title    = {{LUNA}: Efficient Backward-Private Dynamic Symmetric Searchable Encryption Scheme With Secure Deletion in Encrypted Database},
  year     = {2025},
  number   = {4},
  pages    = {1961-1974},
  volume   = {37},
  doi      = {10.1109/TKDE.2023.3329234},
}

@Article{Wang2025TIFS,
  author   = {Wang, Haoyang and Fan, Kai and Yu, Chong and Zhang, Kuan and Li, Fenghua and Zhu, Haojin},
  journal  = {IEEE Transactions on Information Forensics and Security},
  title    = {Hide Yourself: Multi-Dimensional Range Queries for Responses-Hiding Over Outsourced Data},
  year     = {2025},
  pages    = {6923-6936},
  volume   = {20},
  doi      = {10.1109/TIFS.2025.3583252},
}

@Article{Chen2025SIGMOD,
  author     = {Chen, Yahong and Pang, Xiaoyi and Li, Xiaoguang and Wang, Hanyi and Niu, Ben and Hu, Shengnan},
  journal    = {Proc. ACM Manag. Data},
  title      = {{U-DPAP}: Utility-aware Efficient Range Counting on Privacy-preserving Spatial Data Federation},
  year       = {2025},
  month      = feb,
  number     = {1},
  volume     = {3},
  address    = {New York, NY, USA},
  articleno  = {83a},
  doi        = {10.1145/3714333},
  issue_date = {February 2025},
  numpages   = {25},
  publisher  = {Association for Computing Machinery},
}

@Article{Ahmed2025VLDB,
  author     = {Ahmed, Haseeb and Rao, Nachiket and Kati, Abdelkarim and Kerschbaum, Florian and Maiyya, Sujaya},
  journal    = {Proc. VLDB Endow.},
  title      = {{OasisDB}: An Oblivious and Scalable System for Relational Data},
  year       = {2025},
  issn       = {2150-8097},
  month      = jul,
  number     = {11},
  pages      = {4478–4491},
  volume     = {18},
  doi        = {10.14778/3749646.3749707},
  issue_date = {July 2025},
  numpages   = {14},
  publisher  = {VLDB Endowment},
}

@InProceedings{perez2025USENIX,
  author    = {P{\'e}rez, Carolina Ortega and Daffalla, Alaa and Ristenpart, Thomas and Tech, Cornell},
  booktitle = {USENIX Security Symposium},
  title     = {Encrypted Access Logging for Online Accounts: Device Attributions without Device Tracking},
  year      = {2025},
}

@InProceedings{Sen2025CCS,
  author    = {Sen, Pritam and Ma, Yao and Borcea, Cristian},
  booktitle = {Proceedings of the 2025 ACM SIGSAC Conference on Computer and Communications Security},
  title     = {CryptGNN: Enabling Secure Inference for Graph Neural Networks},
  year      = {2025},
  address   = {New York, NY, USA},
  pages     = {291–305},
  publisher = {Association for Computing Machinery},
  series    = {CCS '25},
  doi       = {10.1145/3719027.3765232},
  isbn      = {9798400715259},
  location  = {Taipei, Taiwan},
  numpages  = {15},
}

@Article{Yao2025TIFS,
  author   = {Yao, Lisha and Weng, Jian and Wu, Pengfei and Chen, Shixin and Sun, Jianfei and Yang, Guomin and Deng, Robert H.},
  journal  = {IEEE Transactions on Information Forensics and Security},
  title    = {Breaking the Trilemma: Toward Efficient, Privacy-Preserving, and Forward-Secure Data Sharing in the Post-Quantum Era},
  year     = {2025},
  pages    = {13370-13385},
  volume   = {20},
  doi      = {10.1109/TIFS.2025.3637717},
}

@Misc{URLIBM,
  title = {{IBM} Reduces the Co-Selling Lifecycle by 90\% and Boosts Sales Opportunities with {AWS} by 117\% with {ACE} {CRM} Integration Using {Labra} Platform},
  year  = {2024},
  url   = {https://aws.amazon.com/partners/success/ibm-labra/},
}

@Misc{URLUber,
  title = {Modernizing {Uber}’s Batch Data Infrastructure with Google Cloud Platform},
  year  = {2024},
  url   = {https://www.uber.com/en-HK/blog/modernizing-ubers-data-infrastructure-with-gcp/},
}

@Misc{URLTwitter,
  title = {{US} Election 2020 Tweets},
  year  = {2020},
  url   = {https://www.kaggle.com/datasets/manchunhui/us-election-2020-tweets},
}

@Misc{URLNewYork,
  title = {{OSM} extracts for {Paris}},
  year  = {2024},
  url   = {https://download.bbbike.org/osm/bbbike/NewYork/},
}

@Misc{URLParis,
  title = {{OSM} extracts for {New York}},
  year  = {2024},
  url   = {https://download.bbbike.org/osm/bbbike/Paris/},
}

@Article{Fang2024VLDB,
  author     = {Fang, Wenjing and Cao, Shunde and Hua, Guojin and Ma, Junming and Yu, Yongqiang and Huang, Qunshan and Feng, Jun and Tan, Jin and Zan, Xiaopeng and Duan, Pu and Yang, Yang and Wang, Li and Zhang, Ke and Wang, Lei},
  journal    = {Proc. VLDB Endow.},
  title      = {{SecretFlow-SCQL}: A Secure Collaborative Query Platform},
  year       = {2024},
  issn       = {2150-8097},
  month      = aug,
  number     = {12},
  pages      = {3987–4000},
  volume     = {17},
  doi        = {10.14778/3685800.3685821},
  issue_date = {August 2024},
  numpages   = {14},
  publisher  = {VLDB Endowment},
  url        = {https://doi.org/10.14778/3685800.3685821},
}

@Misc{Bian2024mayfly,
  author        = {Christopher Bian and Albert Cheu and Stanislav Chiknavaryan and Zoe Gong and Marco Gruteser and Oliver Guinan and Yannis Guzman and Peter Kairouz and Artem Lagzdin and Ryan McKenna and Grace Ni and Edo Roth and Maya Spivak and Timon Van Overveldt and Ren Yi},
  title         = {Mayfly: Private Aggregate Insights from Ephemeral Streams of On-Device User Data},
  year          = {2024},
  archiveprefix = {arXiv},
  eprint        = {2412.07962},
  primaryclass  = {cs.CR},
  url           = {https://arxiv.org/abs/2412.07962},
}

\end{document}